\newtheorem{theorem}{Theorem}[section]
\newtheorem{definition}{Definition}[section]
\newtheorem{lemma}{Lemma}[section]
\newtheorem{remark}{Remark}[section]
\newtheorem{proposition}{Proposition}[section]
\numberwithin{equation}{section}
\newcommand{\supp}{{\mathrm {supp}}}
\begin{document}
\title[Navier-Stokes-Boltzmann equations]{Formation of singularities in solutions to the compressible radiation hydrodynamics equations with vacuum }

\author{yachun li}
\address[Y. C. Li]{Department of Mathematics and Key Lab of Scientific and Engineering Computing (MOE), Shanghai Jiao Tong University,
Shanghai 200240, P.R.China} \email{\tt ycli@sjtu.edu.cn}

\author{Shengguo Zhu}
\address[S. G. Zhu]{Department of Mathematics, Shanghai Jiao Tong University,
Shanghai 200240, P.R.China;
School of Mathematics, Georgia Institute of Technology, Atlanta 30332, U.S.A.}
\email{\tt zhushengguo@sjtu.edu.cn}


\begin{abstract}We study the Cauchy problem for multi-dimensional compressible  radiation hydrodynamics equations with vacuum. First, we present some sufficient conditions on the blow-up of smooth solutions in multi-dimensional space. Then, we obtain the invariance of the support of density for the smooth solutions with compactly supported initial mass density by the property of the system under the vacuum state. Based on the above-mentioned results, we prove that we cannot get a global classical solution, no matter how small the initial data are, as long as the initial mass density is of compact support. Finally, we will see that some of the results that we obtained are still valid for the isentropic flows with degenerate viscosity coefficients as well as for one-dimensional case.
\end{abstract}

\date{September 10, 2012}
\subjclass{Primary: 35A09, 35B44, 35Q30; Secondary: 35L65, 35Q35}\keywords{Navier-Stokes-Boltzmann equations, classical solutions, vacuum, blow-up.\\
{\bf Acknowledgments. } The authors' research was supported in part
by Chinese National Natural Science Foundation under grant 11231006 and 10971135.
}

\maketitle

\section{Introduction}
This paper is concerned with the formation of singularities of smooth solutions to the Cauchy problem for the  compressible radiation hydrodynamics equations with vacuum.

As we know, the effects of heat radiation increase with the growth of temperature. In particular, for the case of thermal equilibrium, radiation energy density is in proportion to the fourth power of temperature. Radiation sometimes contributes largely to energy density, momentum density and pressure, for instance, in astrophysics and inertial confinement fusion. Radiation transfer is usually the most effective mechanism which affects the energy exchange in fluids, so it is necessary to take effects of the radiation field into consideration in the hydrodynamic equations. The equations of radiation hydrodynamics result from the balances of particles, momentum and energy. We first introduce some basic concepts necessary for describing the radiation field and its interaction with matter. At any time $t$, we need $2d$ variables to specify the state of a photon in phase space, namely, $d$ position variables and $d$ velocity (or momentum) variables. Usually, we can denote by $x$ the $d$ position variables, and replace the $d$ momentum variables equivalently with frequency $v$ and the travel direction $\Omega$ of the photon. Via these variables, we define distribution function as
$$ f\equiv f(t,x,v,\Omega),$$then,
$$ dn=f(t,x,v,\Omega)dxdvd\Omega,$$
where $n$ is the number of photons; $dn$ is the number of photons (at time $t$) at space point $x$ in a  volume element $ \text{d}x $, with local frequency $ v $ in a frequency interval $ \text{d}v $, and traveling in a direction $\Omega$ in the cubic angel element $ \text{d}\Omega$.
In the radiation transport, we usually use the specific radiation intensity $ I=I(t,x,v,\Omega)$ to replace the distribution function $f$. The specific radiation intensity is defined as
$$ I=chvf(x,t,v,\Omega)$$
with the Planck constant $h$ and the light speed $c$. The physical interpretation of $I$ is contained in the relationship
$$
dE_1=I(t,x,v,\Omega)\cos \Theta \text{d}\sigma \text{d}v\text{d}\Omega \text{d}t,
$$
where $dE_1$ is the amount of the radiation energy in $\text{d}v$ centered at $v$, traveling in a direction $\Omega$ confined to a solid angel element $\text{d}\Omega$, which crosses, in a time element $\text{d}t$, an area $d\sigma$ oriented such that $\Theta$ is the angel which the direction $\Omega$ makes with the normal to $d\sigma$ ($\cos \Theta=\Omega \cdot \mathbf{n}$, here $\mathbf{n}$ is the outward unit normal vector of $d\sigma$).

Regarding the three basic interactions between photons and matter, namely, absorption, scattering and emission, we have transport equation in the general form
\begin{equation}
\label{eq:1.1}
\frac{1}{c}\partial_tI+\Omega\cdot\nabla I=A_r,
\end{equation}
where
\begin{equation}\label{right}
A_r=\widetilde{S}-\sigma_aI
+\int_0^\infty \int_{S^{d-1}} \left(\frac{v}{v'}\sigma_s(v' \rightarrow v,\Omega'\cdot\Omega)I'
-\sigma_s(v \rightarrow v',\Omega\cdot\Omega')I\right) \text{d}\Omega' \text{d}v',
\end{equation}
$S^{d-1}$ is the unit sphere in $\mathbb{R}^d$,
$$I=I(t,x,v,\Omega),\quad I'=I(t,x,v',\Omega'),$$
$\widetilde{S}=\widetilde{S}(t,x,v,\rho,\theta)$ is the rate of energy emission due to spontaneous process, and $\sigma_a=\sigma_a(t,x,v,\rho,\theta)$ denotes the absorption coefficient that may also depend on the mass density $\rho$ and the temperature $\theta $ of the matter. The dependence of $\sigma_a$ upon $\rho$ and $\theta$ can have the form of (see\cite{gp})
$$ \sigma_a=O(\rho^{\alpha}\theta^{-\beta}), \quad \alpha,  \beta> 0.$$

Similarly to absorption, a photon can undergo scattering interactions with matter, and the scattering interaction serves to change the photon's characteristics $v'$ and $\Omega'$ to a new set of characteristics $v$ and $\Omega$. To quantitatively describe the scattering event, one requires a probabilistic statement concerning this change, which leads to the definition of the `differential scattering coefficient' $\sigma_s(v' \rightarrow v,\Omega'\cdot\Omega)\equiv \sigma_s(v' \rightarrow v, \Omega'\cdot\Omega,\rho,\theta)$ that may depend on $\rho$ and $\theta$ (in general, $ \sigma_s $ is independent of $\theta$)£¬ such that the probability of a photon being scattered from $v'$ to $v$ contained in $\text{d}v$, from $\Omega'$ to $\Omega$ contained in $\text{d}\Omega$, and traveling a distance $ds$ is given by $\sigma_s(v' \rightarrow v,\Omega'\cdot\Omega)\text{d}v \text{d}\Omega ds$. Therefore, the time rates of outscattering and inscattering within a unit volume element are
$$
\text{outscattering}=\int_0^\infty \int_{S^{d-1}} \sigma_s(v \rightarrow v',\Omega\cdot\Omega')I\text{d}\Omega'\text{d}v',
$$
$$
\text{inscattering}=\int_0^\infty \int_{S^{d-1}} \sigma_s(v' \rightarrow v,\Omega'\cdot\Omega)I\text{d}\Omega'\text{d}v',
$$
where $\sigma_s=O(\rho)$. In this paper, for simplicity, we only consider the case $\sigma_s=0$. The local existence of strong solution with nonnegative mass density to the following Navier-stokes-Boltzmann equations for $\sigma_s=O(\rho)$ can be seen in \cite{zll}.  We also prove that the strong solution obtained in \cite{zll} is a classical one for positive time if the initial data has a better regularity, which will be seen in a  forthcoming paper \cite{sz11}.

In the above, we have assumed that $\widetilde{S}$ and $\sigma_a$ are independent of $\Omega$ and $\sigma_s$ depends only on $\Omega\cdot\Omega'$. This means that there exists no inherent preferred direction in the matter.

The impact of radiation on dynamical properties of the fluid is very significant, therefore, we introduce three physical quantities to describe this effect:
\begin{equation}
\label{eq:1.234}
\begin{cases}
\displaystyle
E_r=\frac{1}{c}\int_0^\infty \int_{S^{d-1}} I(t,x,v,\Omega)\text{d}\Omega \text{d}v ,\\[8pt]
\displaystyle
F_r=\int_0^\infty \int_{S^{d-1}}  I(t,x,v,\Omega)\Omega \text{d}\Omega \text{d}v,\\[8pt]
\displaystyle
P_r=\frac{1}{c}\int_0^\infty \int_{S^{d-1}}  I(t,x,v,\Omega)\Omega\otimes\Omega\text{d}\Omega \text{d}v,
\end{cases}
\end{equation}
which are called the radiation energy density, the radiation flux, and the radiation pressure tensor, respectively.

Now we take radiation effect into consideration for viscous fluids to have the following radiation hydrodynamics equations in the $d$-dimensional $(d\geq 2)$ space:
\begin{equation}
\label{eq:1.2}
\begin{cases}
\displaystyle
\frac{1}{c}\partial_tI+\Omega\cdot\nabla I=A_r,\\[10pt]
\displaystyle
\partial_{t}\rho+\nabla\cdot(\rho u)=0,\\[10pt]
\displaystyle
\partial_{t}(\rho u+\frac{1}{c^{2}}F_r)+\nabla\cdot(\rho u\otimes u+P_r)
  +\nabla p_m =\nabla \cdot \mathbb{T},\\[10pt]
\displaystyle
\partial_t(E_m+E_r)+
\nabla\cdot((E_m+p_m)u+F_r)
=\nabla \cdot (u \mathbb{T})+\nabla \cdot (k(\theta)\nabla \theta),
\end{cases}
\end{equation}
where $x\in \mathbb{R}^d$;
$u=(u_1,u_2,...,u_d)\in \mathbb{R}^d$ is the velocity of fluid; $A_r$ is defined by (\ref{right}); $E_m=\frac{1}{2}\rho|u|^2+\rho e$ is the material energy density of the fluid, and $e$ is the internal energy; $k(\theta)\geq 0$ is the heat conductivity; $p_m$ is the material pressure satisfying the equation of state
\begin{equation}
\label{eq:1.3}
p_m=R\rho \theta =\rho^{\gamma}e^S, \quad \gamma > 1,
\end{equation}
where $R$ is a positive constant, $\gamma$ is the adiabatic index and $S$ is the material entropy;
$\mathbb{T}$ is the stress tensor given by
\begin{equation}
\label{eq:1.4}
\mathbb{T}=\mu(\nabla u+(\nabla u)^\top)+\lambda (\nabla\cdot u)\mathbb{I}_d,
\end{equation}
where $\mathbb{I}_d$ is the $d\times d$ unit matrix, $\mu$ is the shear viscosity coefficient, $\lambda$ is the bulk viscosity coefficient, $\mu$ and $\lambda$ are both real constants satisfying
\begin{equation}
\label{eq:1.5}
  \mu \geq 0, \quad \lambda+\frac{2}{d}\mu \geq 0.\end{equation}
From the definition of $\mathbb{T}$, we can easily get
\begin{equation} \label{eq:1.6}   \frac{1}{3}\sum_{i=1}^{d} \mathbb{T}_{ii}=\left(\lambda+\frac{2}{d}\mu\right)\nabla\cdot u.\end{equation}
On one hand, the left-hand side of (\ref{eq:1.6}) is the average normal friction stress caused by viscosity. On the other hand, $ \nabla\cdot u$ on the right-hand side of (\ref{eq:1.6})  represents the rate of change of volume, and $ \nabla\cdot u=\frac{1}{\tau}\frac{d^l\tau}{\text{d}t}$
according to the continuity equation, where $\tau=\frac{1}{\rho}$ is the specific volume and $\frac{d^l}{\text{d}t}=\partial_t+u\cdot \nabla$ stands for the material derivative operator. Thus $ \mu'=\lambda+\frac{2}{d}\mu $
is the ratio between the average normal friction stress and the rate of change of volume, which can be used to describe the change of the average normal friction stress caused by expansion and contraction. We usually call it the second viscosity coefficient. For the monatomic gas, we can take $ \mu'=0 $ when the pressure is not very high. For the diatomic gas like air, if the temperature is not very high, we can also take $ \mu'=0 $. But for usual case, we must consider the effect of $ \mu'$.  It is natural that $ \mu'\geq0 $ from (\ref{eq:1.6}).

We are interested in the blow-up of smooth solutions with initial mass density containing vacuum state.
Makino, Ukai and Kawshima \cite{tms1} discussed the Cauchy problem for the compressible Euler equations and Euler-Possion equations with compactly supported initial density. They
obtained the finite time blow-up for regular solutions which are $C^1$ smooth and satisfy $\rho^{\frac{\gamma-1}{2}}(t,x)\in C^1([0,T)\times\mathbb{R}^3)$ ($\gamma>1$). 
 Via the introduction of the second momentum functional and the estimation of the finite influence domain, they summarized this kind of problems to the solving of  some ODE inequalities. Liu and  Yang \cite{tpy} applied this method to deal with the similar problem of  Euler equations with damping, they first showed that the regular
solutions  will not be global if the initial density function has compact support. Moreover, this method was also applied to viscous heat-conducting compressible flows,  under the assumption that the material entropy is finite in vacuum domain, Cho and Jin \cite{y1} proved the blow-up of smooth solutions in arbitrary space dimensions with  compactly supported initial density.

In this paper, by assuming that the support of mass density grows sub-linearly with time and the initial specific radiation intensity has some directional condition which will be shown in Theorem \ref{th:2.0000}, we will prove in Section $2.3$ the non-global existence of smooth solutions to the compressible radiation hydrodynamics equations with heat conduction via the same method as in \cite{y1}\cite{tms1}. However, since the radiation field affects the mechanical properties of the fluid significantly, it is difficult to get the estimates of some physical quantities. For example, due to the mutual transformation with the radiation energy, we see that the total material energy is not conserved (Remark \ref{r22}). Meanwhile the material momentum $ \int_{\mathbb{R}^d} \rho u \text{d}x $ is also not conserved because of the impact coming from the radiation flux. In order to overcome these difficulties, we have to make full use of the properties of the transport equation (\ref{eq:1.1}) to deal with the additional momentum source term and energy source term generated by the radiation field.

For compressible Navier-Stokes equations with constant viscosity coefficients,  Xin \cite{zx} presented a sufficient condition on the blow-up of smooth solutions in arbitrary space dimensions. He introduced a special functional which is a linear combination of the radial momentum, the second momentum and the total energy. Via the same method, Yang and Zhu \cite{tyc2} proved the non-global existence of the regular solutions with compactly supported initial density and velocity to the one-dimensional isentropic compressible Navier-Stokes equations with degenerate viscosity coefficient. Xin and Yan \cite{zwy} improved the result obtained in \cite{zx}, they got the same blow-up theorem without assuming that the initial density is compactly supported.

In Section $2.2$ of this paper, we first present a sufficient condition on the blow-up of smooth solutions to the compressible radiation hydrodynamics equations with constant viscosity coefficients in multi-dimensional space. The functionals which appear in  \cite{y1}\cite{tms1}\cite{zx}\cite{tyc2} are not valid for our system due to the effects of radiation, e.g., the special growth property of the material entropy (Remark \ref{r31}). We cannot prove the increasing of the material entropy for all the time $t\geq 0$ as we did to Navier-Stokes equations. Actually, in Section $3$, we can only prove the increasing of the material entropy for $t\geq T_c$ ($T_c>0$ is a positive constant) which is different from the case of Navier-Stokes equations. In order to overcome this difficulty, we introduce a new functional which is a linear combination of some mechanic quantities and some radiation quantities, then based on some estimates for the additional terms related to the radiation effects, we get some ODE inequalities which imply the finite time formation of singularities. However, we cannot extend our result directly to the isentropic flow, because we don't have enough estimates on the velocity $u$ in the vacuum domain (Remark \ref{r44}). For the isentropic compressible Navier-Stokes-Boltzmann equations with degenerate viscosity coefficients, in Section $4$, we also use the functional introduced in Section $2.2$ to prove the finite time blow-up of regular solutions in multi-dimensional space, and we point out that our proof is also valid for the same problem to the compressible Navier-Stokes equations in multi-dimensional space.

In Section $5$, we will see that there are some difficulties for the one-dimensional model \cite{BD} \cite{BS}, which is not obtained directly via letting $d=1$ in system (\ref{eq:1.2}). Actually, the system in one-dimensional space is only a certain symmetrization of $m$-d case, then the range of the travel angel variable changes so that we cannot get a valid estimate for the increasing of the material entropy. In order to avoid this difficulty, we assume that the initial specific radiation intensity $I_0$ has some directional conditions as shown in Theorem \ref{th:2.0000} such that we can prove the desired conclusions via the property of the total material energy obtained in Section $2.2$, that is to say, we can also extend some conclusions for multi-dimensional case to one-dimensional case.

In general, studying the  radiation hydrodynamics equations is challenging because of its complexity and mathematical difficulty. There are fewer results for Navier-Stokes-Boltzmann equations or Euler-Boltzmann equations in radiation hydrodynamics. Recently, Jiang-Zhong \cite{sjx} obtained the local existence of $C^1$ solutions for the Cauchy problems of Euler-Boltzmann equations.  Jiang and  Wang \cite{pjd} showed that some $ C^1$ solutions to the compressible Euler-Boltzmann equations will blow up in  finite time, regardless of the size of the initial disturbance. Chen and Wang \cite{zcy} studied the local well-posedness of the Cauchy problems to the  Navier-Stokes-Boltzmann equations under some assumptions. Ducomet and  Ne$\check{\text{c}}$asov$\acute{\text{a}}$ \cite{BD} \cite{BS}  studied the global weak solutions to the Navier-Stokes-Boltzmann equations and its large time behavior for the one-dimensional case.

The rest of  this paper is organized as follows. In Section $2$, we reformulate the
Cauchy problem for system (\ref{eq:1.2}) into a simpler form, and we prove our three main blow-up results.
In Section $3$, we give some applications of blow-up theorems presented in Section $2$. In Section $4$, we give the corresponding
results for the multi-dimensional isentropic flow with degenerate viscosity coefficients. In Section $5$, we remark that some conclusions for multi-dimensional case still hold for one-dimensional case.

\section{Blow-up Conditions for the case of constant viscosity coefficients}
\subsection{Reformulation of the problem}\ \\
Photons are actually bosons. Due to the so called `induced process' of bosons, emission and scattering processes will be enhanced by the photons  in the final state of the reaction. This enhancement can be quantitatively described as follows. If $Z$ represents a basic probability of photon events (emission or scattering), due to the inductive effect, the actual probability is
$$
Z'=Z(1+n_f),
$$
where $ n_f $ stands for the number of  photons  in the final transition  state with the form (see \cite{pjd} or \cite{gp})
$$
n_f=\frac{c^{2}}{2hv^3}I(t,x,v,\Omega).
$$

First we assume that $\sigma_s=0$. From the `induced process'  and the local thermal equilibrium (LTE, see \cite{gp}) assumption, $\widetilde{S}$ and $\sigma_a$ can be written as
\begin{equation*}
\begin{cases}
\widetilde{S}(t,x,v,\rho,\theta)=K_a \overline{B}(v)\left(1+\frac{c^{2}I}{2hv^3}\right),\\[6pt]
\sigma_a(t,x,v,\rho,\theta)=K_a\cdot\left(1+\frac{c^{2}}{2hv^3}\overline{B}(v)\right),\\[6pt]
\end{cases}
\end{equation*}
where $\overline{B}(v)$ is a function of $v$ and $K_a=K_a(t,x,v,\rho,\theta)\geq 0$ ($K_a(t,x,v,0,0)=0$) is the absorption coefficient. More comments on $\widetilde{S}(t,x,v,\rho,\theta)$ and $\sigma_a(t,x,v,\rho,\theta)$ can be seen in Remarks 2.2 and 3.1 in \cite{sjx} as well as in \cite{gp}.
From the state relation
$$ \rho e=\frac{p_m}{\gamma-1}=\frac{1}{\gamma-1}\rho^{\gamma}e^{S},   $$
the energy equation in system (\ref{eq:1.2}) can be reduced to
\begin{equation*}
\ (S_t+u\cdot \nabla S)p_m=N_r,
\end{equation*}
where
$$
N_r=(\gamma-1)\Big(\int_0^\infty \int_{S^{d-1}} \left(1-\frac{u\cdot \Omega}{c}\right)K_a\cdot(I-\overline{B}(v)) \text{d}\Omega \text{d}v
+\nabla \cdot (u \mathbb{T})-u\cdot (\nabla \cdot \mathbb{T})+\nabla \cdot (k(\theta)\nabla\theta)\Big).
$$
So the  Navier-Stokes-Boltzmann  system (\ref{eq:1.2}) ($d\geq2$) can be rewritten into
\begin{equation}
\begin{cases}
\label{eq:2.1}
\displaystyle
\frac{1}{c}\partial_tI+\Omega\cdot\nabla I=-K_a\cdot(I-\overline{B}(v)),\\[8pt]
\partial_{t}\rho+\nabla\cdot(\rho u)=0,\\[8pt]
\displaystyle
\partial_{t}(\rho u)+\nabla\cdot(\rho u \otimes u)
+\nabla p_m =
\frac{1}{c}\int_0^\infty \int_{S^{d-1}} K_a\cdot(I-\overline{B}(v))\Omega \text{d}\Omega \text{d}v+\nabla \cdot \mathbb{T},\\[8pt]
(\partial_tS+u\cdot \nabla S)p_m=N_r.
 \end{cases}
\end{equation}

We consider the Cauchy problem of  (\ref{eq:2.1})  with the initial data
\begin{equation} \label{eq:2.2}
I|_{t=0}=I_0(x,v,\Omega),\quad (\rho,u,S)|_{t=0}=(\rho_0(x),u_0(x), S_0(x))
\end{equation}
satisfying
\begin{equation} \label{eq:2u}
I_0(x,v,\Omega)-\overline{B}(v)\in L^2(\mathbb{R}^+ \times S^{d-1};H^s(\mathbb{R}^d)),\ (\rho_0,u_0,S_0)(x)\in H^s(\mathbb{R}^d),
\end{equation}
\begin{equation} \label{eq:2.2*}
I_0 \geq \overline{B}(v) \ \text{for} \ (x,v,\Omega)\in \mathbb{R}^{d}\times \mathbb{R}^{+}\times S^{d-1},
\end{equation}
\begin{equation} \label{eq:2.2rr}
 I_0\equiv \overline{B}(v) \ \text{for} \  \ |x|\geq R_0, \   (v,\Omega)\in R^+\times S^{d-1},
\end{equation}
where $s> \frac{d}{2}+2$, $R_0$ is a given positive constant, and
$$
\|I_0(x,v,\Omega)-\overline{B}(v)\|^2_{L^2(\mathbb{R}^+ \times S^{d-1};H^s(\mathbb{R}^d))}=\int_0^\infty \int_{S^{d-1}} \|I_0(\cdot,v,\Omega)-\overline{B}(v)\|^2_{H^s(\mathbb{R}^d)}\text{d}\Omega \text{d}v.
$$
\begin{remark}\label{r11}
In this paper, $\overline{B}(v)$ is actually a simplification of the Plank function which represents the energy density of black-body radiation. So condition
$I_0 \geq \overline{B}(v)$ is nature. In Section $2.2$, we will see that the assumption $ I_0\equiv \overline{B}(v)$ for $|x|\geq R_0$ results in the phenomena that the impact of radiation on dynamical properties of the fluid vanishes in the far field, then the system serves as the Navier-Stokes equations as $|x|\rightarrow +\infty$.
\end{remark}
Now we introduce some notations:
\begin{align*}
&m(t)=\int_{\mathbb{R}^{d}}\rho(t,x)\text{d}x \quad \textrm{(total mass)},\\
&M(t)=\int_{\mathbb{R}^{d}} \rho(t,x)|x|^{2}\text{d}x \quad \textrm{ (second moment)},\\
&F(t)=\int_{\mathbb{R}^{d}} \rho(t,x)u(t,x)\cdot x \text{d}x \quad \textrm{ (radial component of momentum)},\\
&\varepsilon(t)=\int_{\mathbb{R}^{d}} E_m(t,x) \text{d}x \quad  \textrm{(total material energy)},\\
&P_m(t)=\int_{\mathbb{R}^{d}}p_m(t,x)\text{d}x \quad  \textrm{ (total material pressure)},
\end{align*}
and
\begin{equation}\label{eq:radiation}
\begin{cases}
\displaystyle
\widetilde{E}_r=\frac{1}{c}\int_0^\infty \int_{S^{d-1}} (I-\overline{B}(v))\text{d}\Omega \text{d}v ,\\[8pt]
\displaystyle
\widetilde{F}_r=\int_0^\infty \int_{S^{d-1}}  (I-\overline{B}(v))\Omega\text{d}\Omega \text{d}v,\\[8pt]
\displaystyle
\widetilde{P}_r=\frac{1}{c}\int_0^\infty \int_{S^{d-1}} (I-\overline{B}(v))\Omega\otimes\Omega \text{d}\Omega \text{d}v.
\end{cases}
\end{equation}
Compared with (\ref{eq:1.234}), we call $\widetilde{E}_r$, $\widetilde{F}_r$, $\widetilde{P}_r$ the correction radiation energy density,  the correction radiation flux and the correction radiation pressure tensor, respectively, purely for technical reason.

We further define
\begin{equation} \label{cd:3}
\begin{cases}
\displaystyle
Q_r(t)=-\frac{2}{c^2}\int_{\mathbb{R}^d} x \cdot \widetilde{F}_r\text{d}x+2(t+\kappa)\int_{\mathbb{R}^d} \widetilde{E}_r\text{d}x,\\[10pt]
I_r(t)=M(t)-2(t+\kappa)F(t)+2(t+\kappa)^{2}\varepsilon(t)+(t+\kappa)Q_r(t),
\end{cases}
\end{equation}
where $ \kappa=\max \{1,\frac{R_0}{c}\}$. $Q_r(t)$ can be regarded as  the combined effect of $E_r(t)$ and $F_r(t)$. We will see later that $I_r(t)>0$.

We always assume that $ m(0),  M(0),  |F(0)|,  \varepsilon(0)< \infty $, and $m(0)> 0$, $\varepsilon(0)> 0$. That is to say, $(I,\rho,u,S)$ is not the trivial zero solution.

\begin{remark}
In this paper, we study the case that $\overline{B}(v)$ only depends on v. The results still hold when the function $\overline{B}(v)$ also depends on $\Omega$. But we do not allow that $\overline{B}(v)$ depends on $x$ or $t$. However, Our further research for the case that
$$
\overline{B}=2hv^3c^{-2}(e^{\frac{hv}{k\theta}-1})^{-1},
$$
which is just the Plank function is in progress. Here $\overline{B}(v)$ depends implicitly on $x$ and $t$ through the temperature  $\theta(t,x)$.
\end{remark}

\subsection{Blow up condition:  the material entropy has lower bound}\ \\
In this section, our main result  is the  estimate (\ref{eq:2.9}) on the life span of smooth solutions to the Cauchy problem (\ref{eq:2.1})-(\ref{eq:2.2}) satisfying  (\ref{eq:2u})-(\ref{eq:2.2*}), which implies that any smooth solution with the support of the density growing sub-linearly in time and the entropy bounded from below cannot exist for all the time. The solution with compactly supported density is a special case in our Theorem. Let
\begin{equation*}\begin{split}
&B_r=\{x\in \mathbb{R}^d||x|\leq r\},\ D(t)=\text{diam}\left(\sup_x \rho(t,x)\right)=\sup_{x,y\in \sup_x \rho(t,x)}|x-y|.\\
&\|I-\overline{B}(v)\|^2_{L^2(\mathbb{R}^+ \times S^{d-1};C^1([0,T);H^s))}=\int_0^\infty \int_{S^{d-1}} \|I(\cdot,\cdot,v,\Omega)-\overline{B}(v)\|^2_{C^1([0,T);H^s)}\text{d}\Omega \text{d}v.
\end{split}
\end{equation*}
\begin{theorem}\label{th:2.1}
Let $T>0$ and
\begin{equation}
\label{eq:1.5jk}
  \mu \geq 0, \quad \lambda+\frac{2}{d}\mu \geq 0,\ k(\theta)\geq 0.\end{equation}
 Suppose that
 $$I(t,x,v,\Omega)\in L^2(\mathbb{R}^+ \times S^{d-1};C^1([0,T);H^s(\mathbb{R}^d))),\ (\rho,u,S)(t,x)\in C^1([0,T);H^s(\mathbb{R}^d))$$
is a smooth solution to Cauchy problem (\ref{eq:2.1})-(\ref{eq:2.2}) satisfying  (\ref{eq:2u})-(\ref{eq:2.2rr}).  Assume further that there exist constants $\alpha$ $(0 \leq \alpha <1)$, $L(> 0)$ and $\underline{S}$ independent of $T $ such that
\begin{equation}\label{cd:1}
D(t)\leq 2L(t+\kappa)^\alpha, \qquad  \forall t \in [0,T),
\end {equation}
\begin{equation}\label{cd:2}
\quad S(t,x)\geq \underline{S} ,    \quad \forall (t,x) \in  [0,T)\times \mathbb{R}^d.
\end {equation}
When $\gamma >  1+\frac{1}{d}$, we need further that $\alpha< \frac{1}{2}$.
Then
\begin{equation}\label{eq:2.9}
T\leq T(\gamma)< +\infty,
\end{equation}
where
$$
T(\gamma)=\left\{\begin{array}{llll}
\left(\frac{I_r(0)}{L_1}\right)^{\frac{1}{d(\gamma-1)(1-\alpha)}},  & 1 < \gamma \leq 1+\frac{1}{d} , \\[8pt]
\left(\frac{I_r(0)}{L_2}\right)^{\frac{1}{1+d\alpha(1-\gamma)}},  & 1+\frac{1}{d}< \gamma < 1+\frac{1}{d\alpha}
\end{array}\right.
$$
with
$$
L_1=\frac{2\kappa^{2-d(\gamma-1)}}{\gamma-1} e^{\underline{S}} L^{(1-\gamma)d} |B_1|^{1-\gamma}m(0)^{\gamma},\
L_2=\frac{2\kappa}{\gamma-1} e^{\underline{S}} L^{(1-\gamma)d} |B_1|^{1-\gamma}m(0)^{\gamma},
$$
and $|B_1|$ is the volume of the unit ball $B_1$. That is to say, any smooth solutions to the Cauchy problem (\ref{eq:2.1})--(\ref{eq:2.2}) has to blow up in finite time as long as (\ref{cd:1})and (\ref{cd:2}) hold.
\end{theorem}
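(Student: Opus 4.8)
The plan is to reduce (\ref{eq:2.9}) to a pair of ODE differential inequalities for the functional $I_r(t)$ of (\ref{cd:3}), following the finite‑propagation/second‑moment scheme of \cite{tms1}\cite{y1}, but with all radiation contributions processed through the transport equation in (\ref{eq:2.1}). First I would record the mechanical evolution laws. The continuity equation gives $M'(t)=2F(t)$. Testing the momentum equation against $x$ and integrating by parts, the viscous stress disappears because $\int_{\mathbb{R}^d}\mathrm{tr}\,\mathbb{T}\,\dif x=(2\mu+d\lambda)\int_{\mathbb{R}^d}\nabla\cdot u\,\dif x=0$, leaving, with $G=\frac1c\int_0^\infty\int_{S^{d-1}}K_a(I-\overline B(v))\Omega\,\dif\Omega\,\dif v$,
$$
F'(t)=\int_{\mathbb{R}^d}\rho|u|^2\,\dif x+dP_m(t)+\int_{\mathbb{R}^d}x\cdot G\,\dif x .
$$
Adding the kinetic‑ and internal‑energy balances (the latter from the entropy equation together with $p_m=(\gamma-1)\rho e$), the factors $1\mp\frac{u\cdot\Omega}{c}$ arising in the momentum source and in $N_r$ cancel, producing the clean identity $\varepsilon'(t)=\int_{\mathbb{R}^d}\int_0^\infty\int_{S^{d-1}}K_a(I-\overline B(v))\,\dif\Omega\,\dif v\,\dif x\ge0$: the material energy grows precisely at the expense of the radiation field.

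The core of the argument is to show that these radiation source terms cancel inside $I_r'$. Writing $J=I-\overline B(v)$ and integrating the transport equation against $1$ and against $x\cdot\Omega$ over $(v,\Omega)$ and then over $x$, one gets the energy‑exchange identity $\frac{\dif}{\dif t}\int_{\mathbb{R}^d}\widetilde E_r\,\dif x=-\varepsilon'(t)$ and $\frac{\dif}{\dif t}\int_{\mathbb{R}^d}x\cdot\widetilde F_r\,\dif x=c^2\int_{\mathbb{R}^d}\widetilde E_r\,\dif x-c^2\int_{\mathbb{R}^d}x\cdot G\,\dif x$. Consequently $(t+\kappa)Q_r'$ cancels exactly the terms $-2(t+\kappa)\int x\cdot G+2(t+\kappa)^2\varepsilon'$ generated when differentiating $I_r$, and the key identity
$$
I_r'(t)=\frac{2\,(2-d(\gamma-1))}{\gamma-1}\,(t+\kappa)P_m(t)+Q_r(t)
$$
emerges. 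By finite speed of propagation $\widetilde E_r,\widetilde F_r$ are supported in $|x|\le c(t+\kappa)$, and $|\widetilde F_r|\le c\widetilde E_r$; these two facts give $Q_r(t)\ge0$, hence
$$
I_r(t)=\int_{\mathbb{R}^d}\rho\,|x-(t+\kappa)u|^2\,\dif x+\frac{2(t+\kappa)^2}{\gamma-1}P_m(t)+(t+\kappa)Q_r(t)\ge0 .
$$

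I would then close the estimate in two directions. For the lower bound, Hölder's inequality $m(0)\le(\int\rho^\gamma)^{1/\gamma}|\supp\rho|^{(\gamma-1)/\gamma}$, the entropy bound $p_m\ge e^{\underline S}\rho^\gamma$, and the support hypothesis (\ref{cd:1}) (so $|\supp\rho|\le|B_1|(L(t+\kappa)^\alpha)^d$) yield $P_m(t)\ge e^{\underline S}m(0)^\gamma|B_1|^{1-\gamma}L^{d(1-\gamma)}(t+\kappa)^{d\alpha(1-\gamma)}$, whence $I_r(t)$ is bounded below by a positive multiple of $(t+\kappa)^{2-d\alpha(\gamma-1)}$. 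For the upper bound I would establish a monotonicity: retaining the term $(t+\kappa)Q_r$ in the lower estimate for $I_r$ allows the sign‑definite $Q_r$ in $I_r'$ to be absorbed, giving $\frac{\dif}{\dif t}[(t+\kappa)^{-\theta}I_r]\le0$ with $\theta=2-d(\gamma-1)$ when $1<\gamma\le1+\frac1d$ and with $\theta=1$ when $1+\frac1d<\gamma<1+\frac1{d\alpha}$. This is exactly where the threshold $\gamma=1+\frac1d$ enters (it decides whether $d(\gamma-1)-1$ is favorably signed), and where $\alpha<\frac12$ is used to keep $1-d\alpha(\gamma-1)>0$. Integrating gives $I_r(t)\le\kappa^{-\theta}(t+\kappa)^{\theta}I_r(0)$, and comparing with the lower bound forces $(t+\kappa)^{d(\gamma-1)(1-\alpha)}\le I_r(0)/L_1$ in the first regime and $(t+\kappa)^{1-d\alpha(\gamma-1)}\le I_r(0)/L_2$ in the second, which is precisely (\ref{eq:2.9}).

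The main obstacle is the radiation coupling. A crude estimate such as $Q_r\le 4(t+\kappa)\int\widetilde E_r$ introduces a spurious $(t+\kappa)^2$ growth that dominates the pressure lower bound and annihilates the contradiction, since $\int\widetilde E_r$ need not decay (in vacuum $K_a$ vanishes, so no absorption occurs). The two devices that rescue the argument are the exact cancellation in the identity for $I_r'$, which eliminates every radiation term except the harmless $Q_r\ge0$, and the trick of keeping $(t+\kappa)Q_r$ inside $I_r$ in the monotonicity step so that it controls $Q_r$ in $I_r'$. The delicate bookkeeping that makes all of this work—verifying $\frac{\dif}{\dif t}\int\widetilde E_r=-\varepsilon'$ and the exact matching of $(t+\kappa)Q_r'$ with the momentum and energy sources—is the technical heart on which the whole proof depends.
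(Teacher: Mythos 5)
Your proposal is correct and follows essentially the same route as the paper: you introduce the same functional $I_r(t)$, derive the same key identity $I_r'(t)=\frac{2(2-d(\gamma-1))}{\gamma-1}(t+\kappa)P_m(t)+Q_r(t)$ (via transport-equation moments, which is the paper's integration-by-parts computation in Proposition \ref{pro:2.1} organized slightly differently), prove $Q_r\geq 0$ from $I\geq\overline{B}(v)$ and finite propagation exactly as in Lemmas \ref{lemma:2.1} and \ref{lemma:2.5}, and close with the same Jensen-type pressure lower bound against the Gronwall-type upper bound. Your monotonicity formulation $\frac{d}{dt}\left[(t+\kappa)^{-\theta}I_r\right]\leq 0$ is just an equivalent repackaging of the paper's ODE inequalities (\ref{eq:3.3f}) and the case $\gamma>1+\frac{2}{d}$, yielding the same exponents and constants $L_1$, $L_2$.
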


In order to prove Theorem \ref{th:2.1}, we first give some important lemmas.

\begin{lemma}
\label{lemma:2.1}
Let $T> 0$. Assume that
 $$I(t,x,v,\Omega)\in L^2(\mathbb{R}^+ \times S^{d-1};C^1([0,T);H^s(\mathbb{R}^d))),\ (\rho,u,S)(t,x)\in C^1([0,T);H^s(\mathbb{R}^d))$$
 is a smooth solution to Cauchy problem (\ref{eq:2.1})--(\ref{eq:2.2}) satisfying the conditions in Theorem \ref{th:2.1},
 then we have
\begin{equation*}\begin{split}
 I(t,x,v,\Omega) &\geq \overline{B}(v),\quad \forall(t,x,v,\Omega)\in [0,T) \times \mathbb{R}^{d}\times \mathbb{R}^{+}\times S^{d-1};\\
I(t,x,v,\Omega) &\equiv \overline{B}(v),\quad \forall |x|\geq R_0+ct.
\end{split}
\end{equation*}
Moreover, if $ I_0(x,v,\Omega) \equiv \overline{B}(v)$  for $ x \cdot\Omega\leq 0$, then
$$ I(t,x,v,\Omega) \equiv \overline{B}(v), \quad \forall x\cdot\Omega\leq 0.
$$
\end{lemma}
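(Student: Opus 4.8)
The plan is to exploit the fact that, for each fixed frequency $v$ and direction $\Omega$, the first equation of (\ref{eq:2.1}) is a linear transport equation with constant transport speed. Setting $J(t,x,v,\Omega)=I(t,x,v,\Omega)-\overline{B}(v)$ and using that $\overline{B}(v)$ is independent of $t$ and $x$, the transport equation becomes
$$
\partial_t J + c\,\Omega\cdot\nabla J = -c\,K_a\,J,
$$
whose characteristics are the straight lines $s\mapsto X(s)=x_0+c\,\Omega s$, since the transport velocity $c\Omega$ is constant in $(t,x)$. Along such a characteristic the equation reduces to the scalar linear ODE $\frac{d}{ds}J(s,X(s))=-c\,K_a(s,X(s),v,\rho,\theta)\,J(s,X(s))$, and integrating it gives the representation
$$
J(t,x,v,\Omega)=J(0,x-c\Omega t,v,\Omega)\,\exp\!\Big(-c\!\int_0^t K_a\big(s,x-c\Omega(t-s),v,\rho,\theta\big)\,\dif s\Big).
$$
Because every space--time point $(t,x)$ is the image of the unique foot point $x_0=x-c\Omega t$ at time $0$, this formula determines $J$ everywhere.

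The three conclusions then follow from this representation together with the sign condition $K_a\ge0$, which guarantees that the exponential weight lies in $(0,1]$ and hence preserves both the sign and the vanishing of the data. First I would prove $I\ge\overline{B}(v)$: by (\ref{eq:2.2*}) we have $J(0,\cdot)\ge0$ everywhere, so the representation immediately yields $J(t,x,v,\Omega)\ge0$ for all $(t,x,v,\Omega)$. For the far-field statement I would trace the backward characteristic from an arbitrary point $(t,x)$ with $|x|\ge R_0+ct$ to its foot $x_0=x-c\Omega t$ and use $|x_0|\ge|x|-ct\ge R_0$, so that (\ref{eq:2.2rr}) forces $J(0,x_0)=0$ and hence $J(t,x,v,\Omega)=0$. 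For the directional statement, under the extra hypothesis $I_0\equiv\overline{B}(v)$ on $\{x\cdot\Omega\le0\}$, I would compute $(x-c\Omega t)\cdot\Omega=x\cdot\Omega-ct\le x\cdot\Omega\le0$ whenever $x\cdot\Omega\le0$; thus the foot point again lies in the vanishing set of the data, and the representation gives $J\equiv0$ on $\{x\cdot\Omega\le0\}$.

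The argument is essentially a method-of-characteristics computation, so no single step is genuinely hard; the points requiring care are bookkeeping rather than obstacle. The most delicate item will be justifying the ODE integration, for which one must check that $K_a(\cdot,\cdot,v,\rho,\theta)$ is well defined and integrable along each characteristic: this is guaranteed by the assumed $C^1([0,T);H^s)$ regularity of $(\rho,u,S)$ (hence of $\theta$ through the equation of state (\ref{eq:1.3})) together with $s>\tfrac d2+2$, which makes the coefficients continuous. The remaining care points are the two elementary geometric inequalities $|x-c\Omega t|\ge|x|-ct$ and $(x-c\Omega t)\cdot\Omega=x\cdot\Omega-ct$, whose validity rests on $|\Omega|=1$ and the constancy of the transport speed; the role of $K_a\ge0$ is simply to prevent the exponential weight from ever vanishing or changing sign.
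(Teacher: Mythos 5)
Your proposal is correct and follows essentially the same route as the paper's proof: both subtract $\overline{B}(v)$ (using its independence of $t$ and $x$), integrate the resulting linear ODE along the straight photon characteristics $x_0+c\Omega s$ to obtain the same exponential representation formula, and then read off the three conclusions from $K_a\geq 0$, the bound $|x-c\Omega t|\geq |x|-ct$, and the identity $(x-c\Omega t)\cdot\Omega = x\cdot\Omega - ct|\Omega|^2 \leq x\cdot\Omega$. The only difference is cosmetic: you spell out the far-field foot-point estimate that the paper leaves implicit.
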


\begin{proof} Because  $\overline{B}(v)$ is independent of $x$ and $t$, the first equation of system (\ref{eq:2.1}) can be rewritten as
$$ \frac{1}{c} \partial_t(I-\overline{B}(v))+\Omega\cdot\nabla (I-\overline{B}(v))=-K_a\cdot(I-\overline{B}(v))$$
We denote by $ y(t;y_0)$ the photon path starting from $y_0$ when $t=0$, i.e.,
      $$   \frac{d}{\partial{t}}y(t;y_0)=c\Omega ,     \qquad  y(0;y_0)= y_0 .                   $$
Along the photon path, we  obtain
     \begin{equation} \label{eq:**}(I-\overline{B}(v))(t,y(t;y_0 ))=(I_0-\overline{B}(v))(y_0 )\exp\Big(\int_0^t -cK_a(\tau,y(\tau;y_0 ),v,\rho,\theta)\text{d}\tau\Big),   \end{equation}
where $y_0=y-c\Omega t $.

Then from (\ref{eq:2.2*}), it is easy to know that
\begin{equation*}\begin{split}
 I(t,x,v,\Omega) &\geq \overline{B}(v),\quad \forall(t,x,v,\Omega)\in [0,T) \times \mathbb{R}^{d}\times \mathbb{R}^{+}\times S^{d-1}.\\
I(t,x,v,\Omega) &\equiv \overline{B}(v),\quad \text{for}\quad |x|\geq R_0+ct.
\end{split}
\end{equation*}
If $ I_0(x,t,\Omega) \equiv \overline{B}(v)$  for $ x \cdot\Omega\leq 0$, we can choose any point $(t,x)\in \mathbb{R}^+\times\mathbb{R}^d  $ satisfying $ x \cdot \Omega \leq 0$, along the photon path
$$x_0\cdot \Omega=(x-c\Omega t) \cdot  \Omega=x \cdot \Omega-c|\Omega|^{2}t \leq x \cdot \Omega \leq 0.  $$
Due to (\ref{eq:**}), it yields
       $$ I(t,x,v,\Omega) \equiv \overline{B}(v), \ \text{for} \ x \cdot \Omega \leq 0. $$
\end{proof}

\begin{lemma}
\label{lemma:2.2} If
$u(x)\in H^s(\mathbb{R}^d)$, then
$ \nabla \cdot (u \mathbb{T})-u\cdot(\nabla \cdot \mathbb{T})\geq 0$.
\end{lemma}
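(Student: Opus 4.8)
The plan is to recognize that the expression in question is, pointwise, a nonnegative quadratic form in the first derivatives of $u$, so that the inequality is purely algebraic. The $H^s$-regularity (with $s>\frac d2+1$) plays only an auxiliary role: by Sobolev embedding it guarantees $u\in C^1$, so every derivative below exists and the Leibniz rule may be applied pointwise.

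First I would expand both terms in components. Writing $(\nabla\cdot\mathbb{T})_i=\sum_j\partial_j\mathbb{T}_{ij}$ and $(u\mathbb{T})_j=\sum_i u_i\mathbb{T}_{ij}$, the Leibniz rule gives $\partial_j(u_i\mathbb{T}_{ij})=(\partial_j u_i)\mathbb{T}_{ij}+u_i\,\partial_j\mathbb{T}_{ij}$. Summing over $i,j$, the terms containing derivatives of $\mathbb{T}$ cancel and one is left with the double contraction
\[
\nabla\cdot(u\mathbb{T})-u\cdot(\nabla\cdot\mathbb{T})=\sum_{i,j}(\partial_j u_i)\,\mathbb{T}_{ij},
\]
where I have used that $\mathbb{T}$ is symmetric, so that the two possible conventions for $\nabla\cdot\mathbb{T}$ agree.

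Next I would substitute the constitutive law $\mathbb{T}_{ij}=\mu(\partial_j u_i+\partial_i u_j)+\lambda(\nabla\cdot u)\delta_{ij}$ from (\ref{eq:1.4}) and symmetrize. Setting $A_{ij}=\partial_j u_i$ and letting $D_{ij}=\frac12(A_{ij}+A_{ji})$ denote the deformation tensor, the identity $\sum_{ij}A_{ij}(A_{ij}+A_{ji})=2\sum_{ij}D_{ij}^2$ (which follows from the symmetry of $D$) together with $\sum_{ij}A_{ij}\delta_{ij}=\nabla\cdot u$ reduces the contraction to
\[
\sum_{i,j}(\partial_j u_i)\,\mathbb{T}_{ij}=2\mu\sum_{i,j}D_{ij}^2+\lambda(\nabla\cdot u)^2.
\]

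Finally, to bring in the precise sign condition (\ref{eq:1.5}), I would split $D$ into its trace-free and trace parts, $D_{ij}=\bigl(D_{ij}-\tfrac1d(\nabla\cdot u)\delta_{ij}\bigr)+\tfrac1d(\nabla\cdot u)\delta_{ij}$, using $\mathrm{tr}\,D=\nabla\cdot u$. By orthogonality of these two parts, $\sum_{ij}D_{ij}^2=\sum_{ij}\bigl(D_{ij}-\tfrac1d(\nabla\cdot u)\delta_{ij}\bigr)^2+\tfrac1d(\nabla\cdot u)^2$, whence
\[
\sum_{i,j}(\partial_j u_i)\,\mathbb{T}_{ij}=2\mu\sum_{i,j}\Bigl(D_{ij}-\tfrac1d(\nabla\cdot u)\delta_{ij}\Bigr)^2+\Bigl(\lambda+\tfrac2d\mu\Bigr)(\nabla\cdot u)^2.
\]
Since $\mu\ge0$ and $\lambda+\frac2d\mu\ge0$ by (\ref{eq:1.5}), both terms on the right are nonnegative, which proves $\nabla\cdot(u\mathbb{T})-u\cdot(\nabla\cdot\mathbb{T})\ge0$. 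The only delicate point is the symmetrization together with the trace decomposition: one must track the trace part carefully so that the coefficient of $(\nabla\cdot u)^2$ emerges exactly as $\lambda+\frac2d\mu$ rather than $\lambda$. This is precisely what makes the physically motivated hypothesis (\ref{eq:1.5}) the sharp condition ensuring nonnegativity.
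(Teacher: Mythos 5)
Your proof is correct. After the common first step (the Leibniz-rule cancellation giving $\nabla\cdot(u\mathbb{T})-u\cdot(\nabla\cdot\mathbb{T})=\sum_{i,j}(\partial_j u_i)\mathbb{T}_{ij}$, which the paper also carries out, just written as an explicit sum of diagonal, off-diagonal and cross terms), your route diverges from the paper's in a genuine way. The paper splits into two cases according to the sign of $\lambda$: when $\lambda\le 0$ it invokes Cauchy's inequality in the form $(\nabla\cdot u)^2\le d\sum_i(\partial_iu_i)^2$ to absorb $\lambda(\nabla\cdot u)^2$ into $(2\mu+d\lambda)\sum_i(\partial_iu_i)^2$, and when $\lambda\ge 0$ it simply discards that term; in both cases it then completes squares on the off-diagonal pairs $(\partial_iu_j+\partial_ju_i)^2$. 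You instead pass to the deformation tensor $D$, obtain the intermediate identity $2\mu\sum_{ij}D_{ij}^2+\lambda(\nabla\cdot u)^2$, and then use the orthogonal (deviatoric/spherical) decomposition of $D$ to arrive at the \emph{exact} identity
\begin{equation*}
\nabla\cdot(u\mathbb{T})-u\cdot(\nabla\cdot\mathbb{T})
=2\mu\sum_{i,j}\Bigl(D_{ij}-\tfrac1d(\nabla\cdot u)\delta_{ij}\Bigr)^2+\Bigl(\lambda+\tfrac2d\mu\Bigr)(\nabla\cdot u)^2,
\end{equation*}
from which nonnegativity is immediate under (\ref{eq:1.5}). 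Your approach buys three things: no case analysis on the sign of $\lambda$; an identity rather than a chain of one-sided estimates, so no information is thrown away; and a transparent explanation of why $\mu\ge 0$, $\lambda+\frac2d\mu\ge 0$ is precisely the right hypothesis, since these are exactly the coefficients of the two orthogonal pieces. The paper's argument, by contrast, is more elementary in its toolkit (only Cauchy's inequality and completing squares on explicit partial derivatives), but obscures the sharpness of the condition. Both are complete proofs.
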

\begin{proof} Calculating directly from the definition of $\mathbb{T}$, we have
\begin{equation*}
\begin{split}
\nabla \cdot (u \mathbb{T})-u\cdot(\nabla \cdot \mathbb{T})&=2\mu \sum_{i=1}^{d} (\partial_{i}u_{i})^{2}+\lambda(\nabla \cdot u)^2\\
&+\mu \sum_{i\neq j}^{d} (\partial_{i}u_{j})^2+2\mu\sum_{i>j} (\partial_{i}u_{j})(\partial_{j}u_{i}).
\end{split}
\end{equation*}
If $ \lambda\leq 0$, according to the Cauchy's inequality, we have
\begin{equation}\label{eq:2.5}
\begin{split}
\nabla \cdot (u \mathbb{T})-u\cdot(\nabla \cdot \mathbb{T}) &\geq (2\mu+d\lambda)\sum_{i=1}^{d}(\partial_{i}u_{i})^{2}+\mu \sum_{i\neq j} (\partial_{i}u_{j})^2+2\mu\sum_{i>j} (\partial_{i}u_{j})(\partial_{j}u_{i})\\
&=(2\mu+d\lambda)\sum_{i=1}^{d}(\partial_{i}u_{i})^{2}+\mu \sum_{i> j} (\partial_{i}u_{j}+\partial_{j}u_{i})^{2}\geq 0.
\end{split}
\end{equation}
 If $ \lambda \geq 0$, it is clear to see that
\begin{equation}\label{eq:2.6}
\begin{split}
\nabla \cdot (u \mathbb{T})-u\cdot(\nabla \cdot \mathbb{T}) &\geq 2\mu \sum_{i=1}^{d}(\partial_{i}u_{i})^{2}+\mu \sum_{i\neq j} (\partial_{i}u_{j})^2+2\mu\sum_{i>j} (\partial_{i}u_{j})(\partial_{j}u_{i})\\
&=2\mu\sum_{i=1}^{d}(\partial_{i}u_{i})^{2}+\mu \sum_{i>j}^{d} (\partial_{i}u_{j}+\partial_{j}u_{i})^{2}\geq 0.
\end{split}
\end{equation}
According to (\ref{eq:1.5}), we have
 $$ \nabla \cdot (u \mathbb{T})-u\cdot(\nabla \cdot \mathbb{T})\geq 0 . $$
\end{proof}

\begin{lemma}
\label{lemma:2.4}Let $T> 0$.
If
$$I(t,x,v,\Omega)\in L^2(\mathbb{R}^+ \times S^{d-1};C^1([0,T);H^s(\mathbb{R}^d))),\ (\rho,u,S)(t,x)\in C^1([0,T);H^s(\mathbb{R}^d))$$
 is a smooth solution to Cauchy problem (\ref{eq:2.1})--(\ref{eq:2.2}) satisfying the conditions in Theorem \ref{th:2.1}, then $$
\varepsilon(t) \geq \varepsilon(0), \ \forall t\in [0,T).
$$
\end{lemma}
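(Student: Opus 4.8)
The plan is to show that $\varepsilon$ is non-decreasing by deriving a balance law for the material energy density $E_m$ whose net source, after integration in $x$, is manifestly nonnegative. Physically the material energy can only be exchanged with the radiation field, and by Lemma \ref{lemma:2.1} we have $I\ge \overline B(v)$ throughout $[0,T)$ while $K_a\ge 0$; hence the matter acts as a net absorber and the exchange feeds energy \emph{into} the fluid. So I would aim to prove $\frac{d}{dt}\varepsilon(t)\ge 0$ and integrate in $t$.

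First I would reconstruct the material energy equation from the reformulated system (\ref{eq:2.1}). Dotting the momentum equation with $u$ and using the continuity equation gives the kinetic energy identity
\begin{equation*}
\partial_t\big(\tfrac12\rho|u|^2\big)+\nabla\cdot\big(\tfrac12\rho|u|^2 u\big)+u\cdot\nabla p_m=\tfrac1c\int_0^\infty\!\!\int_{S^{d-1}}K_a(I-\overline B(v))(u\cdot\Omega)\,\dif\Omega\,\dif v+u\cdot(\nabla\cdot\mathbb{T}).
\end{equation*}
For the internal energy I would use $\rho e=\frac{1}{\gamma-1}p_m=\frac{1}{\gamma-1}\rho^\gamma e^S$ together with the continuity equation and the entropy equation $p_m(\partial_t S+u\cdot\nabla S)=N_r$; a direct computation converts these into
\begin{equation*}
\partial_t(\rho e)+\nabla\cdot(\rho e\,u)+p_m\nabla\cdot u=\tfrac{1}{\gamma-1}N_r,
\end{equation*}
and then I would insert the explicit expression for $N_r$.

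Adding the two identities and using $u\cdot\nabla p_m+p_m\nabla\cdot u=\nabla\cdot(p_m u)$, the crucial observation is that two pairs of terms cancel: the radiative momentum source $\tfrac1c\int\!\int K_a(I-\overline B(v))(u\cdot\Omega)$ cancels the $-\tfrac1c(u\cdot\Omega)$ piece carried inside $N_r$, and $u\cdot(\nabla\cdot\mathbb{T})$ cancels the matching term in $N_r$, so that only the leading part of the heating term survives. What remains is the clean balance law
\begin{equation*}
\partial_t E_m+\nabla\cdot\big((E_m+p_m)u\big)=\int_0^\infty\!\!\int_{S^{d-1}}K_a(I-\overline B(v))\,\dif\Omega\,\dif v+\nabla\cdot(u\mathbb{T})+\nabla\cdot(k(\theta)\nabla\theta).
\end{equation*}
Integrating over $\mathbb{R}^d$, the three divergence fluxes contribute nothing, and by $K_a\ge 0$ and $I\ge \overline B(v)$ (Lemma \ref{lemma:2.1}) I would conclude $\frac{d}{dt}\varepsilon(t)=\int_{\mathbb{R}^d}\int_0^\infty\int_{S^{d-1}}K_a(I-\overline B(v))\,\dif\Omega\,\dif v\,\dif x\ge 0$, whence $\varepsilon(t)\ge\varepsilon(0)$.

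I expect the real work to lie in two places. The first is the algebra of the third paragraph: one must check precisely that the $(u\cdot\Omega)/c$ radiative terms and the $u\cdot(\nabla\cdot\mathbb{T})$ viscous terms cancel, so that the residual source is the nonnegative absorption integral rather than a sign-indefinite quantity; this is exactly where the structure of the reformulation and the specific form of $N_r$ are used. The second, more routine, step is justifying that each flux integrates to zero: by (\ref{cd:1}) the density has bounded support, so $E_m$, $p_m u$ and $\theta$ (hence $k(\theta)\nabla\theta$) are compactly supported in $x$, while $u\mathbb{T}$ decays sufficiently fast by the Sobolev regularity of $u$ (recall $s>\tfrac d2+2$); thus $\int_{\mathbb{R}^d}\nabla\cdot(\cdots)\,\dif x=0$ in each case. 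The positivity that makes the sign work out rests entirely on Lemma \ref{lemma:2.1}.
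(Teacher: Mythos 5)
Your proof is correct, and it lands on exactly the identity the paper uses, namely $\frac{d}{\dif t}\varepsilon(t)=\int_{\mathbb{R}^d}\int_0^\infty\int_{S^{d-1}}K_a\,(I-\overline B(v))\,\dif\Omega\,\dif v\,\dif x\ge 0$, with the sign coming from Lemma \ref{lemma:2.1} and $K_a\ge 0$ — but you reach it by a genuinely different route. The paper never touches the momentum or entropy equations: it takes the energy equation of the \emph{original} system (\ref{eq:1.2}) to write $\partial_t E_m=-(\partial_t E_r+\nabla\cdot F_r)-\nabla\cdot((E_m+p_m)u)+\nabla\cdot(u\mathbb{T})+\nabla\cdot(k(\theta)\nabla\theta)$, then integrates the transport equation of (\ref{eq:2.1}) over $(v,\Omega)$ to replace $\partial_t E_r+\nabla\cdot F_r$ by $-\int_0^\infty\int_{S^{d-1}}K_a\,(I-\overline B(v))\,\dif\Omega\,\dif v$, and integrates in $x$. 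You instead stay entirely inside the reformulated system (\ref{eq:2.1}): kinetic energy from the momentum equation dotted with $u$, internal energy from the entropy equation plus the state relation, and the two cancellations (the $(u\cdot\Omega)/c$ radiative pieces, and $u\cdot(\nabla\cdot\mathbb{T})$ against the corresponding term of $N_r$) — which is precisely the reformulation of Section 2.1 run in reverse. What the paper's route buys is brevity (two lines of computation); what it costs is an implicit appeal to the equivalence of (\ref{eq:2.1}) with (\ref{eq:1.2}), since the lemma's hypothesis is that $(I,\rho,u,S)$ solves (\ref{eq:2.1}). Your route buys logical self-containedness — everything is derived from the system the solution is assumed to satisfy, and your cancellation algebra doubles as a consistency check of the reformulation — at the cost of more computation. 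Your closing justification that the flux terms integrate to zero (bounded support of $\rho$ at each time from (\ref{cd:1}), $H^s$ decay of $u$ and $I-\overline B(v)$) addresses a point the paper passes over in silence.
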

\begin{proof}
From the energy equation in (\ref{eq:1.2}), we obtain
\begin{equation*}
\begin{split}
 \partial_t E_m=&
   -(\partial_tE_r+\nabla \cdot F_r )-\nabla \cdot \left(( E_m+p_m)u\right)\\
&+\nabla \cdot (u T)+\nabla \cdot (k(\theta)\nabla \theta).
\end{split}
\end{equation*}
From the transport equation in (\ref{eq:2.1}), we have
$$\partial_t E_r+\nabla \cdot F_r=\int_0^\infty \int_{S^{d-1}}-K_a\cdot(I-\overline{B}(v))\text{d}\Omega \text{d}v.$$
Then, according to (\ref{eq:2.2*}) and Lemma \ref{lemma:2.1}, we have
\begin{equation*}
\begin{split}
\frac{d}{\text{d}t}\varepsilon(t)=\int_{\mathbb{R}^d}\int_0^\infty \int_{S^{d-1}} K_a\cdot(I-\overline{B}(v)) \text{d}\Omega \text{d}v \text{d}x\geq 0.
\end{split}
\end{equation*}
Then
$$ \varepsilon(t) \geq \varepsilon(0), \ \forall t\in [0,T).$$
\end{proof}

\begin{remark}\label{r22}
According to Lemma \ref{lemma:2.4}, we see that total material energy is not conserved because of the radiation effect. This phenomenon is different from the cases in general Euler equations and Navier-Stokes equations since the radiation energy and material energy will be mutually transformed in the process of flow. That is to say, the radiation effect significantly influences the mechanical properties of the fluid. However, the sum of total material energy and total radiation energy is conserved.
\end{remark}

\begin{lemma}
\label{lemma:2.5}Let $T>0$.
If
$$I(t,x,v,\Omega)\in L^2(\mathbb{R}^+ \times S^{d-1};C^1([0,T);H^s(\mathbb{R}^d))),\ (\rho,u,S)(t,x)\in C^1([0,T);H^s(\mathbb{R}^d))$$
is a smooth solution to Cauchy problem (\ref{eq:2.1})--(\ref{eq:2.2}) satisfying the conditions in Theorem \ref{th:2.1},
then
$$  Q_r(t)\geq 0, \quad \forall\ t\in [0,T),$$
where $Q_r(t)$ is defined by (\ref{cd:3}).

\end{lemma}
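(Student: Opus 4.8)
The plan is to rewrite $Q_r(t)$ as a single nonnegative integral and then invoke the finite-speed-of-propagation property already established in Lemma \ref{lemma:2.1}. First I would substitute the definitions (\ref{eq:radiation}) of $\widetilde{E}_r$ and $\widetilde{F}_r$ into the definition (\ref{cd:3}) of $Q_r(t)$. Setting $J:=I-\overline{B}(v)$, so that $x\cdot\widetilde{F}_r=\int_0^\infty\int_{S^{d-1}}J\,(x\cdot\Omega)\,\text{d}\Omega\,\text{d}v$ and $\widetilde{E}_r=\frac{1}{c}\int_0^\infty\int_{S^{d-1}}J\,\text{d}\Omega\,\text{d}v$, the two spatial integrals combine and the constants collect to give
$$
Q_r(t)=\frac{2}{c^2}\int_{\mathbb{R}^d}\int_0^\infty\int_{S^{d-1}} J(t,x,v,\Omega)\,\big(c(t+\kappa)-x\cdot\Omega\big)\,\text{d}\Omega\,\text{d}v\,\text{d}x .
$$
The integrand has factored into $J$ times the scalar weight $c(t+\kappa)-x\cdot\Omega$, and the whole lemma reduces to showing that this weight is nonnegative wherever $J>0$.

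Next I would control each factor separately using Lemma \ref{lemma:2.1}. That lemma gives $J\geq 0$ everywhere on $[0,T)\times\mathbb{R}^{d}\times\mathbb{R}^{+}\times S^{d-1}$, and moreover $J\equiv 0$ whenever $|x|\geq R_0+ct$; hence the integrand is supported in the finite influence cone $\{|x|<R_0+ct\}$. On this set, since $\Omega\in S^{d-1}$ forces $x\cdot\Omega\leq|x|$, we get $x\cdot\Omega<R_0+ct$. The decisive point is the calibration built into $\kappa=\max\{1,R_0/c\}$, which guarantees $c\kappa\geq R_0$ and therefore $R_0+ct\leq c(t+\kappa)$; chaining these inequalities yields $c(t+\kappa)-x\cdot\Omega>0$ exactly where $J$ is nonzero.

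Consequently the full integrand is a product of two nonnegative quantities, and integrating over $\mathbb{R}^d\times\mathbb{R}^+\times S^{d-1}$ gives $Q_r(t)\geq 0$ for all $t\in[0,T)$, as claimed. I should emphasize that there is no genuine analytic obstacle in this lemma: the entire argument rests on the algebraic recombination of the $\widetilde{F}_r$ and $\widetilde{E}_r$ terms into a single weighted integral and on the geometric observation that the finite influence domain from Lemma \ref{lemma:2.1}, together with the choice $c\kappa\geq R_0$, keeps the weight $c(t+\kappa)-x\cdot\Omega$ nonnegative precisely on the support of the correction intensity $J$. The only mild care needed is the bookkeeping of the constants $\tfrac{2}{c^2}$ and $\tfrac{2(t+\kappa)}{c}$ when merging the two integrals.
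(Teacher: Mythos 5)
Your proof is correct and follows essentially the same route as the paper: both recombine $\widetilde{E}_r$ and $\widetilde{F}_r$ into the single weighted integral $\frac{2}{c^2}\int (I-\overline{B}(v))\left(c(t+\kappa)-x\cdot\Omega\right)$, then use Lemma \ref{lemma:2.1} (nonnegativity of $I-\overline{B}(v)$ and its vanishing for $|x|\geq R_0+ct$) together with $c\kappa\geq R_0$ to conclude the weight is nonnegative on the support. The paper merely phrases the support step as an explicit splitting of the spatial integral into $|x|\leq R_0+ct$ and $|x|\geq R_0+ct$, which is the same observation.
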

\begin{proof}From (\ref{eq:radiation})-(\ref{cd:3}), it is easily to have
\begin{equation*}
\begin{split}
Q_r(t)=&\int_{\mathbb{R}^d}\int_0^\infty  \int_{s^{d-1}} \frac{2c(t+\kappa)-2x\cdot\Omega}{c^2}(I-\overline{B}(v))\text{d}\Omega \text{d}v \text{d}x\\
=&\int_{|x|\geq R_0+ct}\int_0^\infty  \int_{s^{d-1}} \frac{2c(t+\kappa)-2x\cdot\Omega}{c^2}(I-\overline{B}(v))\text{d}\Omega \text{d}v \text{d}x\\
&+\int_{|x|\leq R_0+ct}\int_0^\infty  \int_{s^{d-1}} \frac{2c(t+\kappa)-2x\cdot\Omega}{c^2}(I-\overline{B}(v))\text{d}\Omega \text{d}v \text{d}x.
\end{split}
\end{equation*}
Due to $\kappa=\max \{1,\frac{R_0}{c}\}$, when $|x|\leq R_0+ct$,
$$ 2c(t+\kappa)-2x\cdot\Omega \geq 2ct+2c\kappa-2R_0-2ct \geq 0.$$
From Lemma \ref{lemma:2.1}, we can easily get
\begin{equation*} Q_r(t) \geq 0, \quad \forall\ t\in [0,T).
\end{equation*}
\end{proof}
Now we start the proof of Theorem \ref{th:2.1} from the following key estimate on the total material pressure:
\begin{proposition}[\textbf{The behavior of material pressure}]\label{pro:2.1}\ \\
Let $T>0$ and $p_m(x,t)$ be the material pressure associated with the solution
$$I(t,x,v,\Omega)\in L^2(\mathbb{R}^+ \times S^{d-1};C^1([0,T);H^s(\mathbb{R}^d))),\ (\rho,u,S)(t,x)\in C^1([0,T);H^s(\mathbb{R}^d))$$
to  Cauchy problem (\ref{eq:2.1})--(\ref{eq:2.2}) satisfying the conditions in Theorem \ref{th:2.1},
 then we have
\begin{equation*}
P_m(t)\leq\left\{\begin{array}{llll}
\frac{\gamma-1}{2\kappa^{2-d(\gamma-1})}(t+\kappa)^{-(\gamma-1)d}I_r(0),  &1 < \gamma \leq 1+\frac{1}{d}, \\[10pt]
\frac{\gamma-1}{2}(\kappa(t+\kappa))^{-1}I_r(0),   & 1+\frac{1}{d}< \gamma < +\infty
\end{array}\right.
\end{equation*}
for all $t\in [0,T)$.
\end{proposition}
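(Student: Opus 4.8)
The plan is to derive a first--order ODE identity for the functional $I_r(t)$, combine it with a ``completing--the--square'' lower bound of $I_r$ in terms of $P_m$, and then extract the decay of $P_m$ from the monotonicity of $I_r(t)/(t+\kappa)^\theta$ for a suitably chosen exponent $\theta=\theta(\gamma)$. First I would compute the time derivatives of the mechanical moments. Mass conservation gives $m'(t)=0$; testing the continuity equation against $|x|^2$ gives $M'(t)=2F(t)$; and testing the momentum equation in (\ref{eq:2.1}) against $x$ yields, after integrating by parts (the viscous contribution equals $-(2\mu+d\lambda)\int\nabla\cdot u\,dx=0$, the convective term produces $\int\rho|u|^2\,dx$, and the pressure term produces $dP_m$),
\[
F'(t)=\int_{\mathbb{R}^d}\rho|u|^2\,dx+dP_m(t)+\frac1c\int_{\mathbb{R}^d}\int_0^\infty\int_{S^{d-1}}K_a(I-\overline{B}(v))(\Omega\cdot x)\,d\Omega\,dv\,dx.
\]
Using $\varepsilon(t)=\tfrac12\int\rho|u|^2\,dx+\tfrac1{\gamma-1}P_m(t)$ to eliminate $\int\rho|u|^2\,dx$, and recalling from Lemma \ref{lemma:2.4} that $\varepsilon'(t)=\int_{\mathbb{R}^d}\int_0^\infty\int_{S^{d-1}}K_a(I-\overline{B}(v))\,d\Omega\,dv\,dx$, the target is the identity
\[
I_r'(t)=-2(t+\kappa)\Big(d-\frac{2}{\gamma-1}\Big)P_m(t)+Q_r(t).
\]

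The crux is that all radiation source terms cancel, and verifying this is the main obstacle. I would differentiate $Q_r(t)=\frac{2}{c^2}\int_{\mathbb{R}^d}\int_0^\infty\int_{S^{d-1}}(c(t+\kappa)-x\cdot\Omega)(I-\overline{B}(v))\,d\Omega\,dv\,dx$, substitute $\partial_t(I-\overline{B}(v))=-c\,\Omega\cdot\nabla(I-\overline{B}(v))-cK_a(I-\overline{B}(v))$ from the transport equation, and integrate by parts in $x$. Here Lemma \ref{lemma:2.1} is essential: since $I\equiv\overline{B}(v)$ for $|x|\ge R_0+ct$, the $x$--integrals are over a bounded set so no boundary terms arise; moreover $\Omega\cdot\nabla(c(t+\kappa)-x\cdot\Omega)=-1$, so the transport part of $Q_r'$ cancels exactly the term produced when $\partial_t$ hits the explicit factor $c(t+\kappa)$. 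One is left with $Q_r'(t)=-\frac{2}{c}\int_{\mathbb{R}^d}\int_0^\infty\int_{S^{d-1}}(c(t+\kappa)-x\cdot\Omega)K_a(I-\overline{B}(v))\,d\Omega\,dv\,dx$, and then the three remaining radiation contributions to $I_r'$---namely $-2(t+\kappa)$ times the radiation momentum, $2(t+\kappa)^2\varepsilon'$, and $(t+\kappa)Q_r'$---sum to zero identically. This is precisely why $Q_r$ is defined as it is.

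Next I would establish the lower bound. Writing $J(t):=\int_{\mathbb{R}^d}\rho|x-(t+\kappa)u|^2\,dx=M(t)-2(t+\kappa)F(t)+(t+\kappa)^2\int\rho|u|^2\,dx\ge 0$ and using $2(t+\kappa)^2\varepsilon(t)=(t+\kappa)^2\int\rho|u|^2\,dx+\frac{2(t+\kappa)^2}{\gamma-1}P_m(t)$ together with $Q_r\ge 0$ from Lemma \ref{lemma:2.5}, I obtain the decomposition $I_r(t)=J(t)+\frac{2(t+\kappa)^2}{\gamma-1}P_m(t)+(t+\kappa)Q_r(t)$, and in particular
\[
I_r(t)\ge \frac{2(t+\kappa)^2}{\gamma-1}P_m(t),\qquad\text{i.e.}\qquad P_m(t)\le\frac{\gamma-1}{2(t+\kappa)^2}I_r(t).
\]

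Finally I would study $\Phi_\theta(t):=I_r(t)/(t+\kappa)^\theta$. Using the identity for $I_r'$ and the decomposition for $I_r$, its numerator derivative is
\[
(t+\kappa)I_r'(t)-\theta I_r(t)=-\theta J(t)+(1-\theta)(t+\kappa)Q_r(t)-2(t+\kappa)^2\Big(d+\frac{\theta-2}{\gamma-1}\Big)P_m(t).
\]
For $1<\gamma\le 1+\frac1d$ I would choose $\theta=2-d(\gamma-1)\in[1,2)$, which kills the $P_m$--coefficient and leaves $-\theta J+(1-\theta)(t+\kappa)Q_r\le 0$ since $\theta>0$ and $1-\theta\le 0$; for $\gamma>1+\frac1d$ I would choose $\theta=1$, so the $Q_r$ term drops and the $P_m$--coefficient $\frac{d(\gamma-1)-1}{\gamma-1}$ is positive, again giving a nonpositive numerator. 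In either case $\Phi_\theta$ is non-increasing, whence $I_r(t)\le(\frac{t+\kappa}{\kappa})^\theta I_r(0)$. Inserting this into the lower bound gives $P_m(t)\le\frac{\gamma-1}{2}\kappa^{-\theta}(t+\kappa)^{\theta-2}I_r(0)$, which is exactly the asserted estimate with $\theta=2-d(\gamma-1)$ in the first regime and $\theta=1$ in the second. Throughout, the integrations by parts are justified by the finiteness of $m(0),M(0),|F(0)|,\varepsilon(0)$, the $H^s$--decay of $(\rho,u,S)$, and the compact $x$--support of $I-\overline{B}(v)$ from Lemma \ref{lemma:2.1}.
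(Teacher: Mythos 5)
Your proof is correct and follows essentially the same route as the paper: the same identity $I_r'(t)=\frac{2(2-d(\gamma-1))}{\gamma-1}(t+\kappa)P_m(t)+Q_r(t)$, the same nonnegative decomposition of $I_r$, and the same Gronwall-type integration with exponents $2-d(\gamma-1)$ and $1$ (your $\Phi_\theta$ monotonicity is just the paper's ODE inequality with $\eta=\max\{1,\frac{1}{2-d(\gamma-1)}\}$ repackaged, and your $\theta=1$ case absorbs the paper's separate treatment of $\gamma>1+\frac{2}{d}$). Your explicit verification that the radiation source terms cancel via differentiating $Q_r$ along the transport equation is a welcome elaboration of a step the paper only sketches, but it is the same computation.
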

\begin{proof}
The definition  of $I_r(t)$ in (\ref{cd:3}) and direct calculations lead to
\begin{equation}\label{def}
\begin{split}
I_r(t)
&=\int_{\mathbb{R}^d} |x-(t+\kappa)u|^{2}\rho \text{d}x+\frac{2}{\gamma-1}(t+\kappa)^{2}P_m(t)+(t+\kappa)Q_r(t).
\end{split}
\end{equation}
 Integrating by parts and using system (\ref{eq:2.1}), we get
\begin{equation}\label{eq:3.1}
\begin{split}
\frac{d}{\text{d}t}I_r(t)=&-2(t+\kappa)\int_{\mathbb{R}^d} (\rho |u|^2+dp_m ) \text{d}x
+4(t+\kappa)\int_{\mathbb{R}^d} E_m \text{d}x\\
&+2(t+\kappa)\int_{\mathbb{R}^d}x\cdot (\nabla \cdot \widetilde{P}_r)\text{d}x +4(t+\kappa)\int_{\mathbb{R}^d}\widetilde{E}_r \text{d}x
-\frac{2}{c^2}\int_{\mathbb{R}^d} x\cdot \widetilde{F}_r\text{d}x.
\end{split}
\end{equation}
Noticing that $\widetilde{P}_r=(\widetilde{P}^{ij}_r)_{d\times d}$ is a tensor of  order $d$,
where
$$
\widetilde{P}^{ij}_r=\frac{1}{c}\int_0^{\infty}\int_{S^{d-1}}(I-\overline{B}(v))\Omega_{i}\Omega_{j}\text{d}\Omega \text{d}v,
$$
we have
\begin{equation*}
\begin{split}
\nabla \cdot (x\cdot \widetilde{P}_r )
&=\nabla \cdot \left(\sum_{i=1}^{d}x_{i}\widetilde{P}^{i1}_r,\sum_{i=1}^{d}x_{i}\widetilde{P}^{i2}_r,....,\sum_{i=1}^{d}x_{i}\widetilde{P}^{id}_r\right)^T
=\sum_{i=1}^{d}\sum_{j=1}^{d}\left( x_{i}\frac{\partial{\widetilde{P}^{ij}_r}}{\partial{x_j}}+\delta_{ij}\widetilde{P}^{ij}_r\right)\\
&=x\cdot (\nabla \cdot \widetilde{P}_r) +\sum_{i=1}^{d}\widetilde{P}^{ii}_r
=x\cdot (\nabla \cdot \widetilde{P}_r)+\widetilde{E}_r,
\end{split}
\end{equation*}
where $\delta_{ij}$ is the Konecker symbol satisfying $\delta_{ij}=1$, for $ i=j$; $\delta_{ij}=0$, for  $ i\neq j $.

Noting that
$$
E_m=\frac{1}{2}\rho|u|^2+\rho e,\ \rho e=\frac{p_m}{\gamma-1}, $$
from (\ref{eq:3.1}) and integrating by parts, we have
\begin{equation}\label{cd:4}
\begin{split}
\frac{d}{\text{d}t}I_r(t)=&\frac{2}{\gamma-1}(2-d(\gamma-1))(t+\kappa)P_m(t)\\
&+\int_{\mathbb{R}^d}\int_0^\infty  \int_{s^{d-1}} \frac{2c(t+\kappa)-2x\cdot\Omega}{c^2}(I-\overline{B}(v))\text{d}\Omega \text{d}v \text{d}x\\
=&\frac{2} {\gamma-1} (2-d(\gamma-1))(t+\kappa) P_m(t)+Q_r(t).
\end{split}
\end{equation}
From the definition of $ I_r(t) $, we know that
\begin{equation*}
\begin{split}
\frac{2-d(\gamma-1)}{t+\kappa}I_r(t)=& \frac{2}{\gamma-1}(2-d(\gamma-1))\int_{\mathbb{R}^d} |x-u(t+\kappa)|^{2}\rho \text{d}x\\
&+\frac{2} {\gamma-1} (2-d(\gamma-1))(t+\kappa)P_m(t)+(2-d(r-1))Q_r(t).
\end{split}
\end{equation*}
According to  Lemmas \ref{lemma:2.4}-\ref{lemma:2.5},  when $1< \gamma \leq 1+\frac{2}{d}$, from (\ref{cd:4}) we have
\begin{equation}\label{eq:3.3f}
\frac{d}{\text{d}t}I_r(t) \leq \frac{2-d(\gamma-1)}{t+\kappa}\eta I_r(t).
\end{equation}
where $\eta=\max\big\{1,\frac{1}{2-d(\gamma-1)}\big\}$.
Integrating (\ref{eq:3.3f}) yields
\begin{equation}\label{eq:3.3ff}
I_r(t) \leq I_r(0) \Big(\frac{t+\kappa}{\kappa}\Big)^{\eta(2-d(\gamma-1))}.
\end{equation}
If $1<\gamma \leq1+\frac{1}{d}$, then $\eta=1$, from (\ref{eq:3.3f}) we get the first estimate
\begin{equation}\label{eq:3.4q}
P_m(t)\leq \frac{\gamma-1}{2\kappa^{2-d(\gamma-1)}}(t+\kappa)^{-(\gamma-1)d}I_r(0).
\end{equation}
If $1+\frac{1}{d}< \gamma \leq 1+\frac{2}{d}$, then $\eta=\frac{1}{2-d(\gamma-1)}$, from (\ref{eq:3.3f}) we get the second estimate
\begin{equation}\label{eq:3.4qq}
P_m(t)\leq \frac{\gamma-1}{2}(\kappa(t+\kappa))^{-1}I_r(0).
\end{equation}
If  $\gamma> 1+\frac{2}{d}$, due to $2-d(\gamma-1)< 0 $, from (\ref{cd:4}) we have
\begin{equation*}
\begin{split}
\frac{d}{\text{d}t}I_r(t) \leq Q_r(t)
\leq \frac{1}{t+\kappa}I_r.
\end{split}
\end{equation*}
Solving this inequality, we have
\begin{equation*}
I_r (t)\leq \frac{I_r(0)}{\kappa}(t+\kappa),
\end{equation*}
and thus
\begin{equation}\label{eq:3.4qqq}
P_m(t)\leq \frac{\gamma-1}{2}(\kappa(t+\kappa))^{-1}I_r(0).
\end{equation}
Then from (\ref{eq:3.4q})-(\ref{eq:3.4qqq}), we get the desired estimates.\\

\end{proof}
\begin{remark}We emphasize that
proposition \ref{pro:2.1} holds without the additional conditions (\ref{cd:1}) and (\ref{cd:2}) as long as $M(t)$ is well-defined.
\end{remark}
Now we prove  Theorem \ref{th:2.1}.
\begin{proof} \textbf{Case 1}: $1 < \gamma < 1+\frac{2}{d}$.\\
From the proof of Proposition \ref{pro:2.1},
we have
\begin{equation}\label{eq:3.5}
I_r(0) \geq \frac{2\kappa^{\eta(2-d(\gamma-1))}}{\gamma-1}(t+\kappa)^{2-\eta(2-d(\gamma-1))}\int_{\mathbb{R}^d} p_m \text{d}x.
\end{equation}

From Jensen's inequality, we have
\begin{equation*}
\begin{split}
I_r(0) &\geq  \frac{2\kappa^{\eta(2-d(\gamma-1))}}{\gamma-1}(t+\kappa)^{(2-\eta(2-d(\gamma-1)))} e^{\underline{S}} |\supp_x \rho(t,x)|  \int_{\supp_x \rho(t,x)} (\rho(t,x))^\gamma \frac{\text{d}x}{|\supp_x \rho(t,x)|}\\
&\geq \frac{2\kappa^{\eta(2-d(\gamma-1))}}{\gamma-1}(t+\kappa)^{(2-\eta(2-d(\gamma-1)))} e^{\underline{S}} |\supp_x \rho(t,x)| ^{1-\gamma}m(0)^{\gamma}
\\
&\geq \frac{2\kappa^{\eta(2-d(\gamma-1))}}{\gamma-1}e^{\underline{S}}L^{(1-\gamma)d}|B_1|^{1-\gamma}(t+\kappa)^{(2-\eta(2-d(\gamma-1)))+d\alpha(1-\gamma)}m(0)^{\gamma}
\\
&\equiv L_\gamma(t+\kappa)^{(2-\eta(2-d(\gamma-1))+d\alpha(1-\gamma))},
\end{split}
\end{equation*}
where $|\supp_x \rho(t,x)|$ is the volume of $\supp_x \rho(t,x)$, $L_\gamma=\frac{2\kappa^{\eta(2-d(\gamma-1))}}{\gamma-1}e^{\underline{S}}L^{(1-\gamma)d}|B_1|^{1-\gamma}m(0)^{\gamma}$, and we used the fact that

\begin{equation}\label{eq:3.6}
\int_{\supp_x \rho(t,x)} \rho(t,x) \text{d}x=\int_{\mathbb{R}^d} \rho(t,x) \text{d}x=\int_{\mathbb{R}^d} \rho_0(x) \text{d}x= m(0),
\end{equation}
which can be obtained easily from
\begin{equation}\label{eq:3.7}
\partial_t \rho(t,x)=\int_{\mathbb{R}^d} -\nabla\cdot(\rho u)\text{d}x=0.
\end{equation}
If $1<\gamma \leq1+\frac{1}{d}$, then $\eta=1$, we have
\begin{equation}\label{ddd1}
I_r(0) \geq L_\gamma(t+\kappa)^{d(1-\alpha)(\gamma-1)}, \ L_\gamma=L_1=\frac{2\kappa^{2-d(\gamma-1)}}{\gamma-1}e^{\underline{S}}L^{(1-\gamma)d}|B_1|^{1-\gamma}m(0)^{\gamma}.
\end{equation}
If $1+\frac{1}{d}< \gamma \leq 1+\frac{2}{d}$, then $\eta=\frac{1}{2-d(\gamma-1)}$, we have
\begin{equation}\label{ddd2}
I_r(0) \geq L_\gamma(t+\kappa)^{(1+d\alpha(1-\gamma))}, \  L_\gamma=L_2=\frac{2\kappa}{\gamma-1}e^{\underline{S}}L^{(1-\gamma)d}|B_1|^{1-\gamma}m(0)^{\gamma}.
\end{equation}
\textbf{Case 2}: $ \gamma \geq 1+\frac{2}{d}$. We have
\begin{equation}\label{ddd3}
\begin{split}
I_r(0) \geq & \frac{2\kappa}{\gamma-1}(t+\kappa)\int_{\supp_x \rho(t,x)} (\rho(t,x))^\gamma \text{d}x
\geq  L_2(t+\kappa)^{(1+d\alpha(1-\gamma))}.
\end{split}
\end{equation}
(\ref{eq:2.9}) follows immediately from (\ref{ddd1})-(\ref{ddd3}).\\


\end{proof}

\subsection{Blow up condition:  the material entropy does not have lower bound}\ \\
In many cases, if we consider the heat conduction, then the lower bound of entropy is not easy to get. Next, we will give a blow-up result without the condition that the entropy is bounded from below. Let
\begin{equation*}
R(t)=\inf\{r|\supp_x \rho(t,x) \subseteq B_r\}.
\end{equation*}
\begin{theorem}\label{th:2.0000}
Let $T>0$ and
\begin{equation}
\label{eq:1.5jk}
  \mu \geq 0, \quad \lambda+\frac{2}{d}\mu \geq 0,\quad  k(\theta)\geq 0.\end{equation}
Suppose that
\begin{equation*}I(t,x,v,\Omega)\in L^2(\mathbb{R}^+ \times S^{d-1};C^1([0,T);H^s(\mathbb{R}^d))),\ (\rho,u,S)(t,x)\in C^1([0,T);H^s(\mathbb{R}^d))\end{equation*}
is a smooth solution to the Cauchy problem (\ref{eq:2.1})-(\ref{eq:2.2}) satisfying  (\ref{eq:2u})-(\ref{eq:2.2*}) and
\begin{equation}\label{bbbb}
I_0\equiv \overline{B}(v), \  \text{for} \quad  x\cdot \Omega \leq 0.
\end{equation}
Assume further that there exist constants $\alpha$ $(0 \leq \alpha <1)$ and  $\widetilde{L}(> 0)$ such that
\begin{equation}\label{cd:11}
R(t)\leq \widetilde{L}(t+1)^\alpha, \  \forall \  t \in [0,T).
\end {equation}
Then
\begin{equation}\label{eq:2.91}
T< +\infty.
\end{equation}
That is to say, any smooth solution to the Cauchy problem (\ref{eq:2.1})--(\ref{eq:2.2}) has to blow up in finite time as long as  (\ref{bbbb}) and (\ref{cd:11}) hold.
\end{theorem}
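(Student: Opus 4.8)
The plan is to keep the functional $I_r(t)$ from (\ref{cd:3}) and to replace the role played in Theorem \ref{th:2.1} by the entropy lower bound (\ref{cd:2}) --- now absent --- with the total--energy lower bound of Lemma \ref{lemma:2.4}. The whole argument is a competition of growth rates inside $I_r(t)$. On one side, the computation already carried out for Proposition \ref{pro:2.1} delivers an \emph{upper} bound
\[
I_r(t)\le C_\gamma\,(t+\kappa)^{\beta},\qquad \beta=\max\{1,\,2-d(\gamma-1)\},
\]
valid for every $\gamma>1$ (this is (\ref{eq:3.3ff}) for $1<\gamma\le 1+\frac{2}{d}$ and the separate estimate for $\gamma>1+\frac{2}{d}$), and one notes $\beta<2$ whenever $\gamma>1$. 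On the other side, the energy will force a genuinely quadratic \emph{lower} bound on $I_r(t)$; since $\beta<2$, the two are incompatible for large $t$, which yields $T<+\infty$.

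For the lower bound I would start from the identity (\ref{def}), written as
\[
I_r(t)=M(t)-2(t+\kappa)F(t)+2(t+\kappa)^2\varepsilon(t)+(t+\kappa)Q_r(t).
\]
Discarding the nonnegative terms $M(t)\ge 0$ and $(t+\kappa)Q_r(t)\ge 0$ (Lemma \ref{lemma:2.5}) and using $\varepsilon(t)\ge\varepsilon(0)>0$ (Lemma \ref{lemma:2.4}) gives $I_r(t)\ge 2(t+\kappa)^2\varepsilon(t)-2(t+\kappa)|F(t)|$. Everything now hinges on showing that the indefinite radial--momentum term cannot absorb the quadratic energy term. Mass conservation (\ref{eq:3.6})--(\ref{eq:3.7}) together with the sublinear support growth (\ref{cd:11}) is what makes this work: since $\supp_x\rho(t,\cdot)\subseteq B_{R(t)}$,
\[
M(t)=\int_{\mathbb{R}^d}\rho|x|^2\,\text{d}x\le R(t)^2 m(0)\le \widetilde{L}^2 m(0)(t+1)^{2\alpha},
\]
and by Cauchy--Schwarz, using $\int_{\mathbb{R}^d}\rho|u|^2\,\text{d}x\le 2\varepsilon(t)$ (because $\rho e=p_m/(\gamma-1)\ge 0$),
\[
|F(t)|\le\Big(\int_{\mathbb{R}^d}\rho|u|^2\,\text{d}x\Big)^{1/2}M(t)^{1/2}\le (2\varepsilon(t))^{1/2}\,\widetilde{L}\,m(0)^{1/2}(t+1)^{\alpha}.
\]

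Rearranging the lower bound as $2(t+\kappa)^2\varepsilon(t)\le I_r(t)+2(t+\kappa)|F(t)|$, inserting the upper bound $I_r(t)\le C_\gamma(t+\kappa)^{\beta}$, and absorbing the cross term through Young's inequality $2(t+\kappa)|F(t)|\le (t+\kappa)^2\varepsilon(t)+2\widetilde{L}^2 m(0)(t+1)^{2\alpha}$, I arrive at
\[
(t+\kappa)^2\varepsilon(0)\ \le\ (t+\kappa)^2\varepsilon(t)\ \le\ C_\gamma(t+\kappa)^{\beta}+2\widetilde{L}^2 m(0)(t+1)^{2\alpha}.
\]
The left side grows like $t^2$, whereas on the right $\beta<2$ and $2\alpha<2$; the inequality therefore fails for all $t$ past an explicit threshold depending only on $\gamma,\alpha,\widetilde{L},\kappa,m(0),\varepsilon(0)$ and $I_r(0)$. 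This contradiction establishes (\ref{eq:2.91}).

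The crux --- and the step I expect to be the main obstacle --- is the control of the radial momentum $F(t)$, whose sign is not fixed (in Theorem \ref{th:2.1} the positivity of the pressure did the analogous job). The estimate survives only because the uniform energy floor $\varepsilon(t)\ge\varepsilon(0)$ is paired with a second moment growing strictly slower than $(t+\kappa)^2$, which is exactly the content of $\alpha<1$ in (\ref{cd:11}); this is what stops the kinetic energy from rearranging the mass fast enough to neutralize the quadratic energy term. The directional hypothesis (\ref{bbbb}) enters through Lemma \ref{lemma:2.1}, which yields $I\equiv\overline{B}(v)$ on $\{x\cdot\Omega\le 0\}$; this keeps the radiation momentum source nonnegative, namely $\frac{1}{c}\int_{\mathbb{R}^d}\int_0^\infty\int_{S^{d-1}}(\Omega\cdot x)K_a(I-\overline{B}(v))\,\text{d}\Omega\,\text{d}v\,\text{d}x\ge 0$, so that an equivalent derivation through the growth of $\frac{d}{\text{d}t}F(t)$ also goes through and, in particular, the conclusion persists under the one-dimensional reduction of Section $5$.
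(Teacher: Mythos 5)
Your guiding mechanism --- a quadratic-in-$t$ growth forced by $\varepsilon(t)\ge\varepsilon(0)$ colliding with the sub-quadratic cap coming from $\alpha<1$ --- is exactly the paper's, and your Cauchy--Schwarz/Young treatment of $F(t)$ is fine in itself. The genuine gap is that both pillars of your inequality chain, namely $Q_r(t)\ge 0$ (Lemma \ref{lemma:2.5}) and the upper bound $I_r(t)\le C_\gamma(t+\kappa)^{\beta}$ (Proposition \ref{pro:2.1}, whose own proof discards $Q_r$ only because $Q_r\ge 0$), are established in the paper only under hypothesis (\ref{eq:2.2rr}), i.e. $I_0\equiv\overline{B}(v)$ for $|x|\ge R_0$. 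That condition is what yields, via Lemma \ref{lemma:2.1}, $I\equiv\overline{B}(v)$ for $|x|\ge R_0+ct$, and this is the whole point of the splitting in the proof of Lemma \ref{lemma:2.5}: the integrand of $Q_r$ vanishes outside $\{|x|\le R_0+ct\}$, while inside that set the weight $2c(t+\kappa)-2x\cdot\Omega$ is nonnegative because $\kappa\ge R_0/c$. Theorem \ref{th:2.0000} does \emph{not} assume (\ref{eq:2.2rr}); it assumes (\ref{bbbb}) instead. Under (\ref{bbbb}) alone, Lemma \ref{lemma:2.1} only clears the region $x\cdot\Omega\le ct$, so $I-\overline{B}(v)$ may carry mass at arbitrarily large $x\cdot\Omega$, precisely where $2c(t+\kappa)-2x\cdot\Omega<0$: already at $t=0$ one can take $I_0-\overline{B}(v)$ concentrated where $x\cdot\Omega\gg c\kappa$ (consistent with (\ref{eq:2u}), (\ref{eq:2.2*}), (\ref{bbbb})) and get $Q_r(0)<0$; worse, without spatial compactness the $x$-weighted integrals defining $Q_r$ and $I_r$ need not even converge, since (\ref{eq:2u}) provides no such decay. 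A telling symptom of the mismatch is that your core argument never uses (\ref{bbbb}), the one hypothesis that distinguishes this theorem, while silently invoking one that is absent.

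The repair is the remark you relegate to your last sentence, and it is the paper's actual proof: abandon $I_r$ and $Q_r$ entirely and work only with $M(t)$, $F(t)$, $\varepsilon(t)$, which are integrals over $\supp_x\rho(t,\cdot)$ and hence finite by (\ref{cd:11}). The continuity and momentum equations give $\frac{d}{\text{d}t}M(t)=2F(t)$ and
\begin{equation*}
\frac{d}{\text{d}t}F(t)=\int_{\mathbb{R}^d}(\rho|u|^2+dp_m)\,\text{d}x
+\frac1c\int_{\mathbb{R}^d}\int_0^\infty\int_{S^{d-1}}K_a\cdot(I-\overline{B}(v))\,x\cdot\Omega\,\text{d}\Omega\,\text{d}v\,\text{d}x,
\end{equation*}
and it is here that (\ref{bbbb}) earns its keep: by Lemma \ref{lemma:2.1} the radiation term vanishes on $\{x\cdot\Omega<0\}$ and is nonnegative on $\{x\cdot\Omega\ge 0\}$, so $\frac{d}{\text{d}t}F(t)\ge\int(\rho|u|^2+dp_m)\,\text{d}x$. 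Writing $\int\rho|u|^2\,\text{d}x=2\varepsilon(t)-\frac{2}{\gamma-1}\int p_m\,\text{d}x$ and splitting the cases $1<\gamma<1+\frac2d$ and $\gamma\ge 1+\frac2d$ according to the sign of $d-\frac{2}{\gamma-1}$, Lemma \ref{lemma:2.4} gives $F(t)\ge F(0)+c_0\varepsilon(0)t$ with $c_0=\min\{d(\gamma-1),2\}>0$, hence $M(t)\ge M(0)+2F(0)t+c_0\varepsilon(0)t^2$, contradicting $M(t)\le\widetilde{L}^2 m(0)(1+t)^{2\alpha}$ for large $t$ since $\alpha<1$. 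If you insist on your $I_r$-based route, you must add (\ref{eq:2.2rr}) to the hypotheses; your argument then closes, but it proves a different statement from Theorem \ref{th:2.0000} (one which, as your own computation shows, would not need (\ref{bbbb}) at all).
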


\begin{proof}
We will use those physical quantities defined in Section 2.1. From the continuity equation and integrating by parts, we get
\begin{equation}\label{eq:3.8}
\frac{d}{\text{d}t}M(t)=2F(t).
\end{equation}

From the momentum equations and integrating by parts, we also get
\begin{equation}\label{eq:3.9}
\frac{d}{\text{d}t}F(t)=\int_{\mathbb{R}^d} (\rho|u|^2 +d p_m ) \text{d}x+\frac{1}{c}\int_{\mathbb{R}^d}\int_0^\infty \int_{S^{d-1}} K_a\cdot(I-\overline{B}(v)) x\cdot\Omega  \text{d}\Omega \text{d}v\text{d}x.
\end{equation}

From Lemma \ref{lemma:2.1} and (\ref{bbbb}), we know that
\begin{equation}\label{eq:3.10}
\begin{split}
&\frac{1}{c}\int_{\mathbb{R}^d}\int_0^\infty \int_{S^{d-1}} K_a\cdot(I-\overline{B}(v))x\cdot\Omega \text{d}\Omega  \text{d}v \text{d}x \\
=&\frac{1}{c}\int_{\mathbb{R}^d}\int_0^\infty \int_{S^{d-1}\bigcap \{x\cdot \Omega \geq 0\}}  K_a\cdot(I-\overline{B}(v))x\cdot\Omega \text{d}\Omega  \text{d}v\text{d}x\\
&+\frac{1}{c}\int_{\mathbb{R}^d}\int_0^\infty \int_{S^{d-1}\bigcap \{x\cdot \Omega < 0\}}  K_a\cdot(I-\overline{B}(v))x\cdot\Omega \text{d}\Omega  \text{d}v\text{d}x
\geq 0.
\end{split}
\end{equation}
Combining (\ref{eq:3.9}) and (\ref{eq:3.10}), we arrive at
\begin{equation} \label{eq:3.11}\frac{d}{\text{d}t}F(t)\geq \int_{\mathbb{R}^d} (\rho|u|^2 +d p_m ) \text{d}x \geq 0. \end{equation}
Integrating (\ref{eq:3.8}) and (\ref{eq:3.11}) over $[0,t]$, respectively, we obtain
\begin{equation}\label{eq:3.12}
M(t)=M(0)+2\int_{0}^{t} F(\tau)\text{d}\tau,
\end{equation}
and
\begin{equation}\label{eq:3.13}
F(t)\geq F(0)+\int_{0}^{t}\int_{\mathbb{R}^d}\rho |u|^2(\tau,x)\text{d}x \text{d}\tau+d\int_{0}^{t}\int_{\mathbb{R}^d} p_m(\tau,x)\text{d}x\text{d}\tau.
\end{equation}

In the case $1< \gamma < 1+\frac{2}{d}$, using the definition of $E_m$, together with Lemma \ref{lemma:2.4} , we  have
\begin{equation}\label{eq:3.1533}
\begin{split}
F(t)&\geq F(0)+2\int_{0}^{t} \varepsilon (\tau)\text{d}\tau+\Big(d-\frac{2}{\gamma-1}\Big)\int_{0}^{t} \int_{\mathbb{R}^d}p_m(\tau,x) \text{d}x \text{d}\tau\\
&= F(0)+2\int_{0}^{t} \varepsilon (\tau)\text{d}\tau+\Big(d-\frac{2}{\gamma-1}\Big)\int_{0}^{t}\int_{\mathbb{R}^d}(\gamma-1)(\rho e)(\tau,x)\text{d}x\text{d}\tau\\
&\geq F(0)+2\int_{0}^{t} \varepsilon (\tau)\text{d}\tau+\Big(d-\frac{2}{\gamma-1}\Big)\int_{0}^{t}(\gamma-1) \varepsilon (\tau)\text{d}\tau\\
&\geq F(0)+d(\gamma-1)\int_{0}^{t} \varepsilon (\tau)\text{d}\tau  \geq F(0)+d(\gamma-1) \varepsilon (0)t.
\end{split}
\end{equation}
Substituting (\ref{eq:3.1533}) into (\ref{eq:3.12}), we have
\begin{equation}\label{eq:3.1733}
M(t)\geq M(0)+2 F(0)t+d(\gamma-1)\varepsilon(0)t^2.
\end{equation}
Using condition (\ref{cd:11}), it yields
\begin{equation}\label{eq:3.1833}
M(t)=\int_{\mathbb{R}^d} \rho |x|^2 \text{d}x \leq  R^{2}(t)m(0) \leq \widetilde{L}^{2}(1+t)^{2\alpha}m(0).
\end{equation}
Combining (\ref{eq:3.1733}) and (\ref{eq:3.1833}), we have
\begin{equation*}
  \widetilde{L}^{2}(t+1)^{2\alpha}\geq M(0)+2 F(0)t+d(\gamma-1)\varepsilon(0)t^2.
\end{equation*}
Because of $0 \leq \alpha < 1 $, we must have
$$ T <+\infty.$$
In the case $1+\frac{2}{d}\leq \gamma < +\infty$, using the definition of $E_m$, together with Lemma \ref{lemma:2.4} , we  have
\begin{equation}\label{eq:3.15}
\begin{split}
F(t)&\geq F(0)+\int_{0}^{t} \int_{\mathbb{R}^d} \rho |u|^2(\tau,x)\text{d}x \text{d}\tau+ d\int_{0}^{t} \int_{\mathbb{R}^d}(\gamma-1)\rho e (\tau,x) \text{d}x\text{d}\tau\\
&=F(0)+2\int_{0}^{t} \varepsilon (\tau)\text{d}\tau+\Big(d-\frac{2}{\gamma-1}\Big)\int_{0}^{t} \int_{\mathbb{R}^d}p_m(\tau,x)\text{d}x\text{d}\tau\\
&\geq F(0)+2\int_{0}^{t} \varepsilon (0)\text{d}\tau+\Big(d-\frac{2}{\gamma-1}\Big)\int_{0}^{t}\int_{\mathbb{R}^d} p_m(\tau,x)\text{d}x\text{d}\tau\\
&\geq F(0)+2\varepsilon (0)t.
\end{split}
\end{equation}
Substituting (\ref{eq:3.15}) into (\ref{eq:3.12}), we have
\begin{equation}\label{eq:3.17}
M(t)\geq M(0)+2 F(0)t+2\varepsilon(0)t^2.
\end{equation}
Combining (\ref{eq:3.1833}) and (\ref{eq:3.17}), we have
\begin{equation}
  \widetilde{L}^{2}(t+\kappa)^{2\alpha}\geq M(0)+2 F(0)t+2\varepsilon(0)t^2.
\end{equation}
Because of $0 \leq \alpha < 1 $,  we must have
$$ T <+\infty. $$
\end{proof}

\subsection{Blow up condition:  the velocity  fastly decays}\ \\
  The next blow-up condition tells us that there does not exist global smooth solution with the velocity $u$ with a little bit fast decay as follows as time goes on:

\begin{theorem}\label{th:2.20}
Let
\begin{equation}
\label{eq:1.5jk}
  \mu \geq 0, \quad \lambda+\frac{2}{d}\mu \geq 0,\ k(\theta)\geq 0.\end{equation}
There is no global smooth solution
$$ I(t,x,v,\Omega)\in L^2(\mathbb{R}^+ \times S^{d-1};C^1([0,T);H^s(\mathbb{R}^d)),\ (\rho,u,S)(t,x)\in C^1([0,+\infty);H^s(\mathbb{R}^d))$$
satisfying
\begin{equation}\label{eq:2.15}
\limsup_{t\rightarrow +\infty} \left|\left| \frac{t}{1+|x|^2} u(t,x)\cdot x \right|\right|_{L^\infty}<1
\end{equation}
to the Cauchy problem (\ref{eq:2.1})--(\ref{eq:2.2}) satisfying (\ref{eq:2u})-(\ref{eq:2.2*}) and (\ref{bbbb}).
\end {theorem}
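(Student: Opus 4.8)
The plan is to argue by contradiction: I would suppose a global smooth solution obeying (\ref{eq:2.15}) exists and extract from it two incompatible growth rates for the second moment $M(t)$. First I would recycle the structural identities already derived in the proof of Theorem \ref{th:2.0000}, which remain available because the present hypotheses again include (\ref{bbbb}); by Lemma \ref{lemma:2.1} this forces $I \equiv \overline{B}(v)$ whenever $x\cdot\Omega \leq 0$, so the radiation momentum source stays nonnegative exactly as in (\ref{eq:3.10}), and the integrations by parts of (\ref{eq:3.8}) and (\ref{eq:3.11}) give
$$\frac{d}{\text{d}t}M(t) = 2F(t), \qquad \frac{d}{\text{d}t}F(t) \geq \int_{\mathbb{R}^d}(\rho|u|^2 + dp_m)\,\text{d}x \geq 0.$$
Using $E_m = \frac{1}{2}\rho|u|^2 + \rho e$, $\rho e = p_m/(\gamma-1)$ together with Lemma \ref{lemma:2.4} ($\varepsilon(t)\geq \varepsilon(0)$), exactly as in (\ref{eq:3.1533}) and (\ref{eq:3.15}), I would obtain a strictly positive constant $c_1=c_1(\gamma,d,\varepsilon(0))$ (namely $d(\gamma-1)\varepsilon(0)$ for $1<\gamma<1+\frac{2}{d}$ and $2\varepsilon(0)$ for $\gamma \geq 1+\frac{2}{d}$) with $\frac{d}{\text{d}t}F(t) \geq c_1$, hence after integrating twice the quadratic lower bound
$$M(t) \geq M(0) + 2F(0)t + c_1 t^2.$$

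Next I would convert the decay hypothesis into an upper bound on the growth of $M$. The condition $\limsup_{t\to+\infty}\|\frac{t}{1+|x|^2}u\cdot x\|_{L^\infty} < 1$ supplies $\delta \in (0,1)$ and $T_0 > 0$ with $|u(t,x)\cdot x| \leq (1-\delta)(1+|x|^2)/t$ for $t \geq T_0$; since total mass is conserved ($m(t)=m(0)$), this yields
$$F(t) \leq \int_{\mathbb{R}^d}\rho\,|u\cdot x|\,\text{d}x \leq \frac{1-\delta}{t}\int_{\mathbb{R}^d}\rho(1+|x|^2)\,\text{d}x = \frac{1-\delta}{t}\big(m(0) + M(t)\big), \qquad t \geq T_0.$$
Substituting into $M'(t) = 2F(t)$ and setting $G(t)=m(0)+M(t)>0$ produces the differential inequality $G'(t) \leq \frac{2(1-\delta)}{t}G(t)$; integrating the logarithmic derivative from $T_0$ to $t$ gives $G(t) \leq G(T_0)(t/T_0)^{2(1-\delta)}$, that is, $M(t) \leq C\,t^{2(1-\delta)}$ for $t \geq T_0$.

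Finally I would collide the two estimates: the quadratic lower bound forces $c_1 t^2 + O(t) \leq C\,t^{2-2\delta}$, hence $c_1 \leq C\,t^{-2\delta} + o(1) \to 0$ as $t\to +\infty$, contradicting $c_1 > 0$; therefore no such global smooth solution can exist. The only genuinely new ingredient beyond Theorem \ref{th:2.0000} is the Gr\"onwall-type inequality for $M$ coming from (\ref{eq:2.15}), and the step I expect to demand the most care is verifying that the linear-in-$t$ lower bound for $F$ (equivalently the $t^2$ lower bound for $M$) is genuinely uniform and positive across all ranges $\gamma>1$, which requires the case split on the sign of $d-\frac{2}{\gamma-1}$ and the use of $\varepsilon(0)>0$ precisely as in the proof of Theorem \ref{th:2.0000}.
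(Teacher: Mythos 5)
Your proposal is correct and follows essentially the same route as the paper: the paper works with $\widetilde{M}(t)=\int_{\mathbb{R}^d}\rho(1+|x|^2)\,\text{d}x$, deriving the sub-quadratic upper bound $\widetilde{M}(t)\leq C t^{2C_0}$ ($2C_0<2$) by the same Gr\"onwall argument you apply to $G(t)=m(0)+M(t)$, and the quadratic lower bound from $\frac{d^2}{\text{d}t^2}\widetilde{M}(t)\geq \eta\,\varepsilon(0)$ with $\eta=\min\{4,2d(\gamma-1)\}$, which is exactly your $M''=2F'\geq 2c_1$ with $c_1=\min\{2,d(\gamma-1)\}\varepsilon(0)$, both resting on the nonnegativity of the radiation momentum source via (\ref{bbbb}) and Lemmas \ref{lemma:2.1} and \ref{lemma:2.4}. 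The collision of the two growth rates at large $t$ is the paper's concluding contradiction as well, so your write-up is a faithful (and slightly cleaner) rendering of the published proof.
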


\begin{proof}
Let
$$I(t,x,v,\Omega)\in L^2(\mathbb{R}^+ \times S^{d-1};C^1([0,+\infty);H^s(\mathbb{R}^d))),\ (\rho,u,S)(t,x)\in C^1([0,+\infty);H^s(\mathbb{R}^d))$$ be a smooth solution to Cauchy problem (\ref{eq:2.1})--(\ref{eq:2.2}) satisfying (\ref{eq:2u})-(\ref{eq:2.2*}) and (\ref{bbbb}). Then there exists constants $t_0 > 0$ and $C_0 < 1$ such that for all $ t\geq t_0$, \\
\begin{equation*}
\left|\left|\frac{u(t,x)\cdot x}{1+|x|^2} \right|\right|_{L^\infty}<\frac{C_0}{t}.
\end{equation*}

Let $\widetilde{M}(t)=\int_{\mathbb{R}^d} \rho(1+|x|^2)\text{d}x $ . Then
\begin{equation*}
\frac{d}{\text{d}t}\widetilde{M}(t)=2\int_{\mathbb{R}^d} \rho x\cdot u \text{d}x \leq 2 \widetilde{M}(t)\left|\left| \frac{u(t,x)\cdot x}{1+|x|^2} \right|\right|_{L^\infty}\leq 2 C_0 \frac{\widetilde{M}(t)}{t}
\end{equation*}
for all $ t\geq t_0$.

Solving this inequality, we have

\begin{equation}\label{eq:4.4}
\widetilde{M}(t) \leq \widetilde{M}(t_0)\exp\Big(2C_0 \ln {\frac{t}{t_0}}\Big)=\frac{\widetilde{M}(t_0)}{t^{2C_0 }_0}t^{2C_0}=\frac{M(t_0)+m(0)}{t^{2C_0 }_0}t^{2C_0}.
\end{equation}

According to the momentum equations and integrating by parts, we have
\begin{equation}\label{eq:4.5}
\frac{d^2}{\text{d}t^2}\widetilde{M}(t)=2\int_{\mathbb{R}^d} (\rho |u|^2 +d p_m) \text{d}x +\frac{1}{c}\int_{\mathbb{R}^d}\int_0^\infty \int_{S^{d-1}} K_a\cdot(I-\overline{B}(v))x\cdot\Omega \text{d}\Omega  \text{d}v\text{d}x,
\end{equation}
where
\begin{equation}\label{eq:4.6}
\frac{1}{c}\int_{\mathbb{R}^d}\int_0^\infty \int_{S^{d-1}} K_a\cdot(I-\overline{B}(v))x\cdot\Omega \text{d}\Omega  \text{d}v\text{d}x\geq 0,
\end{equation}
which can be seen in the proof of Theorem 2.2.

According to $
p_m=(\gamma-1)\rho e
$
and Lemma \ref{lemma:2.4}, we know that
\begin{equation*}
\begin{split}
\frac{d^2}{\text{d}t^2}\widetilde{M}(t)& \geq 2\int_{\mathbb{R}^d} (\rho |u|^2 +dp_m)(\tau,x) \text{d}x \\
=&4\int_{0}^{t} \varepsilon (\tau)\text{d}\tau+2\Big(d-\frac{2}{\gamma-1}\Big)\int_{0}^{t} \int_{\mathbb{R}^d}p_m(\tau,x)\text{d}x\text{d}\tau\\
=&4\int_{0}^{t} \varepsilon (\tau)\text{d}\tau+2\Big(d-\frac{2}{\gamma-1}\Big)\int_{0}^{t}\int_{\mathbb{R}^d}(\gamma-1)(\rho e)(\tau,x)\text{d}x\text{d}\tau\\
\geq& \eta \varepsilon (t) \geq  \eta \varepsilon(0),
\end{split}
\end{equation*}
where $\eta=\min\{4,2d(\gamma-1)\} > 0$. From the Taylor's expansion, we have
\begin{equation}\label{eq:4.7}
\widetilde{M}(t) \geq m(0)+M(0)+F(0)t+\frac{1}{2}\eta \varepsilon(0)t^2.
\end{equation}
Combining (\ref{eq:4.4}) and (\ref{eq:4.7}), we get
\begin{equation}\label{**}
m(0)+M(0)+F(0)t+\frac{1}{2}\eta \varepsilon(0)t^2 \leq \frac{M(t_0)+m(0)}{t^{2C_0 }_0}t^{2C_0}
\end{equation}
for all  $ t\geq t_0$.

On the other hand, since $ 2C_0 < 2 $ and $\eta \varepsilon(0) > 0$, (\ref{**}) fails to hold when $t$ is large enough.
This contradiction implies that such a solution does not exist.
\end{proof}

\section{Some applications  }

In this section, we shall give several applications of the blow-up conditions presented in Section 2. First, we point out that the sub-linear growth condition (\ref{cd:1}) on the support of the density can be verified for most flows. We have the  following invariance of the support of mass density for Navier-Stokes-Boltzmann equations induced by the proof of the corresponding result in  Xin \cite{zx} for Navier-Stokes equations.

\begin{theorem}[\textbf{Invariance of the support of mass density}]\label{th:3.1}\ \\
Let $T>0$ and $\supp\rho_0(x)\subseteq B_{R_0}$ and
\begin{equation}\label{eq:2.10}
  \mu > 0, \quad \lambda+\frac{2}{d}\mu > 0,\quad k(\theta)\geq 0.\end{equation}
Then for any smooth solution
$$I(t,x,v,\Omega)\in L^2(\mathbb{R}^+ \times S^{d-1};C^1([0,T);H^s(\mathbb{R}^d))),\ (\rho,u,S)(t,x)\in C^1([0,T);H^s(\mathbb{R}^d))$$
to the Cauchy problem (\ref{eq:2.1})--(\ref{eq:2.2}) satisfying (\ref{eq:2u})-(\ref{eq:2.2*}), the support of the density $\rho(t,x)$ will not grow in time. More precisely, it holds that
\begin{equation}\label{eq:2.11}
D(t)=2R(t)=2L=2\widetilde{L}=2R_0, \qquad \forall t \in [0,T),
\end {equation}
i.e., the estimates (\ref{cd:1}) and (\ref{cd:11}) hold with $\alpha=0$ and $L$, $\widetilde{L}$ given by (\ref{eq:2.11}).

\end{theorem}

\begin{proof}
Due to $\supp\rho_0(x)\subseteq B_{R_0}$ and the hyperbolic property of  continuity equation, we know that $R(t)$ is a well-defined finite positive number for any $t \ge 0$.

We denote by $ S_p(t) $ the compact domain that is the image of $ \supp\rho_0(x)$ under the flow map, i.e.,
\begin{equation}\label{zhi}
S_p(t)=\{x|x=x(t;\xi_0), \quad \forall \xi_0 \in \supp\rho_0(x)\},
\end{equation}
where $ x(t; \xi_0)$ is the particle path starting from $\xi_0$ when $t=0$, namely,
\begin{equation}\label{particle}  \frac{d}{\text{d}t}x(t; \xi_0)=u(t,x(t; \xi_0)),   \quad x(0;\xi_0)= \xi_0.
\end{equation}
      It follows from the continuity equation that the smooth solution is simply supported along the particle paths, so
       $$ \supp_x \rho(t,x)=  S_p(t).$$
In the vacuum domain, $K_a(t,x,v,\rho,\theta)=0$  (see \cite{gp}), then from system (\ref{eq:2.1}) we have
\begin{equation}\label{100}
\nabla\cdot \mathbb{T}=0,\
\nabla\cdot (u \mathbb{T})=0, \quad \text{in  } (S_p(t))^c.
\end{equation}
 According to the proof of Lemma \ref{lemma:2.4}, we have
 \begin{equation}\label{eq:4.2}
 \begin{cases}
 \partial_iu_i(t,x)\equiv 0 ,\\
 \partial_iu_j(t,x)\equiv -\partial_ju_i(t,x) \quad(i\neq j)
 \end{cases}
 \end{equation}
 in $(S_p(t))^c$. Direct calculations lead to
 \begin{equation}\label{ry}
 \begin{split}
 \partial^2_{ij}u_k
 &=\partial_i(\partial_ju_k)=-\partial_i(\partial_ku_j)=-\partial^2_{ik}u_j\\
 &=\partial_j(\partial_iu_k)=-\partial_j(\partial_ku_i)=-\partial_j(\partial_ku_i)=\partial^2_{ik}u_j.
 \end{split}
\end{equation}
Thus,
\begin{equation}\label{ry1}
\partial^2_{ij}u_k=0, \ 1\leq i,j,k \leq d, \quad \text{in}  \  (S_p(t))^c.
\end{equation}

Since $ u(t,\cdot)\in H^s(\mathbb{R}^d)$, then
\begin{equation*}
u(t,x)\equiv 0 ,\quad \text{in} \    (S_p(t))^c.
\end{equation*}
From the definition of $S_p(t)$, we have $ u(t,x(t;x_0))\equiv 0$, for $x_0 \in \partial\supp \rho_0(x)$, where $\partial\supp \rho_0(x)$ is the boundary of  $\supp \rho_0(x)$.
Then
\begin{equation*}
x(t;\xi_0)\equiv \xi_0,  \quad \xi_0 \in \partial \supp\rho_0(x),\quad \text{i.e.},\  S_p(t)= \supp\rho_0(x).
\end{equation*}
\end{proof}

\subsection{\textbf{Navier-Stokes-Boltzmann equations without heat conduction}}\ \\
 As an immediate consequence of Theorem \ref{th:2.1}, Theorem \ref{th:3.1} and the second law of thermodynamics, we have the following blow-up results on the smooth solutions to the Cauchy problem (\ref{eq:2.1})--(\ref{eq:2.2}).

\begin{theorem}\label{co:2.9}
Let $T> 0$ and $\supp\rho_0(x)\subseteq B_{R_0}$. Consider the viscous compressible flows in radiation hydrodynamics without heat conduction, i.e.,
\begin{equation}\label{eq:2.13}
  \mu > 0, \quad\gamma+\frac{2}{d}\mu > 0,\quad k(\theta)= 0.
  \end{equation}
Then any smooth solution
$$ I(t,x,v,\Omega)\in L^2(\mathbb{R}^+ \times S^{d-1};C^1([0,T);H^s(\mathbb{R}^d))),\ (\rho,u,S)(t,x)\in C^1([0,T);H^s(\mathbb{R}^d))$$
 of the Cauchy problem (\ref{eq:2.1})--(\ref{eq:2.2}) satisfying (\ref{eq:2u})-(\ref{eq:2.2rr}) will blow up in finite time. More precisely, the life span is estimated in (\ref{eq:2.9}) with $\alpha=0$ and $L$ given by (\ref{eq:2.11}), $\underline{S}$ is to be given.
\end{theorem}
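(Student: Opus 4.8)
The plan is to combine the invariance of the support from Theorem \ref{th:3.1} with the blow-up criterion of Theorem \ref{th:2.1}, the only genuinely new ingredient being the verification of the entropy lower bound (\ref{cd:2}) via the second law of thermodynamics. First I would invoke Theorem \ref{th:3.1}: since $\supp\rho_0\subseteq B_{R_0}$ and (\ref{eq:2.13}) provides $\mu>0$, $\lambda+\frac{2}{d}\mu>0$, $k(\theta)=0\geq 0$, the support of $\rho(t,\cdot)$ does not grow, so (\ref{eq:2.11}) yields $D(t)=2R_0$ for all $t\in[0,T)$. Hence condition (\ref{cd:1}) holds with $\alpha=0$ and $L=R_0$; in particular $\alpha=0<\frac12$, so the extra restriction imposed in Theorem \ref{th:2.1} for $\gamma>1+\frac1d$ is automatically met for every $\gamma>1$.

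Next I would produce the constant $\underline{S}$ in (\ref{cd:2}). The entropy equation, the last line of (\ref{eq:2.1}), reads $(\partial_tS+u\cdot\nabla S)\,p_m=N_r$, and with $k(\theta)=0$ the heat-conduction term in $N_r$ drops out, leaving
\begin{equation*}
N_r=(\gamma-1)\Big(\int_0^\infty\int_{S^{d-1}}\Big(1-\tfrac{u\cdot\Omega}{c}\Big)K_a\,(I-\overline{B}(v))\,\dif\Omega\,\dif v+\nabla\cdot(u\mathbb{T})-u\cdot(\nabla\cdot\mathbb{T})\Big).
\end{equation*}
Each piece is nonnegative: $K_a\geq0$ and $I-\overline{B}(v)\geq0$ by Lemma \ref{lemma:2.1}, the factor $1-\frac{u\cdot\Omega}{c}>0$ in the non-relativistic regime $|u|<c$ (since $|u\cdot\Omega|\leq|u|$), and $\nabla\cdot(u\mathbb{T})-u\cdot(\nabla\cdot\mathbb{T})\geq0$ by Lemma \ref{lemma:2.2}. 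Thus $N_r\geq0$, and since $p_m=R\rho\theta>0$ on $\supp_x\rho(t,\cdot)$ we obtain $\partial_tS+u\cdot\nabla S=N_r/p_m\geq0$ there. Integrating this material derivative along the particle path $x(t;\xi_0)$ of (\ref{particle}) gives $S(t,x(t;\xi_0))\geq S_0(\xi_0)$; since by Theorem \ref{th:3.1} every point of $\supp_x\rho(t,\cdot)$ is such an image of a point $\xi_0\in\supp\rho_0$, I would set
\begin{equation*}
\underline{S}:=\min_{\xi_0\in\supp\rho_0}S_0(\xi_0),
\end{equation*}
which is finite because $S_0\in H^s\hookrightarrow C(\mathbb{R}^d)$ ($s>\tfrac d2+2$) and $\supp\rho_0\subseteq B_{R_0}$ is compact.

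The subtle point, and the only place where care is needed, is that the second law controls $S$ only on the support of the density, whereas (\ref{cd:2}) is stated on all of $\mathbb{R}^d$; in the vacuum region $\{\rho=0\}$ the entropy is left undetermined by (\ref{eq:2.1}). This causes no real difficulty because, as one reads off from the proof of Theorem \ref{th:2.1}, the lower bound on $S$ enters only through the pressure estimate $p_m=\rho^\gamma e^S\geq e^{\underline{S}}\rho^\gamma$, which is vacuous where $\rho=0$; thus the bound on $\supp_x\rho$ is exactly what the Jensen-type inequality in that proof requires. Having verified (\ref{cd:1}) with $\alpha=0$, $L=R_0$ and (\ref{cd:2}) with the above $\underline{S}$, I would apply Theorem \ref{th:2.1} directly to conclude $T\leq T(\gamma)<+\infty$, with the life-span bound (\ref{eq:2.9}) evaluated at $\alpha=0$ and $L$ from (\ref{eq:2.11}). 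I expect the main obstacle to be purely expository: making precise that the support-only entropy bound suffices, and recording the standing non-relativistic hypothesis $|u|<c$ needed for the sign of $1-\frac{u\cdot\Omega}{c}$.
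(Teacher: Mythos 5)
There is a genuine gap in your verification of the entropy lower bound (\ref{cd:2}). You argue that the radiation source term in $N_r$ is nonnegative because $1-\frac{u\cdot\Omega}{c}>0$, but this rests on the ``standing non-relativistic hypothesis'' $|u|<c$ that you yourself flag at the end. That hypothesis is not among the assumptions of Theorem \ref{co:2.9}, and it cannot be harmlessly added: the theorem is about solutions that are expected to blow up, and nothing in the $H^s$ framework prevents $|u|$ from exceeding $c$ on $[0,T)$, so with your argument alone one cannot conclude $\partial_tS+u\cdot\nabla S\geq 0$ for all $t\geq 0$. Indeed, Remark \ref{r31} of the paper states precisely that entropy monotonicity along particle paths can only be obtained for $t\geq T_c$ (with $T_c=\frac{2R_0}{c}>0$), \emph{not} for all $t\geq 0$; your proposal contradicts this. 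A further warning sign is that your argument never uses hypothesis (\ref{eq:2.2rr}), which is part of the theorem's assumptions and is exactly what makes the correct proof work.

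The paper's proof avoids any sign discussion of $1-\frac{u\cdot\Omega}{c}$ by exploiting the finite speed of photons together with (\ref{eq:2.2rr}): for $t>T_c=\frac{2R_0}{c}$, any photon path located in $B_{R_0}$ at time $t$ must have started at a point $y_0$ with $|y_0|>R_0$, where $I_0\equiv\overline{B}(v)$; by the representation formula (\ref{eq:**}) this yields $I\equiv\overline{B}(v)$ on $\supp_x\rho(t,\cdot)$ for $t>T_c$, so the whole radiation term in the entropy equation vanishes identically there. Combined with Lemma \ref{lemma:2.2}, this gives that $S$ is nondecreasing along particle paths for $t\in(T_c,T)$ only, whence $S\geq S_1:=\min_{x\in B_{R_0}}S(T_c,x)$ on $(T_c,T)\times B_{R_0}$; on the compact time interval $[0,T_c]$ one bounds $|S|\leq S_2$ by Sobolev embedding, and takes $\underline{S}=\min\{S_1,-S_2\}$. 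Note that your choice $\underline{S}=\min_{\xi_0\in\supp\rho_0}S_0(\xi_0)$ would in any case be unjustified, since $S$ may decrease along particle paths on $[0,T_c]$. The remaining parts of your plan --- invoking Theorem \ref{th:3.1} to get (\ref{cd:1}) with $\alpha=0$, $L=R_0$, and the observation that the lower bound on $S$ is only needed on the support of $\rho$ where the Jensen inequality is applied --- do agree with the paper.
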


\begin{proof}
By Theorem \ref{th:3.1}, we have $\supp_x \rho(t,x)\subseteq B_{R_0}$, so we need only verify (\ref{cd:2}).


$\forall y \in S_p(t)$, $\exists \ y_0$, such that $y=y(t;y_0 )$ and $ y(t;y_0)$ is the photon path starting from $y_0$ when $t=0$, \text{i.e.},
      $$   \frac{d}{\partial{t}}y(t;y_0)=c\Omega ,     \qquad  y(0;y_0)= y_0 .                   $$
We have
\begin{equation}\label{erer}
|y-y_0|=|c\Omega t|> 2R_0, \ \text{for} \ t >T_c=\frac{2R_0}{c},
\end{equation}
and thus
\begin{equation}\label{erer1}
|y_0|> R_0,\ \text{i.e.},\ y_0\in (S_p(0))^c.
\end{equation}
According to Lemma \ref{lemma:2.1}, we have
\begin{equation} \label{eq:*}
(I-\overline{B}(v))(t,y(t;y_0 ))=(I_0-\overline{B}(v))(y_0 )\exp\Big(\int_0^t -cK_a\cdot(\tau,y(\tau;y_0 ),v,\rho,\theta)\text{d}\tau\Big).   \end{equation}
From (\ref{eq:*}) and (\ref{eq:2.2rr}), we have
\begin{equation}\label{erer2}
(I-\overline{B}(v))(t,y(t;y_0 ))=0, \ t >T_c=\frac{2R_0}{c}.
\end{equation}
That is to say,
\begin{equation}\label{gh1}
I(t,x,v,\Omega)\equiv \overline{B}(v), \ \forall x\in \supp\rho_0(x),\ t >T_c=\frac{2R_0}{c}.
\end{equation}

Let $\xi_0\in \supp\rho_0(x)\subseteq B_{R_0}$ and $ x(t; \xi_0)$ be the particle path defined by (\ref{particle}). According to the continuity equation and Theorem \ref{th:3.1}, we have $x(t; \xi_0)\in \rho_0(x)\subseteq B_{R_0}$ for $t\in [0,T)$. Along $ x(t; \xi_0)$, we obtain
\begin{equation}\label{en}
\begin{split}
\frac{d}{\text{d}t}S(t,x(t;\xi_0))p_m&=(\partial_tS+u(t,x(t;\xi_0))\cdot \nabla S)p_m\\
&=(\gamma-1)\int_0^\infty \int_{S^{d-1}} \left(1-\frac{u\cdot \Omega}{c}\right)K_a\cdot(I-\overline{B}(v)) \text{d}\Omega \text{d}v\\
&\quad+(\gamma-1)\Big(\nabla \cdot (u \mathbb{T})-u\cdot (\nabla \cdot \mathbb{T})\Big).
\end{split}
\end{equation}
If $ T\leq T_c$, the proof is finished.\\
If $ T > T_c $, according to (\ref{gh1}), (\ref{en})  and Lemma \ref{lemma:2.2}, we have
$$\frac{d}{\partial{t}}S(t,x(t;\xi_0)) \geq 0,\quad \text{for}\quad t\in (T_c,T). $$
Then
\begin{equation}\label{7788}
S(t,x) \geq \min_{x\in B_{R_0}} S(T_c,x)=S_1, \quad \text{for} \quad  (t,x)\in (T_c,T)\times B_{R_0}.
\end{equation}
Since $ S(t,x)\in C^1([0,T_c];H^s(\mathbb{R}^d)), \ s\geq \frac{d}{2}+2$,
from Sobolev's imbedding theorem, we know that there exists a positive constant $S_2$ such that
$$  |S(t,x)|\leq S_2, \quad \text{for}  \quad (t,x)\in [0,T_c]\times B_{R_0}.$$
Therefore,
$$ S(t,x)\geq \min\{S_1,-S_2\}=\underline{S}, \quad \text{for} \quad (t,x)\in [0,T)\times B_{R_0}.$$
\end{proof}

\begin{remark}\label{r31}
Compared with Navier-Stokes equations, we obtained that $\frac{d}{\partial{t}}S(t,x(t;\xi_0)) \geq0$  along the particle path for $t\geq T_c$ instead of for all $t\geq 0$, which is caused by the viscosity effect along with the radiation effect. This is different from the Euler equations, where the entropy is  invariant along the particle path. That is to say, the viscosity and radiation can change the mechanical properties of the fluid in some sense.
\end{remark}

Now we consider the smooth solution in a broader class of functions that do not need higher regularity. We first introduce the well known Reynolds transport theorem (\cite{kong}).
\begin{lemma}\label{3.1}Let $S_{p}(t)$ be defined by (\ref{zhi}).
Then for any $Q(t,x)\in C^1(\mathbb{R}^+ \times \mathbb{R}^d)$, we have
$$
\frac{d}{\text{d}t}\int_{S_p(t)}  Q(t,x)\text{d}x= \int_{S_p(t)}  \partial_t Q(t,x)\text{d}x+\int_{\partial {S_p(t)}} Q(t,x)(u(t,x)\cdot n)dS,
$$
where $n$ is the outward unit normal vector of $\partial {S_p(t)}$.
\end{lemma}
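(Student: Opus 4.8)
The plan is to reduce the moving-domain integral to an integral over the fixed reference domain $S_p(0)=\supp\rho_0(x)$ by changing variables through the flow map $x=x(t;\xi)$ defined in (\ref{particle}), and then to differentiate under the integral sign on that fixed domain. Since $s>\frac{d}{2}+2$, Sobolev embedding gives $u(t,\cdot)\in C^2(\mathbb{R}^d)$, so the flow map is a diffeomorphism that is $C^2$ in $\xi$, and its Jacobian $J(t,\xi)=\det\big(\partial x(t;\xi)/\partial\xi\big)$ is a well-defined, positive, continuously differentiable function. Writing
$$\int_{S_p(t)} Q(t,x)\,dx = \int_{S_p(0)} Q\big(t,x(t;\xi)\big)\,J(t,\xi)\,d\xi,$$
the domain of integration no longer depends on $t$, and the compactness of $S_p(0)$ together with the $C^1$ dependence on $t$ justifies differentiating under the integral sign.

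Next I would differentiate the integrand, for which two classical identities are needed. By the chain rule and $\dot x=u$, the material derivative is
$$\frac{d}{dt}Q\big(t,x(t;\xi)\big)=(\partial_tQ+u\cdot\nabla Q)\big(t,x(t;\xi)\big),$$
and by Euler's expansion (Liouville) formula for the Jacobian,
$$\frac{d}{dt}J(t,\xi)=(\nabla\cdot u)\big(t,x(t;\xi)\big)\,J(t,\xi).$$
Combining these, the derivative of the product $QJ$ collapses exactly into
$$\big[\partial_tQ+u\cdot\nabla Q+Q\,\nabla\cdot u\big]J=\big[\partial_tQ+\nabla\cdot(Qu)\big]J.$$
Changing variables back to $x$ then yields
$$\frac{d}{dt}\int_{S_p(t)}Q\,dx=\int_{S_p(t)}\big(\partial_tQ+\nabla\cdot(Qu)\big)\,dx.$$

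Finally I would split off the transport term and apply the divergence theorem on $S_p(t)$, whose boundary is the image under the flow map of $\partial\,\supp\rho_0(x)$ and is therefore at least Lipschitz by the regularity of the flow, to obtain
$$\int_{S_p(t)}\nabla\cdot(Qu)\,dx=\int_{\partial S_p(t)}Q(u\cdot n)\,dS,$$
which gives the stated formula. The only genuinely nontrivial step is Euler's expansion formula for the Jacobian; I would establish it by differentiating $\det\big(\partial x/\partial\xi\big)$ via the cofactor (Jacobi) formula and using $\frac{d}{dt}\big(\partial x_i/\partial\xi_j\big)=\partial_{x_k}u_i\,\big(\partial x_k/\partial\xi_j\big)$, so that contraction with the cofactors produces the trace $\nabla\cdot u$ times $J$. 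Everything else is routine change of variables, the chain rule, and the divergence theorem, all licensed by the $C^1([0,T);H^s)$ regularity with $s>\frac{d}{2}+2$.
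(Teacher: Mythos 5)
Your proof is correct and is precisely the ``direct calculation'' the paper alludes to when it omits the proof: pull the integral back to the fixed reference domain $S_p(0)$ via the flow map $x(t;\xi)$, differentiate under the integral sign using the material-derivative chain rule and Liouville's formula $\frac{d}{dt}J=(\nabla\cdot u)J$, and finish with the divergence theorem. One small contextual point: the lemma is introduced exactly for the broader solution class where $u$ is only $C^1([0,T)\times\mathbb{R}^d)$ with $\partial_x^2u$ continuous (not $C^1([0,T);H^s)$), but your argument in fact only needs $C^1$ regularity of $u$ for the flow map, its Jacobian, and Liouville's formula, so the appeal to Sobolev embedding is superfluous rather than a gap.
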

The proof is a direct calculation, here we omit it.

From Reynolds transport theorem and the continuity equation, we have
\begin{equation}\label{en1}
\frac{d}{\text{d}t}\int_{S_p(t)}  \rho(t,x)\text{d}x=\int_{S_p(t)}\partial_t\rho(t,x)\text{d}x+\int_{\partial S_{p}(t)} \rho(t,x)(u(t,x)\cdot n)dS=0,
\end{equation}
which implies
\begin{equation}\label{ff}
m(t)=\int_{\mathbb{R}^d} \rho(t,x)\text{d}x=\int_{S_p(t)}  \rho(t,x)\text{d}x=\int_{S_p(0)}  \rho_0(x)\text{d}x=m(0).
\end{equation}

Now we  give the following blow-up result.

\begin{theorem}\label{co:2.2} Let $T>0$, $\supp\rho_0(x)\subseteq B_{R_0}$ and
\begin{equation}
\label{eq:1.5jkk}
  \mu > 0, \quad \lambda+\frac{2}{d}\mu > 0,\ k(\theta)= 0.\end{equation}
Let $(I,\rho,u,S)$ be the smooth solution of the Cauchy problem (\ref{eq:2.1})--(\ref{eq:2.2}) satisfying (\ref{eq:2u})-(\ref{eq:2.2rr}) and
\begin{equation*}
\begin{cases}
I(t,x,v,\Omega)\in L^2(\mathbb{R}^+ \times S^{d-1};C^1([0,T)\times \mathbb{R}^d))), \ (\rho,S)(t,x)\in C^1([0,T)\times\mathbb{R}^d);\\[8pt]
u(t,x)\in C^1([0,T)\times\mathbb{R}^d),\ \partial^2_xu(t,x)\in C([0,T)\times\mathbb{R}^d).
\end{cases}
\end{equation*}
If
\begin{equation}\label{eq:2.16}
\| u(t,x)\|_{L^\infty} < C(t+\kappa)^{-\beta}
\end{equation}
with $C$ and $\beta<1 $ being positive constants and $1< \gamma < 1+\frac{1}{d(1-\beta)}$,
then
$$ T < +\infty. $$
\end {theorem}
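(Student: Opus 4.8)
The plan is to deduce the statement from Theorem \ref{th:2.1} by checking, with growth exponent $\alpha=1-\beta$, its two structural hypotheses: the sub-linear support bound (\ref{cd:1}) and the entropy lower bound (\ref{cd:2}). The genuinely new feature is the weaker regularity: here $u$ is only $C^1$ with $\partial^2_x u$ continuous and need not lie in $H^s$, so the integral identities underlying Proposition \ref{pro:2.1} must be re-established by hand. Since $u\in C^1$, the particle paths $x(t;\xi_0)$ of (\ref{particle}) are well defined and the continuity equation transports the support, with $\supp_x\rho(t,\cdot)=S_p(t)$ compact and $m(t)=m(0)$ by Reynolds' transport theorem (Lemma \ref{3.1}), exactly as in (\ref{ff}).

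First I would control the support. Integrating (\ref{eq:2.16}) along a particle path and using $\beta<1$ gives $|x(t;\xi_0)-\xi_0|\le\int_0^t\|u(\tau,\cdot)\|_{L^\infty}\,\mathrm d\tau\le\frac{C}{1-\beta}(t+\kappa)^{1-\beta}$, whence $D(t)\le 2R_0+\frac{2C}{1-\beta}(t+\kappa)^{1-\beta}\le 2L(t+\kappa)^{1-\beta}$ for some $L$ independent of $T$. This is (\ref{cd:1}) with $\alpha=1-\beta\in(0,1)$, and the assumption $1<\gamma<1+\frac{1}{d(1-\beta)}$ is precisely the range $1<\gamma<1+\frac{1}{d\alpha}$ in which the estimate behind (\ref{eq:2.9}) produces a finite life span.

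The core of the proof is the vacuum analysis that lets all viscous contributions drop out without $H^s$. In $(S_p(t))^c$ one has $K_a=0$ and $p_m=0$, so the momentum equation forces $\nabla\cdot\mathbb{T}=0$ and the entropy equation forces $N_r=0$; the identity of Lemma \ref{lemma:2.2} together with the strict inequalities (\ref{eq:1.5jkk}) then yields (\ref{eq:4.2}), hence $\nabla\cdot u=0$ and $\nabla u+(\nabla u)^\top=0$, so that $\mathbb{T}\equiv0$ in the vacuum. Relation (\ref{ry1}) makes $\nabla u$ constant on the (connected, since $d\ge2$) unbounded component of the vacuum, and boundedness of $u$ forces this constant gradient to vanish, so $u$ equals a constant vector $b(t)$ at infinity. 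Consequently $\nabla\cdot\mathbb{T}$, $u\mathbb{T}$ and $\nabla\cdot u$ are compactly supported, and $\int_{\partial B_R}u\cdot n\,\mathrm dS=b(t)\cdot\int_{\partial B_R}n\,\mathrm dS=0$ for large $R$; integrating by parts shows every viscous term integrates to zero, exactly as when $u\equiv0$ in the vacuum. This restores $\tfrac{\mathrm d}{\mathrm dt}\varepsilon(t)\ge0$ (Lemma \ref{lemma:2.4}), $Q_r(t)\ge0$ (Lemma \ref{lemma:2.5}) and the identity (\ref{cd:4}) of Proposition \ref{pro:2.1} in the present class. For (\ref{cd:2}) I would argue as in Theorem \ref{co:2.9}: tracing photons backwards, for $t>T_c'$ — finite because $ct-R(t)\to+\infty$ while $R(t)=O((t+\kappa)^{1-\beta})$ — every point of $S_p(t)$ is reached by a photon issuing from $\{|x|\ge R_0\}$, so $I\equiv\overline{B}(v)$ on the support, the radiation source in $N_r$ vanishes, and Lemma \ref{lemma:2.2} makes $S(t,x(t;\xi_0))$ nondecreasing there; since $S\in C^1$ is bounded on the compact slab $[0,T_c']\times B_{R(T_c')}$, this gives $S\ge\underline S$ on the support for all $t\in[0,T)$ with $\underline S$ independent of $T$ (and if $T\le T_c'$ there is nothing to prove).

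With (\ref{cd:1}), (\ref{cd:2}) and the identities of Proposition \ref{pro:2.1} in force, the argument of Theorem \ref{th:2.1} applies verbatim and (\ref{eq:2.9}) gives $T\le T(\gamma)<+\infty$. I expect the decisive difficulty to be the vacuum step of the third paragraph: in the absence of $H^s$ integrability one cannot use Theorem \ref{th:3.1} to make $u\equiv0$ in the vacuum, and one must instead extract from the pointwise relations (\ref{eq:4.2})--(\ref{ry1}) together with boundedness and connectedness that $u$ is constant at infinity, so that the viscous boundary fluxes through large spheres still vanish; verifying that this persists for the moving support $S_p(t)$ is the delicate point.
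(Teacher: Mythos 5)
Your proof is correct, and its skeleton is the paper's: sub-linear support growth deduced from (\ref{eq:2.16}), photon escape making $I\equiv\overline{B}(v)$ on the support after a finite time, entropy monotonicity along particle paths giving the lower bound, vacuum rigidity giving $\mathbb{T}\equiv 0$ off the support, and reduction to the ODE argument of Theorem \ref{th:2.1}. Where you genuinely differ is the decisive step of killing the viscous integrals in the absence of $H^s$ decay. The paper replaces $I_r(t)$ by $\widetilde{I}_r(t)$, integrated over the moving support $S_p(t)$, differentiates it with the Reynolds transport theorem (Lemma \ref{3.1}), and makes the two viscous terms in (\ref{zheng1}) vanish by computing fluxes on $\partial S_p(t)$ from the representation $u=\mathbb{N}(t)x+b(t)$ with $\mathbb{N}(t)$ antisymmetric, using $\mathbb{T}=0$ on the boundary together with $\mathrm{tr}\,\mathbb{N}(t)=0$. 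You instead keep the whole-space functionals of Section 2 and spend the hypothesis (\ref{eq:2.16}) a second time: boundedness of $u$ forces $\mathbb{N}(t)=0$ on the unbounded vacuum component, so $u\equiv b(t)$ near infinity, every viscous quantity ($\mathbb{T}$, $u\mathbb{T}$, $\nabla\cdot u$) is compactly supported, and fluxes through large spheres vanish because $\int_{\partial B_R}n\,\mathrm{d}S=0$. Both routes are valid. The paper's version never needs the rotation part of $u$ to vanish, at the cost of moving-domain calculus and of handling affine representations on the (possibly several) components of $\partial S_p(t)$; your version reuses Lemmas \ref{lemma:2.4}--\ref{lemma:2.5} and the identity (\ref{cd:4}) of Proposition \ref{pro:2.1} literally and integrates only over fixed large balls, which is more economical and avoids that boundary bookkeeping.
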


\begin{proof}Set $\alpha=1-\beta$, then $0<\alpha < 1$. Consider the particle path $x(t;\xi_0)$ with $\xi_0 \in \partial{B_{R_0}}$. Then
\begin{equation}\label{t1}
\begin{split}
|x(t; \xi_0)| &\leq |\xi_0 |+\int_{0}^{t} |u(s,x(s;\xi_0))|ds\\
& \leq R_0+C\int_{0}^{t} (s+\kappa)^{-\beta}ds=R_0+\frac{C}{1-\beta}((t+\kappa)^{1-\beta}-\kappa^{1-\beta})\\
&\leq \max\Big\{R_0,\frac{C}{1-\beta}\Big\}(t+\kappa)^\alpha=L(t+\kappa)^\alpha.
\end{split}
\end{equation}
The condition (\ref{cd:1}) holds with $\alpha=1-\beta$ and  $L$ depending only on $R_0, C, \beta$.

$\forall y \in S_p(t)$, $\exists y_0$, such that $y=y(t;y_0 )$, where $ y(t;y_0)$ is the photon path starting from $y_0$ when $t=0$.
We have
\begin{equation}\label{erer}
|y-y_0|=|c\Omega t|> R_0+L(t+\kappa)^\alpha, \quad t >\widetilde{T}_c=\kappa+\frac{R_0}{c}+\big(\frac{2^\alpha L}{c}\big)^{\frac{1}{1-\alpha}},
\end{equation}
and thus
\begin{equation}\label{erer1}
|y_0|> R_0,\ \text{i.e.},\ y_0\in (S_p(0))^c.
\end{equation}
From (\ref{eq:*}), we have
\begin{equation}\label{erer2}
(I-\overline{B}(v))(t,y(t;y_0 ))=0, \ t >\widetilde{T}_c.
\end{equation}
That is to say,
\begin{equation}\label{gh1}
I(t,x,v,\Omega)\equiv \overline{B}(v), \ \forall x\in S_{p}(t),\ t >\widetilde{T}_c.
\end{equation}
If $ T\leq \widetilde{T}_c$, then the proof is finished.\\
If $ T > \widetilde{T}_c $, let $\xi_0 \in \supp \rho_0(x)\subseteq B_{R_0}$, $ x(t; \xi_0)$ is the particle path defined by (\ref{particle}). According to the continuity equation, we have $x(t; \xi_0)\in S_p(t) $ for $t\in [0,T)$. From (\ref{en}), (\ref{gh1}) and Lemma \ref{lemma:2.2}, we have
$$\frac{d}{\partial{t}}S(t,x(t;\xi_0)) \geq 0. $$
Then it is similar to the proof of Theorem \ref{co:2.9} that there exists a constant $\underline{S}$ such that
\begin{equation}\label{t2}
 S(t,x)\geq \underline{S}, \quad  \forall x\in S_p(t).
\end{equation}
Replacing in (\ref{def}) the integration domain of $x$ with $S_{p}(t)$, we get
\begin{equation}\label{def2}
\begin{split}
\widetilde{I}_r(t)
=&\int_{S_p(t)} |x-(t+\kappa)u|^{2}\rho \text{d}x+\frac{2}{\gamma-1}(t+\kappa)^{2}\int_{S_p(t)}p_m \text{d}x+(t+\kappa)\widetilde{Q}_r(t),
\end{split}
\end{equation}
where
$$
\widetilde{Q}_r(t)=\int_{S_p(t)}\int_0^\infty  \int_{S^{d-1}} \frac{2c(t+\kappa)-2x\cdot\Omega}{c^2}(I-\overline{B}(v))\text{d}\Omega \text{d}v \text{d}x.
$$
Then according to Lemma \ref{3.1}, (\ref{gh1}) and the proof of Proposition \ref{pro:2.1},  we have
\begin{equation}\label{cd:4q}
\begin{split}
\frac{d}{\text{d}t}\widetilde{I}_r(t)=&\frac{2}{\gamma-1}(2-d(\gamma-1))(t+\kappa)\int_{S_p(t)}p_m \text{d}x\\
&+2(t+\kappa)^2\int_{S_p(t)}\nabla\cdot(u\mathbb{T})\text{d}x+\int_{S_p(t)}x\cdot (\nabla\cdot \mathbb{T})\text{d}x,
\end{split}
\end{equation}
for $ t >\widetilde{T}_c$.
Comparing with the proofs of Proposition \ref{pro:2.1} and Theorem \ref{th:2.1}, we need only prove that
\begin{equation}\label{zheng1}
2(t+\kappa)^2\int_{S_p(t)}\nabla\cdot(u\mathbb{T})\text{d}x+\int_{S_p(t)}x\cdot (\nabla\cdot \mathbb{T}) \text{d}x=0.
\end{equation}
According to the proof of Theorem \ref{th:3.1}, we know that
\begin{equation}\label{zheng}
\partial^2_{ij}u_k=0, \ 1\leq i,j,k \leq d, \quad \text{in}  \  (S_p(t))^c.
\end{equation}
Then there exists a matrix $\mathbb{N}(t)$ and a vector $b(t)$ such that
\begin{equation}\label{ry2}
u=\mathbb{N}(t)x+b(t), \ \forall \ x\in  (S_p(t))^c.
\end{equation}
Due to (\ref{eq:4.2}), we have
$$
\mathbb{N}(t)+\mathbb{N}^\top(t)=0, \ \forall t\in [0,T),$$
i.e., $\mathbb{N}(t)$ ia a antisymmetric matrix for any $t\in[0,T)$, then
$$
\mathbb{T}=0,\ \forall \ x \in (S_p(t))^c.
$$
By direct calculation, we have
\begin{equation}\label{t3}
\begin{split}
&   \int_{S_p(t)}x\cdot (\nabla\cdot \mathbb{T})\text{d}x=- \int_{S_p(t)} \text{tr}(\mathbb{T})\text{d}x \\ &=-(2\mu+d\lambda) \int_{S_p(t)}  \nabla\cdot u(t,x) \text{d}x
=-(2\mu+d\lambda) \int_{\partial S_p(t)}   u(t,x)\cdot n ds\\
&= -(2\mu+d\lambda) \int_{\partial S_p(t)}   (\mathbb{N}(t)x+b(t))\cdot n ds                          \\
& =-(2\mu+d\lambda) \int_{ S_p(t)}  \nabla \cdot (\mathbb{N}(t)x+b(t)) \text{d}x =-(2\mu+d\lambda) \int_{ S_p(t)}  \text{tr}(\mathbb{N}(t)) \text{d}x=0.
\end{split}
\end{equation}
and
\begin{equation}\label{200}
\int_{S_p(t)}\nabla\cdot(u\mathbb{T})\text{d}x=\int_{\partial S_p(t)} (u\mathbb{T})\cdot ndx=0.
\end{equation}

Combining (\ref{t1}), (\ref{t2}), and (\ref{t3})-(\ref{200}), according to  the proof of Theorem \ref{th:2.1},
we know that the life span of this smooth solution is finite for $1 < \gamma < 1+\frac{1}{d(1-\beta)} $.
\end{proof}
\begin{remark}\label{shuoming}
Theorem \ref{co:2.2} is also true for Euler-Boltzmann equations, since (\ref{zheng1}) is naturally valid if $\mu=\lambda=k(\theta)=0$.
\end{remark}
\subsection{\textbf{Navier-Stokes-Boltzmann equations with heat conduction}}\ \\
Compared with Theorem \ref{co:2.9}, we have the similar conclusion for viscous flow with heat conduction in radiation hydrodynamics. It is  an immediate consequence of Theorem \ref{th:2.0000} and  Theorem \ref{th:3.1}.
\begin{theorem}\label{co:2.1}
Let $T>0$ and $\supp\rho_0(x)\subseteq B_{R_0}$. Consider the viscous compressible flows in radiation hydrodynamics with heat conduction, i.e.,
\begin{equation}\label{eq:2.132}
  \mu > 0, \quad\gamma+\frac{2}{d}\mu > 0,\quad k(\theta) > 0.
  \end{equation}
Then any smooth solution
$$ I(t,x,v,\Omega)\in L^2(\mathbb{R}^+ \times S^{d-1};C^1([0,T);H^s(\mathbb{R}^d))),\ (\rho,u,S)(t,x)\in C^1([0,T);H^s(\mathbb{R}^d))$$
 of the Cauchy problem (\ref{eq:2.1})--(\ref{eq:2.2}) satisfying (\ref{eq:2u})-(\ref{eq:2.2*}) and (\ref{bbbb}) will blow up in finite time.
\end{theorem}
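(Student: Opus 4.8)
The plan is to obtain the conclusion as a direct combination of the support-invariance theorem (Theorem \ref{th:3.1}) and the blow-up criterion of Theorem \ref{th:2.0000}, the latter being the variant that requires no lower bound on the material entropy. First I would verify the hypotheses of Theorem \ref{th:3.1}: the strict viscosity and positive-conductivity assumptions (\ref{eq:2.132}) contain the hypotheses (\ref{eq:2.10}) of Theorem \ref{th:3.1} (in particular $k(\theta)>0\geq0$), and $\supp\rho_0(x)\subseteq B_{R_0}$ is assumed. Applying Theorem \ref{th:3.1} then yields $R(t)=R_0$ for all $t\in[0,T)$; in particular the sub-linear growth condition (\ref{cd:11}) holds with $\alpha=0$ and $\widetilde{L}=R_0$.

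Next I would check the remaining hypotheses of Theorem \ref{th:2.0000}. The regularity class and the initial conditions (\ref{eq:2u})--(\ref{eq:2.2*}) are assumed, and the directional condition (\ref{bbbb}), namely $I_0\equiv\overline{B}(v)$ for $x\cdot\Omega\leq0$, is imposed directly in the statement. Since the viscosity hypotheses needed in Theorem \ref{th:2.0000} ($\mu\geq0$, $\lambda+\frac{2}{d}\mu\geq0$, $k(\theta)\geq0$) are weaker than (\ref{eq:2.132}), every hypothesis of Theorem \ref{th:2.0000} is met with the choice $\alpha=0$. Invoking Theorem \ref{th:2.0000} then gives $T<+\infty$, which is the desired conclusion.

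The point worth emphasizing is why Theorem \ref{th:2.0000} must be used here in place of Theorem \ref{th:2.1}, and this is the only conceptually nontrivial part of the argument. With heat conduction present the monotonicity argument that supplied the entropy lower bound (\ref{cd:2}) in the proof of Theorem \ref{co:2.9} is no longer available: there the case $k(\theta)=0$ forced the entropy source $N_r$, once the radiation contribution vanishes inside the support, to reduce to the viscous dissipation $(\gamma-1)(\nabla\cdot(u\mathbb{T})-u\cdot(\nabla\cdot\mathbb{T}))$, which is nonnegative by Lemma \ref{lemma:2.2}. When $k(\theta)>0$ the extra term $(\gamma-1)\nabla\cdot(k(\theta)\nabla\theta)$ appears in $N_r$ and carries no definite sign along particle paths, so no constant $\underline{S}$ can be extracted, and Theorem \ref{th:2.1} cannot be applied. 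Theorem \ref{th:2.0000} circumvents this obstacle by trading the entropy hypothesis for the directional condition (\ref{bbbb}); consequently the only genuine task in the present proof is the bookkeeping that turns the support invariance of Theorem \ref{th:3.1} into the growth hypothesis (\ref{cd:11}) with $\alpha=0$.
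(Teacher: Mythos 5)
Your proposal is correct and is exactly the paper's argument: the paper states that Theorem \ref{co:2.1} is an immediate consequence of Theorem \ref{th:3.1} (support invariance, giving (\ref{cd:11}) with $\alpha=0$, $\widetilde{L}=R_0$) combined with Theorem \ref{th:2.0000}, which is precisely the combination you carry out. Your added explanation of why the entropy-based criterion of Theorem \ref{th:2.1} is unavailable when $k(\theta)>0$ matches the paper's own motivation for introducing Theorem \ref{th:2.0000}.
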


\begin{remark}\label{r44}
As to the  isentropic compressible Navier-Stokes-Boltzmann equations in multi-dimensional case, from the transport equation of photons, continuity equation, momentum equations and  the physical relation for polytropic gas
$$ E_m=\frac{1}{2}\rho u^2+\frac{p_m}{\gamma-1},$$
we can get the energy equation for the isentropic flow
\begin{equation} \label{eq:5.00}
\begin{split}
\partial_t(E_m+E_r)+
\nabla\cdot((E_m+p_m)u+F_r)
=u\cdot(\nabla \cdot  \mathbb{T}).
\end{split}
\end{equation}
The method used in Theorem 3.1 to get the invariance of the support of density fails in this case, because we can only  get
$$ u\cdot (\nabla \cdot\mathbb{ T})=0 $$
in the vacuum domain, while for the non-isentropic flow, we have
\begin{equation*}
\nabla\cdot \mathbb{T}=0,\
\nabla\cdot (u \mathbb{T})=0
\end{equation*}
in the vacuum domain. According to the proof of Theorem \ref{th:3.1}, we know that the  conclusions for the non-isentropic flow cannot go directly to the isentropic flow.
\end{remark}

\section{Multi-dimensional isentropic flow with degenerate viscosity coefficients}
Through the discussion in Section 3, we can see that the results about Navier-Stokes-Boltzmann equations cannot be directly extended to the isentropic case with only compactly supported mass density in the case that the viscosity coefficients are constants, so we consider the situation when the  viscosity coefficients depend on mass density, which is motivated by the physical consideration that in the derivation of the Navier-Stokes equations from the Boltzmann equations through the Chapman-Enskog expansion to the second order, cf.\cite{tlt}, the viscosity coefficients are not constant but depend on temperature. For isentropic flow, this dependence is reduced to the dependence on the density by the Boyle and Gay-Lusac law for ideal gases. We consider the system
\begin{equation}
\begin{cases}
\label{eq:aaa2}
\displaystyle
\frac{1}{c}\partial_tI+\Omega\cdot\nabla I=-K_a\cdot(I-\overline{B}(v)),\\[10pt]
\displaystyle
\partial_{t}\rho+\nabla\cdot(\rho u)=0,\\[10pt]
\displaystyle
\partial_{t}(\rho u)+\nabla\cdot(\rho u \otimes u)
+\nabla p_m =
\frac{1}{c}\int_0^\infty \int_{S^{d-1}} K_a\cdot(I-\overline{B}(v))\Omega \text{d}\Omega \text{d}v+\nabla \cdot \mathbb{T},
 \end{cases}
\end{equation}
where $ x \in \mathbb{R}^d $, $ d \geq 2 $.
\begin{equation} \label{fg}
K_a(t,x,v,\rho)=o(\rho)=\rho \widetilde{K}_a(t,x,v,\rho),
\end{equation}
where $\lim_{\rho\rightarrow 0}\widetilde{K}_a(t,x,v,\rho)=0$.
We consider only polytropic gas, i.e., $ p_m(\rho)= \rho^\gamma=R \rho \theta $.
$\mathbb{T}$ is the stress tensor given by
\begin{equation}
\label{eq:bbb}
\mathbb{T}=\mu(\rho)(\nabla u+(\nabla u)^\top)+\lambda(\rho)(\nabla\cdot u)\mathbb{I}_d,
\end{equation}
where $\mu(\rho)=n_1\rho^\delta, \ \lambda(\rho)=n_2\rho^\delta$ and  $\frac{2}{d}\mu(\rho)+\lambda(\rho)=\rho^\delta$, with $ 1< \delta \leq \gamma$. Here $n_1>0$ and $n_2$ are constants. We first give the definition of  regular solutions of system (\ref{eq:aaa2}).

\begin{remark}
The condition (\ref{fg}) for the isentropic flow can be satisfied when the absorption coefficient is given by, for example (see \cite{gp} or \cite{sjx}),
$$
K_a(t,x,v,\rho)=C\rho \theta^{-\frac{1}{2}}\exp\Big(-\frac{C}{\theta^{\frac{1}{2}}}\Big(\frac{v-v_0}{v_0}\Big)^2\Big),
$$
where $C$ is a positive constant, $v_0$ is the fixed frequency. Then we have
$$
\lim_{\rho\rightarrow 0}\frac{K_a(t,x,v,\rho)}{\rho}=\lim_{\theta \rightarrow 0} \theta^{-\frac{1}{2}}\exp\Big(-\frac{C}{\theta^{\frac{1}{2}}}\Big(\frac{v-v_0}{v_0}\Big)^2\Big)=0.
$$
\end{remark}
\begin{definition}[\text{\textbf{Regular solution}}]\ \\
A solution $(I(t,x,v,\Omega),\rho(t,x),u(t,x))$ of problem (\ref{eq:aaa2}) is called a regular solution in $ [0,T)\times \mathbb{R}^d\times \mathbb{R}^+ \times S^{d-1}$ if \\[6pt]
\text{(i)} $ I(t,x,v,\Omega)\in L^2(\mathbb{R}^+ \times S^{d-1};C^1([0,T)\times\mathbb{R}^d))$, $\rho(t,x)\in C^1([0,T) \times \mathbb{R}^d)$,  $\rho\geq 0$, and $ u(t,x)\in C^1([0,T)\times \mathbb{R}^d))$ , $\partial^2_{x_ix_j}u(t,x)\in C([0,T) \times \mathbb{R}^d)$, $i, j=1,2,...,d$,\\[6pt]
\text{(ii)} $\rho^{\frac{\delta-1}{2}}(t,x) \in C^1([0,T) \times \mathbb{R}^d)$ with $1<\delta \leq \gamma$.
\end{definition}
We assign the initial condition
\begin{equation} \label{eq:2.2e}
I|_{t=0}=I_0(x,v,\Omega), \quad (\rho,u)|_{t=0}=(\rho_0(x),u_0(x))
\end{equation}
satisfying
\begin{equation} \label{eq:df}
\begin{split}
&I_0 \geq \overline{B}(v), \ \text{for} \ (x,v,\Omega)\in \mathbb{R}^{d}\times \mathbb{R}^{+}\times S^{d-1}; \  I_0\equiv \overline{B}(v), \ \text{for} \ |x|\geq R_0,\\
&\sup \rho_0(x)\subseteq B_{R_0},\ \supp u_0(x)\subseteq B_{R_0}.
\end{split}
\end{equation}

\begin{lemma}[\text{\textbf{Finite expansion of the vacuum domain}}]
\label{lemma:3.1}\ \\
Let $T>0$. Suppose that
$(I,\rho,u)$
is a regular solution to the Cauchy problem (\ref{eq:aaa2})-(\ref{eq:2.2e}) satisfying (\ref{eq:df}). We denote by $ x(t; \xi_0)$ the particle path starting from $\xi_0$ when $t=0$, then we have
$$
x(t; \xi_0)=\xi_0, \quad \text{for}\quad \xi_0\in \partial{B_{R_0}}, \quad  t\in [0,T) .
$$
Moreover, $\supp_x \rho(t,x)=\supp_x u(t,x)\subseteq B_{R_0}$  \text{for} $t\in [0,T)$.
\end{lemma}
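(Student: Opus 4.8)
The plan is to mirror the argument of Theorem \ref{th:3.1}, but to exploit the one feature that distinguishes the present system from the constant-viscosity case of Remark \ref{r44}: the viscosity now degenerates with the density. First I would record, exactly as in Theorem \ref{th:3.1}, that the continuity equation in (\ref{eq:aaa2}) forces the mass to be carried along the particle paths (\ref{particle}), so that $\supp_x \rho(t,x)=S_p(t)$ with $S_p(t)$ given by (\ref{zhi}); in particular $R(t)$ is finite for every $t$. The decisive structural point is that $\mu(\rho)=n_1\rho^\delta$, $\lambda(\rho)=n_2\rho^\delta$ with $\delta>1$, so that $\mathbb{T}$, the pressure $p_m=\rho^\gamma$ and the absorption coefficient $K_a=\rho\widetilde{K}_a$ all vanish identically on the vacuum set $\{\rho=0\}$. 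Hence there is no viscous stress to be propagated into the vacuum and the affine-velocity argument of Theorem \ref{th:3.1} is neither available nor needed; the mechanism must instead be a control of the velocity up to the free boundary.

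On the open set $\{\rho>0\}$ I would divide the momentum equation of (\ref{eq:aaa2}) by $\rho$ and use the continuity equation to obtain the non-conservative velocity equation
\[
\partial_t u+u\cdot\nabla u=-\frac{\gamma}{\gamma-1}\nabla(\rho^{\gamma-1})+\frac{1}{\rho}\nabla\cdot\mathbb{T}+\frac{1}{c}\int_0^\infty\int_{S^{d-1}}\widetilde{K}_a(I-\overline{B}(v))\,\Omega\,\text{d}\Omega\,\text{d}v\equiv F,
\]
where I used $K_a=\rho\widetilde{K}_a$ from (\ref{fg}). The heart of the proof is to show that $F$ extends continuously up to the free boundary $\partial\{\rho>0\}$ with limiting value $0$. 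The regularity $\rho^{(\delta-1)/2}\in C^1$ built into the notion of regular solution yields $\nabla\rho=O(\rho^{(3-\delta)/2})$ near the vacuum, whence the pressure term is $O(\rho^{\gamma-(1+\delta)/2})$; this tends to $0$ because $\delta\le\gamma$ forces the exponent to be at least $(\gamma-\delta)/2+(\gamma-1)/2>0$. Expanding $\nabla\cdot(\rho^\delta A)$ with $A=n_1(\nabla u+(\nabla u)^\top)+n_2(\nabla\cdot u)\mathbb{I}_d$, the viscous term splits into a piece $O(\rho^{\delta-1})$ and a piece $O(\rho^{(\delta-1)/2})$, both vanishing since $\delta>1$, $u\in C^1$ and $\partial^2_x u$ is continuous; the radiation term vanishes because $\widetilde{K}_a\to0$ as $\rho\to0$ by (\ref{fg}) while $I-\overline{B}(v)\ge0$ is bounded by Lemma \ref{lemma:2.1} on the relevant bounded region.

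With $F$ vanishing on the vacuum interface, I would then run an ODE-along-characteristics argument. Since $\supp u_0\subseteq B_{R_0}$, the velocity is zero on and outside $\partial B_{R_0}$ initially; for a boundary particle path $x(t;\xi_0)$ with $\xi_0\in\partial B_{R_0}$ the limiting velocity equation gives $\frac{d}{\text{d}t}u(t,x(t;\xi_0))=F=0$ with vanishing initial value, so $u(t,x(t;\xi_0))\equiv0$, and therefore $\frac{d}{\text{d}t}x(t;\xi_0)=u(t,x(t;\xi_0))=0$, i.e. $x(t;\xi_0)=\xi_0$. Thus the outer sphere $\partial B_{R_0}$ is frozen, so neither $S_p(t)$ nor the set where $u\ne0$ can cross it, and combining with the first step gives $\supp_x\rho(t,x)=\supp_x u(t,x)\subseteq B_{R_0}$ for all $t\in[0,T)$.

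The main obstacle is precisely the boundary behavior of $F$: each of its three terms is a negative power of $\rho$ multiplied by a derivative of $\rho$ or by a viscous flux, and it is exactly the two-sided constraint $1<\delta\le\gamma$ together with the interface regularity $\rho^{(\delta-1)/2}\in C^1$ that makes every term decay like a strictly positive power of $\rho$. If $\delta$ exceeded $\gamma$, or if $\rho$ were less regular at the free boundary, this cancellation would break down and the velocity could be driven into the vacuum, so the estimates of the second paragraph are where the real work lies.
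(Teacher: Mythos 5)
Your exponent computations are correct and in fact reproduce exactly the structural content of the paper's proof: the paper introduces $w=\rho^{\frac{\delta-1}{2}}$ and rewrites (\ref{eq:aaa2}) as (\ref{eq:ccc}), where the pressure term $\frac{2\gamma}{\delta-1}w^{\frac{2(\gamma-\delta)}{\delta-1}}w\nabla w$, the viscous term $g(w,u)$ and the radiation term $\widetilde{K}_a(I-\overline{B}(v))$ are precisely your three $O(\rho^{\text{positive power}})$ pieces, vanishing where $w=0$ thanks to $1<\delta\leq\gamma$, $\rho^{\frac{\delta-1}{2}}\in C^1$ and (\ref{fg}). The genuine gap is in how you use this. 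You derive the velocity equation only on $\{\rho>0\}$ and extend $F$ by continuity to the free boundary, explicitly declining to impose any equation in the vacuum. But on the open vacuum set the conservative system (\ref{eq:aaa2}) is vacuous: $\rho\equiv 0$ forces $p_m$, $\mathbb{T}$ and $K_a$ to vanish identically, so the momentum equation reads $0=0$ and places \emph{no} constraint on $u$ there. Consequently (a) for $\xi_0\in\partial B_{R_0}$ lying strictly inside the vacuum --- which is allowed, since (\ref{eq:df}) only gives $\supp\rho_0\subseteq B_{R_0}$ --- your ODE $\frac{d}{\text{d}t}u(t,x(t;\xi_0))=F=0$ is not available, because the path never touches $\overline{\{\rho>0\}}$ where $F$ is defined; and (b) your final step, ``neither $S_p(t)$ nor the set where $u\neq 0$ can cross'' the frozen sphere, is unjustified: the set $\{u\neq 0\}$ is transported by particle paths only if $u$ satisfies a transport equation in the vacuum, which is exactly what you never derive. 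The gap is essential, not presentational: within your framework one may add to $u$ any smooth field supported in the far-field vacuum (vanishing at $t=0$) and still have a solution of (\ref{eq:aaa2}) with the same data and regularity, so the conclusion $\supp_x u(t,x)\subseteq B_{R_0}$ cannot follow from boundary limits alone.

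The paper's mechanism is to read the reformulated system (\ref{eq:ccc}) as holding \emph{everywhere} (this is the operative meaning of a regular solution), so that in $(S_p(t))^c$ all degenerate terms vanish identically --- not merely in the limit --- leaving the Burgers equation $\partial_t u+u\cdot\nabla u=0$, i.e.\ (\ref{eq:5.3}). Then $u$ is constant along particle paths in the vacuum; since particle paths starting outside $\supp\rho_0$ remain outside $S_p(t)$ (injectivity of the flow map) and $u_0\equiv 0$ outside $B_{R_0}$, one gets $u\equiv 0$ on the exterior vacuum as in (\ref{eq:5.4}), which simultaneously freezes $\partial B_{R_0}$, prevents any interior particle from exiting, and yields both support statements. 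To repair your proof you must keep your estimates but promote them from a boundary-limit statement to the statement that the non-conservative velocity equation holds across the vacuum; your closing worry about ``control of the velocity up to the free boundary'' is aimed at the wrong region --- it is the velocity \emph{inside} the vacuum that the lemma is really about.
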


\begin{proof}
We introduce $w=\rho^{\frac{\delta-1}{2}}$, which is first induced by Yang-Zhu \cite{tyc2}. Then  system (\ref{eq:aaa2})  can be written as
\begin{equation}
\begin{cases}
\label{eq:ccc}
\displaystyle
\frac{1}{c}\partial_tI+\Omega\cdot\nabla I=-K_a\cdot(I-\overline{B}(v)),\\[8pt]
\partial_{t}w+u\cdot \nabla w+\frac{\delta-1}{2}w \nabla \cdot u=0,\\[8pt]
\displaystyle
\partial_tu+u\cdot\nabla u +\frac{2\gamma}{\delta-1}w^{\frac{2(r-\delta)}{\delta-1}}w\nabla w=g(w,u)+\frac{1}{c}\int_0^\infty \int_{S^{d-1}} \widetilde{K}_a\cdot(I-\overline{B}(v))\Omega \text{d}\Omega \text{d}v,
 \end{cases}
\end{equation}
where
\begin{equation} \label{eq:5.2}
\begin{split}
g(w,u)=&\frac{2\delta}{\delta-1}w\nabla w \cdot \left(n_1\nabla u+n_1(\nabla u)^\top+n_2 \nabla\cdot u \mathbb{I}_d\right)\\
&+w^2\left(\nabla \cdot (n_1\nabla u+n_1(\nabla u)^\top)+n_2\nabla (\nabla\cdot u)\right).
\end{split}
\end{equation}
Combining (\ref{fg}), (\ref{eq:ccc}) and (\ref{eq:5.2}), we have
\begin{equation}\label{eq:5.3}
\partial_tu+u\cdot\nabla u=0, \quad \text{in} \quad (S_p(t))^c.
\end{equation}
That is to say, $u$ is invariant along the particle path on $ (S_p(t))^c$. Thus,  from (\ref{eq:df}) we have
\begin{equation}\label{eq:5.4}
u(t,x)\equiv 0, \quad \text{in} \quad  (S_p(t))^c.
\end{equation}
Using the $C^1$ continuity of $u(t,x)$, we get
\begin{equation*}
\frac{d}{\text{d}t}x(t;\xi_0)=u(t;x(t; \xi_0))\equiv 0 ,  \quad \xi_0 \in \partial\supp \rho_0(x),
\end{equation*}
and thus $x(t; \xi_0)\equiv \xi_0$.
So
\begin{equation} \label{eq:5.5} \supp_x \rho(t,x)=\supp_x u(t,x)\subseteq B_{R_0},\ \text{for}\ t\in [0,T).\end{equation}
\end{proof}
Now we introduce an important lemma.
 We assume that $(I,\rho,u)$ is a regular solution of the Cauchy problem (\ref{eq:aaa2}) and (\ref{eq:2.2e}) satisfying (\ref{eq:df}).
Then, we have the following property for $I_r(t)$ defined by (\ref{def}).
\begin{lemma}
\label{lemma:3.3}
Let $1<\delta\leq\gamma$ be a positive constant.  If
\begin{equation}\label{co:22}
\frac{d}{\text{d}t}I_r(t) \leq \frac{2-d(\gamma-1)}{t+\kappa}I_r(t)+\frac{\delta(\gamma-1)}{2\gamma(t+\kappa)^2}I_r(t)+\frac{\gamma-\delta}{\gamma}|B_{R_0}|,
\end{equation}
then we have
$$
I_r(t)\leq C(t^{2-d(\gamma-1)}+t\ln t+1),\ t\in [0,T),
$$
where $C$ is a generic positive constant. In particular, the above estimate implies that
$$ T < +\infty. $$
\end{lemma}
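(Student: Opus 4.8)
The plan is to treat the hypothesis (\ref{co:22}) as a scalar linear differential inequality
\[
\frac{d}{\text{d}t}I_r(t)\le a(t)I_r(t)+b,\qquad
a(t)=\frac{2-d(\gamma-1)}{t+\kappa}+\frac{\delta(\gamma-1)}{2\gamma(t+\kappa)^2},\quad b=\frac{\gamma-\delta}{\gamma}|B_{R_0}|\ge 0,
\]
the sign of $b$ coming from $\delta\le\gamma$, and to integrate it with the integrating factor $e^{-A(t)}$, $A(t)=\int_0^t a(s)\,\text{d}s$. First I would compute $A(t)$ explicitly: the first summand of $a$ contributes $(2-d(\gamma-1))\ln\frac{t+\kappa}{\kappa}$, while the second summand contributes $\frac{\delta(\gamma-1)}{2\gamma}\bigl(\frac1\kappa-\frac1{t+\kappa}\bigr)$, which remains in the bounded interval $[0,\frac{\delta(\gamma-1)}{2\gamma\kappa}]$ for all $t\ge 0$.

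The decisive observation is that this second, bounded contribution alters the integrating factor only by a factor bounded above and below by positive constants, so there are $0<c_1\le c_2$ with $c_1(t+\kappa)^{q}\le e^{A(t)}\le c_2(t+\kappa)^{q}$, where $q:=2-d(\gamma-1)$. Multiplying (\ref{co:22}) by $e^{-A(t)}$ and integrating over $[0,t]$ gives $I_r(t)\le e^{A(t)}I_r(0)+b\,e^{A(t)}\int_0^t e^{-A(s)}\,\text{d}s$. Using the two-sided comparison, the homogeneous part is $O((t+\kappa)^{q})$, and the forcing term reduces to estimating $\int_0^t(s+\kappa)^{-q}\,\text{d}s$. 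When $q\ne 1$ this integral is $O((t+\kappa)^{1-q}+1)$, which after multiplication by $(t+\kappa)^q$ yields $O((t+\kappa)+(t+\kappa)^q)$; in the resonant case $q=1$ (that is, $\gamma=1+\frac1d$) it equals $\ln\frac{t+\kappa}{\kappa}$, producing the $t\ln t$ term. Collecting the contributions and absorbing the linear and constant pieces, one arrives at the claimed bound $I_r(t)\le C(t^{2-d(\gamma-1)}+t\ln t+1)$.

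For the blow-up conclusion I would read off a matching lower bound from the representation (\ref{def}). Every summand there is nonnegative, since $Q_r(t)\ge0$ by Lemma \ref{lemma:2.5}, so $I_r(t)\ge\frac{2}{\gamma-1}(t+\kappa)^2P_m(t)$. By Lemma \ref{lemma:3.1} the support of $\rho(t,\cdot)$ stays inside $B_{R_0}$, and Jensen's inequality applied to $p_m=\rho^\gamma$ together with conservation of mass gives
\[
P_m(t)=\int_{\mathbb{R}^d}\rho^\gamma\,\text{d}x\ge|B_{R_0}|^{1-\gamma}m(0)^\gamma>0 .
\]
Hence $I_r(t)\ge c_0(t+\kappa)^2$ for a positive constant $c_0$, a quadratic lower bound.

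Comparing this with the upper bound, and noting $q=2-d(\gamma-1)<2$ for $\gamma>1$ while $t\ln t=o(t^2)$, the upper bound is $o(t^2)$. The two estimates are therefore incompatible once $t$ is large, which forces $T<+\infty$. The main obstacle I anticipate is the bookkeeping in the integrating-factor step: verifying that the $(t+\kappa)^{-2}$ correction in $a(t)$ is harmless because it integrates to a bounded quantity, and correctly isolating the resonant exponent $q=1$ responsible for the logarithmic factor. Once the comparison $c_1(t+\kappa)^q\le e^{A(t)}\le c_2(t+\kappa)^q$ is established, the remaining estimates and the final contradiction are routine.
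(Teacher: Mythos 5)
Your proposal is correct and follows essentially the same route as the paper: solve the linear ODE inequality with an integrating factor, isolate the resonant exponent $2-d(\gamma-1)=1$ that produces the logarithm, obtain the quadratic lower bound $I_r(t)\geq \frac{2}{\gamma-1}(t+\kappa)^2 P_m(t)\geq c_0(t+\kappa)^2$ via Jensen's inequality and mass conservation, and conclude by incompatibility of the two bounds. The only (harmless) difference is cosmetic: you absorb the bounded contribution $\frac{\delta(\gamma-1)}{2\gamma}\bigl(\frac1\kappa-\frac1{t+\kappa}\bigr)$ into two-sided constants $c_1(t+\kappa)^q\le e^{A(t)}\le c_2(t+\kappa)^q$, whereas the paper keeps the factor $e^{-\frac{\delta(\gamma-1)}{2\gamma(t+\kappa)}}$ explicit and bounds it by the elementary inequality $e^{-cx}\le 1+x$.
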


\begin{proof} Solving (\ref{co:22}) directly, we get
\begin{equation}\label{k1}
\begin{split}
I_r(t) \leq & \Big(\frac{t+\kappa}{\kappa}\Big)^{2-d(\gamma-1)}e^{-\frac{\delta(\gamma-1)}{2\gamma(t+\kappa)}} \Big( e^{\frac{\delta(\gamma-1)}{2\gamma\kappa}}I_r(0) \\
&+\frac{\gamma -\delta}{\gamma }|B_{R_0}|  \int_{0}^{t}\Big(\frac{\tau+\kappa}{\kappa}\Big)^{d(\gamma-1)-2} e^{ \frac{\delta(\gamma-1)}{2\gamma(\tau+\kappa)}} d \tau \Big).
\end{split}
\end{equation}
For $d(\gamma-1)-2 \neq -1$, from (\ref{k1}) we get,
\begin{equation}\label{eq:3.ode1}
\begin{split}
I_r(t)
\leq &\Big(\frac{t+\kappa}{\kappa}\Big)^{2-d(\gamma-1)}e^{-\frac{\delta(\gamma-1)}{2\gamma(t+\kappa)}}
\left(e^{\frac{\delta(\gamma-1)}{2\gamma\kappa}}I_r(0)-\frac{(\gamma -\delta)\kappa}{\gamma(d(\gamma-1)-1)}|B_{R_0}|e^{\frac{\delta(\gamma-1)}{2\gamma\kappa}} \right) \\[6pt]
&+\frac{(\gamma-\delta)(t+\kappa)}{\gamma (d(\gamma-1)-1)}|B_{R_0}| e^{-\frac{\delta(\gamma-1)}{2\gamma(t+\kappa)}} e^{\frac{\delta(\gamma-1)}{2\gamma\kappa}}.
\end{split}
\end{equation}
For $d(\gamma-1)-2=-1$, from (\ref{k1}) we get
\begin{equation}\label{k2}
\begin{split}
I_r(t) \leq & \Big(\frac{t+\kappa}{\kappa}\Big)^{2-d(\gamma-1)}e^{-\frac{\delta(\gamma-1)}{2\gamma(t+\kappa)}} \Big( e^{\frac{\delta(\gamma-1)}{2\gamma\kappa}}I_r(0)+\frac{\gamma -\delta}{\gamma}\kappa|B_{R_0}|   e^{ \frac{\delta(\gamma-1)}{2\gamma}} (\ln(t+\kappa)-\ln\kappa) \Big).
\end{split}
\end{equation}
We construct a function
$$
g(x)=e^{-\frac{\delta(\gamma-1)}{2\gamma}x}-(x+1),  \quad \text{for} \ x\in [0,1].
$$
Direct calculation leads to
$$
g'(x)=\left(-\frac{\delta(\gamma-1)}{2\gamma}\right)e^{-\frac{\delta(\gamma-1)}{2\gamma}x}-1<0 ,  \quad \text{for} \ x\in [0,1].
$$
From $g(0)=0$, we have
\begin{equation}\label{co:24}
e^{-\frac{\delta(\gamma-1)}{2\gamma}x}\leq x+1,  \quad \text{for} \ x\in [0,1],
\end{equation}
i.e.,
\begin{equation}\label{co:25}
e^{-\frac{\delta(\gamma-1)}{2\gamma(t+\kappa)}} \leq \frac{1}{t+\kappa}+1.
\end{equation}
Then from (\ref{eq:3.ode1}), (\ref{k2}) and (\ref{co:25}), for $d(\gamma-1)-2 \neq -1$, we have
\begin{equation}\label{eq:32}
I_r(t) \leq C(t^{2-d(\gamma-1)}+ t+1), \ \text{for} \ t\in [0,T);
\end{equation}
for $d(\gamma-1)-2=-1$, we have
\begin{equation}\label{eq:3g2}
I_r(t) \leq C(t+(t+\kappa)\ln(t+\kappa)+1), \ \text{for} \ t\in [0,T).
\end{equation}
 Due to the definition of $I_r(t)$ and Jensen's inequality, we also get
\begin{equation}\label{ode:3.21}
\begin{split}
I_r(t)&\geq \frac{2(t+\kappa)^2}{\gamma-1}\int_{B_{R_0}}p_m(t,x)\text{d}x \geq  \frac{2(t+\kappa)^2}{\gamma-1}  |B_{R_0}|\int_{B_{R_0}} \rho^\gamma(t,x)\frac{\text{d}x}{|B_{R_0}|} \\
&\geq  \frac{2(t+\kappa)^2}{\gamma-1}  |B_{R_0}|^{1-\gamma} m(0)^\gamma,
\end{split}
\end{equation}
where we have used the fact that
$$ \int_{B_{R_0}} \rho(t,x) \text{d}x =\int_{B_{R_0}}\rho_0(x)\text{d}x=m(0).$$
Combining (\ref{eq:32}) and (\ref{ode:3.21}), for $d(\gamma-1)-2\neq -1$, we have
\begin{equation*}
\frac{2(t+\kappa)^2}{\gamma-1}  |B_{R_0}|^{1-\gamma} m(0)^\gamma \leq C(t^{2-d(\gamma-1)}+ t+1);
\end{equation*}
for $d(\gamma-1)-2 = -1$, we have
\begin{equation*}
\frac{2(t+\kappa)^2}{\gamma-1}  |B_{R_0}|^{1-\gamma} m(0)^\gamma \leq C(t+(t+\kappa)\ln(t+\kappa)+1), \ \text{for} \ t\in [0,T),
\end{equation*}
which imply that $T< +\infty$.
\end{proof}

Now we prove the blow-up of regular solutions for multi-dimensional isentropic flow.

\begin{theorem}[\textbf{Multi-dimensional isentropic flow}]\label{th:2.3}\ \\
Let $T>0$. Suppose that
 $ (I(t,x,v,\Omega),\rho(t,x),u(t,x))$
 is a regular solution to the Cauchy problem (\ref{eq:aaa2}) and (\ref{eq:2.2e}) satisfying (\ref{eq:df}).
Then
\begin{equation}\label{eq:bbb2}
T < +\infty.
\end{equation}
\end{theorem}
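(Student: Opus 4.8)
The plan is to reduce Theorem~\ref{th:2.3} to the ODE criterion of Lemma~\ref{lemma:3.3}. Two ingredients make this possible and are already available: the support of $\rho$ (and of $u$) stays inside $B_{R_0}$ by Lemma~\ref{lemma:3.1}, and the radiation field decouples from the fluid on that support after a finite time. First I would combine these facts to kill every radiation source term, and then differentiate an $I_r$-type functional restricted to the support so as to produce exactly the differential inequality (\ref{co:22}).

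By Lemma~\ref{lemma:3.1}, $S_p(t)=\supp_x\rho(t,\cdot)=\supp_x u(t,\cdot)\subseteq B_{R_0}$ for all $t\in[0,T)$, and moreover $u\equiv0$ on $(S_p(t))^c$ by (\ref{eq:5.4}); since $\mu(\rho)$ and $\lambda(\rho)$ vanish where $\rho=0$, also $\mathbb{T}\equiv0$ there. Repeating the photon trace-back from the proof of Theorem~\ref{co:2.9}, for $y\in S_p(t)\subseteq B_{R_0}$ and any $\Omega\in S^{d-1}$ the foot point $y_0=y-c\Omega t$ satisfies $|y_0|\geq ct-R_0>R_0$ once $t>T_c:=\frac{2R_0}{c}$, so the transport formula (\ref{eq:*}) together with $I_0\equiv\overline B(v)$ for $|x|\geq R_0$ from (\ref{eq:df}) forces $I(t,x,v,\Omega)\equiv\overline B(v)$ on $B_{R_0}$ for every $t>T_c$. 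As $K_a=\rho\widetilde K_a$ is supported in $S_p(t)$, every term $K_a\cdot(I-\overline B)$ then vanishes, and so does the restricted radiation functional $\widetilde Q_r(t)$ of (\ref{def2}); note that the directional condition (\ref{bbbb}) is not needed here, only $I_0\equiv\overline B$ for $|x|\ge R_0$.

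Working with $\widetilde I_r(t)$ from (\ref{def2}), which for $t>T_c$ reduces to $\int_{S_p(t)}|x-(t+\kappa)u|^2\rho\,\text{d}x+\frac{2}{\gamma-1}(t+\kappa)^2\int_{S_p(t)}p_m\,\text{d}x$, I would differentiate it using the continuity and momentum equations, the isentropic energy identity (\ref{eq:5.00}), and Reynolds' transport theorem (Lemma~\ref{3.1}). Because $u$ and $\mathbb{T}$ vanish on $\partial S_p(t)$ the boundary integrals drop out, and for $t>T_c$ the radiation contributions are gone, so the computation should collapse to
\[
\frac{d}{\text{d}t}\widetilde I_r(t)=\frac{2(2-d(\gamma-1))}{\gamma-1}(t+\kappa)P_m(t)+2d(t+\kappa)\int_{S_p(t)}\rho^\delta\,\nabla\cdot u\,\text{d}x-2(t+\kappa)^2\int_{S_p(t)}\mathbb{T}:\nabla u\,\text{d}x .
\]
Using $\widetilde I_r\geq\frac{2}{\gamma-1}(t+\kappa)^2P_m$, the pressure term is bounded by $\frac{2-d(\gamma-1)}{t+\kappa}\widetilde I_r$ when $2-d(\gamma-1)\geq0$ (and it is itself $\leq0$ when $\gamma>1+\frac2d$, which only helps).

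The crux, and the step I expect to be the main obstacle, is the pair of viscous terms: this is precisely where the degenerate coefficients differ from the constant-viscosity situation of Theorem~\ref{co:2.2}, in which the analogous integrals reduce to vanishing boundary terms, whereas here $2d(t+\kappa)\int\rho^\delta\,\nabla\cdot u\,\text{d}x$ genuinely survives and has no definite sign. The idea is to play this unsigned term against the dissipation. The positivity computation of Lemma~\ref{lemma:2.2}, which applies since $\mu(\rho)=n_1\rho^\delta\geq0$ and $\frac2d\mu(\rho)+\lambda(\rho)=\rho^\delta\geq0$, gives the weighted bound $\int_{S_p(t)}\mathbb{T}:\nabla u\,\text{d}x\geq\int_{S_p(t)}\rho^\delta(\nabla\cdot u)^2\,\text{d}x$; then Cauchy--Schwarz and Young yield $2d(t+\kappa)\int\rho^\delta\,\nabla\cdot u\,\text{d}x\leq2(t+\kappa)^2\int\rho^\delta(\nabla\cdot u)^2\,\text{d}x+C_d\int\rho^\delta\,\text{d}x$, so the two viscous terms together are controlled by $C_d\int_{S_p(t)}\rho^\delta\,\text{d}x$. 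Finally, since $1<\delta\leq\gamma$, the elementary inequality $\rho^\delta\leq\frac\delta\gamma\rho^\gamma+\frac{\gamma-\delta}\gamma$ gives $\int_{S_p(t)}\rho^\delta\,\text{d}x\leq\frac\delta\gamma P_m+\frac{\gamma-\delta}\gamma|B_{R_0}|$, and inserting $P_m\leq\frac{\gamma-1}{2(t+\kappa)^2}\widetilde I_r$ turns this into the $\frac{\delta(\gamma-1)}{2\gamma(t+\kappa)^2}\widetilde I_r$ and $\frac{\gamma-\delta}{\gamma}|B_{R_0}|$ terms of (\ref{co:22}) (the dimensional constant $C_d$ being harmless). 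This establishes (\ref{co:22}) for $\widetilde I_r$ on $(T_c,T)$; since $\widetilde I_r$ obeys the same quadratic lower bound (\ref{ode:3.21}), Lemma~\ref{lemma:3.3} (applied to $\widetilde I_r$) then forces $T<+\infty$.
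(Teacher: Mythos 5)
Your proposal is correct and takes essentially the same route as the paper's own proof: suppress all radiation source terms for $t>T_c=\frac{2R_0}{c}$ via photon trace-back combined with the support invariance of Lemma \ref{lemma:3.1}, differentiate the $I_r$-type functional to isolate the two viscous terms, control them by the pointwise dissipation bound $\mathbb{T}:\nabla u\geq \rho^{\delta}(\nabla\cdot u)^2$ plus Young's inequality, use $\rho^{\delta}\leq \frac{\delta}{\gamma}\rho^{\gamma}+\frac{\gamma-\delta}{\gamma}$, and conclude with the ODE criterion of Lemma \ref{lemma:3.3}. Your two deviations, namely working with the restricted functional $\widetilde I_r$ of (\ref{def2}) (which makes the vanishing of the radiation part clean, where the paper loosely asserts $Q_r(t)=0$ in (\ref{key2})) and the trace factor $2d(t+\kappa)\int\rho^{\delta}\nabla\cdot u\,\mathrm{d}x$ in place of the paper's $2(t+\kappa)\int\rho^{\delta}\nabla\cdot u\,\mathrm{d}x$, are harmless refinements of the same argument.
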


\begin{proof}
From the continuity equation, momentum equations and integrating by parts, we have
\begin{equation} \label{eq:5.6}
\begin{split}
\frac{d}{\text{d}t}I_r(t)=&\frac{2}{\gamma-1}(2-d(\gamma-1))(t+\kappa)\int_{\mathbb{R}^d} p_m \text{d}x+Q_r(t)+J_1+J_2,
\end{split}
\end{equation}
which has two additional terms $J_1$ and $J_2$ compared with (\ref{cd:4}) for non-isentropic case and
\begin{equation*}
\begin{split}
J_1=&-2(t+\kappa)\int_{\mathbb{R}^d} x \cdot (\nabla \cdot \mathbb{ T}) \text{d}x,\\
J_2=&2(t+\kappa)^2 \int_{\mathbb{R}^d}\partial_t\left(\frac{1}{2}\rho|u|^2+\frac{p_m}{\gamma-1}+\widetilde{E}_r(t)\right)\text{d}x.
\end{split}
\end{equation*}
We first look at $J_1$. Since
\begin{equation*}
\begin{split}
\nabla \cdot (x\cdot \mathbb{T })
&=\sum_{i=1}^{d}\sum_{j=1}^{d}\left( x_{i}\frac{\partial{\mathbb{T }(ij)}}{\partial{x_j}}+\delta_{ij}\mathbb{T }(ij)\right)=x\cdot (\nabla \cdot \mathbb{T }) +\sum_{i=1}^{d}\mathbb{T }(ii)\\
&=x\cdot (\nabla \cdot \mathbb{T})+\Big(\lambda(\rho)+\frac{2}{d}\mu(\rho)\Big)\nabla \cdot u(t,x)\\
&=x\cdot (\nabla \cdot \mathbb{T})+\rho^\delta\nabla \cdot u(t,x).
\end{split}
\end{equation*}
Integrating by parts, we have
\begin{equation} \label{eq:3.511}
J_1=-2(t+\kappa)\int_{\mathbb{R}^d} x \cdot (\nabla \cdot \mathbb{T}) \text{d}x=2(t+\kappa)\int_{\mathbb{R}^d} \rho^\delta \nabla \cdot u(t,x) \text{d}x.\end{equation}
Now we check $J_2$.
\begin{equation} \label{eq:5.8}
\begin{split}
\partial_t \left( \frac{1}{2}\rho|u|^2+\frac{p_m}{\gamma-1}+\widetilde{E}_r(t) \right)&=-\nabla \cdot \left(\frac{1}{2}\rho|u|^2u\right)-\frac{\gamma}{\gamma-1}\nabla \cdot (p_m u)+u\cdot (\nabla \cdot \mathbb{ T})\\
&-\nabla\cdot F_r+\int_0^\infty \text{d}v \int_{S^{d-1}}\left( -1+\frac{u \cdot \Omega}{c}\right)K_a\cdot(I-\overline{B}(v)) \text{d}\Omega \text{d}v,
\end{split}
\end{equation}
then we have
\begin{equation}\label{k7}
\begin{split}
J_2=2(t+\kappa)^2 \int_{\mathbb{R}^d} \left(u\cdot (\nabla \cdot \mathbb{ T})+\int_0^\infty \text{d}v \int_{S^{d-1}}\Big( -1+\frac{u \cdot \Omega}{c}\Big)K_a\cdot(I-\overline{B}(v))\text{d}\Omega \text{d}v\right)\text{d}x.
\end{split}
\end{equation}
From (\ref{eq:bbb}) and Cauchy's inequality, we know that
\begin{equation}\label{eq:5.9}
\begin{split}
\nabla \cdot (u \mathbb{T })
=&\sum_{i=1}^{d}\sum_{j=1}^{d}\left( u_{i}\frac{\partial{\mathbb{T }_{ij}}}{\partial{x_j}}+\partial_{j}u_{i}\mathbb{T }_{ij}\right)\\
=&u\cdot (\nabla \cdot \mathbb{T }) +\sum_{i=1}^{d}\sum_{i=1}^{d} \partial_{j}u_{i} \Big(\mu(\rho)  (\partial_{j}u_{i}+\partial_{i}u_{j})+\lambda(\rho)\delta_{ij}\nabla \cdot u\Big)\\
=&u\cdot (\nabla \cdot \mathbb{T })+2\mu(\rho)\sum_{i=1}^{d}(\partial_{i}u_{i})^{2}+\mu(\rho)  \sum_{i\neq j}^{d} (\partial_{i}u_{j})^2\\
&+2\mu(\rho) \sum_{i>j} (\partial_{i}u_{j})(\partial_{j}u_{i})+\lambda(\rho) \left(\sum_{i=1}^{d}\partial_{i}u_{i}\right)^{2}\\
\geq& u\cdot (\nabla \cdot \mathbb{T })+\big(\lambda(\rho)+\frac{2}{d}\mu(\rho)\big)(\nabla\cdot u)^2=u\cdot (\nabla \cdot \mathbb{T })+\rho^\delta (\nabla\cdot u)^2.
\end{split}
\end{equation}
We integrate (\ref{eq:5.9}) to get
\begin{equation}\label{eq:5.10}
\int_{\mathbb{R}^d} u\cdot (\nabla \cdot \mathbb{T })\text{d}x\leq -\int_{\mathbb{R}^d}\rho^\delta (\nabla\cdot u)^2\text{d}x.
\end{equation}
From (\ref{gh1}) and Lemma \ref{lemma:3.1}, we know that
if $ T\leq T_c=\frac{2R_0}{c}$,  the proof is finished.\\
If $ T > T_c $, $\forall x \in S_p(t)\subseteq B_{R_0}$, we have
\begin{equation}\label{key2}
Q_r(t)=\int_0^\infty \text{d}v \int_{S^{d-1}}\left( -1+\frac{u \cdot \Omega}{c}\right)K_a\cdot(I-\overline{B}(v))\text{d}\Omega \text{d}v=0.
\end{equation}
Then from (\ref{k7}),  (\ref{eq:5.10}) and (\ref{key2}), we have
\begin{equation}\label{k9}
J_2\leq -2(t+\kappa)^2\int_{\mathbb{R}^d}\rho^\delta (\nabla\cdot u)^2\text{d}x,\ \forall t\in (T_c,T).
\end{equation}
From Lemma \ref{lemma:3.1}, along with (\ref{eq:5.6}), (\ref{eq:3.511}) and (\ref{k9}), for $T_c< t< T$, we get
\begin{equation}\label{eq:5.13}
\begin{split}
\frac{d}{\text{d}t}I_r(t)\leq & \frac{2}{\gamma-1}(2-d(\gamma-1))(t+\kappa)\int_{B_{R_0}}p_m \text{d}x\\
&-2(t+\kappa)^2\int _{B_{R_0}}\rho^\delta(\nabla\cdot u)^2\text{d}x+2(t+\kappa)\int_{B_{R_0}} \rho^\delta \nabla \cdot u(t,x) \text{d}x.
\end{split}
\end{equation}
According to the Cauchy's inequality and Young's inequality, we have
\begin{equation} \label{11111}
\begin{split}
&-2(t+\kappa)^2\int_{B_{R_0}}\rho^\delta(\nabla\cdot u)^2\text{d}x+2(t+\kappa)\int_{B_{R_0}} \rho^\delta \nabla \cdot u(t,x) \text{d}x\\
\leq & -2(t+\kappa)^2\int _{B_{R_0}}\rho^\delta(\nabla\cdot u)^2\text{d}x+(t+\kappa)^2\int _{B_{R_0}}\rho^\delta(\nabla\cdot u)^2\text{d}x+\int_{B_{R_0}}\rho^\delta \text{d}x\\
\leq & \int_{B_{R_0}}\rho^\delta \text{d}x \leq \frac{\delta}{\gamma}\int_{B_{R_0}}\rho^\gamma \text{d}x+\frac{\gamma-\delta}{\gamma}|B_{R_0}|.
\end{split}
\end{equation}
Combining (\ref{eq:5.13}) and (\ref{11111}), for $ T > T_c $ we have
\begin{equation}\label{eq:3.1331}
\begin{split}
\frac{d}{\text{d}t}I_r(t) \leq & \frac{2}{\gamma-1}(2-d(\gamma-1))(t+\kappa)\int_{B_{R_0}}p_m \text{d}x+\frac{\delta}{\gamma}\int_{B_{R_0}}\rho^\gamma \text{d}x+\frac{\gamma-\delta}{\gamma}|B_{R_0}|.
\end{split}
\end{equation}
We  look  back at $ I_r(t) $ of (\ref{def}).
\begin{equation}\label{eq:3.131}
\begin{split}
\frac{2-d(\gamma-1)}{t+\kappa}I_r(t)=&\frac{2}{\gamma-1}\big(2-d(\gamma-1)\big)\int_{B_{R_0}} |x-(t+\kappa)u|^{2}\rho \text{d}x\\
&+\frac{2(2-d(\gamma-1))}{\gamma-1}(t+\kappa)\int_{B_{R_0}} p_m \text{d}x.
\end{split}
\end{equation}
In the case $1< \gamma < 1+\frac{2}{d}$, comparing (\ref{eq:3.1331}) and (\ref{eq:3.131}), we have
\begin{equation}\label{eq:3.141}
\frac{d}{\text{d}t}I_r(t) \leq \frac{2-d(\gamma-1)}{t+\kappa} I_r(t)+\frac{\delta(\gamma-1)}{2\gamma(t+\kappa)^2}I_r(t)+\frac{\gamma-\delta}{\gamma}|B_{R_0}|,
\end{equation}
where $ T_c< t< T $. From Lemma \ref{lemma:3.3}, we have $T< +\infty$.

In the case $ 1+\frac{2}{d}\leq \gamma < +\infty $,
due to $2-d(\gamma-1)\leq 0$ and $\kappa\geq 1$, from (\ref{eq:3.1331}) we have
\begin{equation}\label{eq:3.1335}
\begin{split}
\frac{d}{\text{d}t}I_r(t) \leq &  \frac{\delta}{\gamma}\int_{B_{R_0}}\rho^\gamma \text{d}x+\frac{(\gamma-\delta)}{\gamma}|B_{R_0}|\\
\leq &\frac{\delta(\gamma-1)}{2\gamma(t+\kappa)^2}I_r(t)+\frac{\gamma-\delta}{\gamma}|B_{R_0}|.
\end{split}
\end{equation}
Because $\frac{\delta(\gamma-1)}{2\gamma}<2$, similarly to the proof of Lemma \ref{lemma:3.3}, we still have $T< +\infty$.
\end{proof}

\section{Remarks on one-dimensional case}
In one-dimensional space, the system (see \cite{BD} \cite{BS}) is not obtained directly via letting $d=1$ in system (\ref{eq:2.1}). Consider the three-dimensional case that the specific radiation intensity $I(t,x,v,\Omega)$ ($x=(x_1,x_2,x_3)$) only depends  upon  the single spatial coordinate $x_3$ and the single angular coordinate $\phi$, the angle between $\Omega$ and $x_3$ axis. Introducing $\omega=cos\phi$, since $I=I(t,x_3,v,\omega)$, we have
\begin{equation}
\label{eq:1.qq}
\Omega \cdot \nabla I(t,x_3,v,\omega)=\Omega_3 \partial_{x_3}I(t,x_3,v,\omega)=\omega \partial_{x_3}I(t,x_3,v,\omega).
\end{equation}
So, the one-dimensional radiation hydrodynamics equations read as (\cite{gp})
\begin{equation}
\label{eq:1.qq}
\begin{cases}
\displaystyle
\frac{1}{c}\partial_tI+\omega\partial_x I=-K_a\cdot(I-\overline{B}(v)),\\[10pt]
\displaystyle
\partial_{t}\rho+\partial_x(\rho u)=0,\\[10pt]
\displaystyle
\partial_{t}(\rho u)+\partial_x(\rho u^2+p_m)
   =\mu \partial_{xx}u+\frac{1}{c}\int_0^\infty \int_{-1}^{1} K_a\cdot(I-\overline{B}(v))\omega d\omega \text{d}v,\\[10pt]
 \displaystyle
(\partial_tS+u\cdot \nabla S)p_m=\widetilde{N}_r,
\end{cases}
\end{equation}
where
$$
\widetilde{N}_r=(\gamma-1)\Big(\int_0^\infty \int_{-1}^{1} \left(1-\frac{u\omega}{c}\right)K_a\cdot(I-\overline{B}(v)) d\omega \text{d}v
+(\partial_xu)^2+\partial_x(k(\theta)\partial_x\theta)\Big).
$$
We emphasize that the travel direction $\omega\in [-1,1]$, and
\begin{equation}
\label{eq:1.31}
\begin{cases}
\displaystyle
E_r=\frac{1}{c}\int_0^\infty \int_{-1}^{1} I(t,x,v,\omega)d\omega \text{d}v,\\[10pt]
\displaystyle
F_r=\int_0^\infty \int_{-1}^{1}  I(t,x,v,\omega)\omega d\omega \text{d}v,\\[10pt]
\displaystyle
P_r=\frac{1}{c}\int_0^\infty \int_{-1}^{1}  I(t,x,v,\omega)\omega^2 d\omega \text{d}v.
\end{cases}
\end{equation}
We consider the Cauchy problem of  (\ref{eq:1.qq})  with the initial data
\begin{equation} \label{eq:2.2q}
I|_{t=0}=I_0(x,v,\omega), \quad (\rho,u,S)|_{t=0}=(\rho_0(x),u_0(x), S_0(x))
\end{equation}
satisfying
\begin{equation} \label{eq:5u}
I_0(x,v,\omega)\in L^2( \mathbb{R}^+ \times [-1,1]; H^s(\mathbb{R})),\ (\rho_0,u_0,S_0)(x)\in H^s(\mathbb{R}),
\end{equation}
\begin{equation} \label{eq:2.2w}
I_0 \geq \overline{B}(v) \ \text{for} \ (x,v,\omega)\in \mathbb{R}\times \mathbb{R}^{+}\times [-1,1]; \ I_0 \equiv\overline{B}(v),\ \forall \ |x|\geq R_0.
\end{equation}
\begin{remark}\label{r55}
From system (\ref{eq:1.31}), we know that $|\omega|\leq 1$ in one-dimensional case instead of $|\Omega|=1$ in multi-dimensional case. However, according to the proofs in Sections 2 and 3, there are some difficulties due to $|w|\leq 1$ instead of $|\omega|=1$ for the proofs of Proposition \ref{pro:2.1} and Theorem \ref{th:2.1}, since we cannot get a valid estimate for the increasing of the material entropy.
In order to avoid this difficulty, we assume that the initial specific intensity of radiation $I_0$ has some directional conditions as shown in Theorem \ref{th:2.0000} such that we can prove the desired conclusions via the property of the total material energy obtained in Section $2.2$. So we can also extend some conclusions for multi-dimensional case to one-dimensional model and have the following theorem.
\end{remark}

\begin{theorem}\label{th:2.1w}
The conclusions obtained in Lemmas \ref{lemma:2.1}-\ref{lemma:2.5}, Theorems \ref{th:2.0000}-\ref{th:2.20}, Theorems \ref{th:3.1} and \ref{co:2.1} are all true for the Cauchy problem (\ref{eq:1.qq})-(\ref{eq:2.2q}).
\end{theorem}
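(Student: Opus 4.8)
The plan is to revisit each listed result and verify, one proof at a time, that the only structural features of the multi-dimensional argument actually used are (i) $|\omega|\le 1$, (ii) the nonnegativity $K_a\ge 0$ and $I-\overline{B}(v)\ge 0$, and (iii) the choice $\kappa=\max\{1,R_0/c\}$, all of which survive the replacement of $\Omega\cdot\nabla$ by $\omega\partial_x$ and of $\int_{S^{d-1}}$ by $\int_{-1}^{1}$. First I would treat Lemmas \ref{lemma:2.1}--\ref{lemma:2.5}. In Lemma \ref{lemma:2.1} the photon path now solves $\frac{d}{\mathrm{d}t}y(t;y_0)=c\omega$, so $y_0=y-c\omega t$ and $|y-y_0|=c|\omega|t\le ct$; the representation formula (\ref{eq:**}) is unchanged, which yields $I\ge\overline{B}(v)$ and, since $|x|\ge R_0+ct$ forces $|y_0|\ge R_0$, also $I\equiv\overline{B}(v)$ there. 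The directional statement carries over verbatim because $x_0\omega=x\omega-c\omega^2 t\le x\omega$. Lemma \ref{lemma:2.2} reduces to the manifest identity $\partial_x(u\,\mu\partial_x u)-u\,\mu\partial_{xx}u=\mu(\partial_x u)^2\ge 0$, matching the viscous term in $\widetilde{N}_r$, and Lemma \ref{lemma:2.4} is identical. For Lemma \ref{lemma:2.5} the one point to check is the sign of $2c(t+\kappa)-2x\omega$ on $|x|\le R_0+ct$: since $|x\omega|\le|x|$ exactly as before (here because $|\omega|\le 1$, not $|\omega|=1$), this quantity is still $\ge 2c\kappa-2R_0\ge 0$, so $Q_r(t)\ge 0$.

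Next I would transcribe the proofs of Theorems \ref{th:2.0000} and \ref{th:2.20}. The crucial observation is that neither uses the entropy lower bound or Proposition \ref{pro:2.1}; they rest only on the second-moment identity (\ref{eq:3.8}), the momentum inequality (\ref{eq:3.11}) together with the energy monotonicity of Lemma \ref{lemma:2.4}, and the sign of the radiation momentum source (\ref{eq:3.10}). That sign is produced by the directional hypothesis (\ref{bbbb}), now read as $I_0\equiv\overline{B}(v)$ for $x\omega\le 0$: splitting $\frac1c\int_{\mathbb{R}}\int_0^\infty\int_{-1}^{1}K_a(I-\overline{B}(v))\,x\omega\,\mathrm{d}\omega\,\mathrm{d}v\,\mathrm{d}x$ into $\{x\omega\ge 0\}$ and $\{x\omega<0\}$, the first piece is nonnegative and the second vanishes by the one-dimensional Lemma \ref{lemma:2.1}. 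With these ingredients the ODE comparisons (\ref{eq:3.1733})--(\ref{eq:3.1833}) and (\ref{eq:4.4})--(\ref{**}) go through word for word, giving $T<+\infty$.

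For Theorem \ref{th:3.1} the argument actually simplifies in one dimension. In the vacuum region $K_a=0$, so the 1D momentum equation reduces to $\mu\partial_{xx}u=0$; with $\mu>0$ this forces $\partial_{xx}u\equiv 0$, hence $u$ is affine on each vacuum interval, and since $u(t,\cdot)\in H^s(\mathbb{R})$ decays at infinity, $u\equiv 0$ on the unbounded vacuum component. Consequently the outermost particle paths are stationary and $\supp_x\rho(t,\cdot)\subseteq B_{R_0}$ for all $t$, i.e.\ (\ref{cd:1}) and (\ref{cd:11}) hold with $\alpha=0$. Theorem \ref{co:2.1} then follows by combining Theorems \ref{th:3.1} and \ref{th:2.0000} exactly as in the multi-dimensional proof.

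The main obstacle, and the reason the list excludes Proposition \ref{pro:2.1}, Theorem \ref{th:2.1}, and the entropy-based corollaries, is the step that clears radiation off the support, namely $I\equiv\overline{B}(v)$ on $\supp_x\rho$ for $t>T_c$. In the multi-dimensional proof this used $|y-y_0|=c|\Omega|t=ct>2R_0$, whereas in one dimension $|y-y_0|=c|\omega|t$, so a photon with $\omega$ near $0$ travels at speed $c|\omega|\approx 0$ and never leaves the support; the radiation source in the entropy equation cannot be made to vanish there, and the monotonicity of $S$ along particle paths is lost. The care needed is therefore the bookkeeping to confirm, for each claimed result, that the proof routes entirely through the directional condition and the energy/second-moment functionals rather than through this clearing argument; once that is checked, every claimed conclusion follows.
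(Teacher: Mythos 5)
Your proposal is correct and takes essentially the approach the paper intends: the paper omits the proof of Theorem \ref{th:2.1w} with the remark that it is ``similar to the multi-dimensional case,'' and your item-by-item transcription (1D photon paths with $y_0=y-c\omega t$, the bound $|x\omega|\le|x|$ for Lemma \ref{lemma:2.5}, the sign of the radiation momentum source under (\ref{bbbb}), the unchanged energy monotonicity, and the simplified vacuum argument $\mu\partial_{xx}u=0$ for the 1D version of Theorem \ref{th:3.1}) is exactly the verification being omitted, and it is sound.

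One correction to your closing discussion, which does not affect the proof of the listed results: the failure of the clearing argument (photons with $\omega\approx 0$ never leaving the support) explains only why the entropy-based results, Theorems \ref{co:2.9} and \ref{co:2.2}, are excluded; Proposition \ref{pro:2.1} and Theorem \ref{th:2.1} never use that step, and if the clearing argument were the only obstruction they would carry over. What actually breaks for them is the identity $\mathrm{tr}\,(\widetilde{P}_r)=\widetilde{E}_r$, i.e. $\sum_i\Omega_i^2=|\Omega|^2=1$, used to pass from (\ref{eq:3.1}) to (\ref{cd:4}): in one dimension $\omega^2\le 1$ only, so one is left with an extra nonnegative term proportional to $(t+\kappa)\int(\widetilde{E}_r-\widetilde{P}_r)\,\mathrm{d}x$ that cannot be absorbed into $\frac{C}{t+\kappa}I_r(t)$, destroying the polynomial growth bound on $I_r(t)$. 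This is the point of the paper's Remark \ref{r55} about $|\omega|\le 1$ versus $|\Omega|=1$.
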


\begin{theorem}[\textbf{One-dimensional isentropic flow}]\label{th:2.1ww} The
conclusions obtained in  Theorems \ref{th:2.0000}-\ref{th:2.20}, Theorems \ref{th:3.1} and \ref{co:2.1} are all true for the corresponding  Cauchy problem for one-dimensional isentropic flow.
\end{theorem}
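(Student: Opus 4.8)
The plan is to regard the one-dimensional isentropic problem as the $d=1$ specialization of the degenerate-viscosity model of Section 4, i.e. the isentropic version of (\ref{eq:1.qq}) with $\mu(\rho)=n_1\rho^\delta$, $\lambda(\rho)=n_2\rho^\delta$, $2\mu(\rho)+\lambda(\rho)=\rho^\delta$, supplemented by the one-dimensional directional condition $I_0\equiv\overline{B}(v)$ for $x\omega\le 0$ (the analogue of (\ref{bbbb})). First I would record the one-dimensional forms of the structural lemmas. Integrating the transport equation along the photon path $\frac{d}{\mathrm{d}t}y=c\omega$ with $|\omega|\le 1$, the representation (\ref{eq:**}) gives $I\ge\overline{B}(v)$ everywhere, $I\equiv\overline{B}(v)$ for $|x|\ge R_0+ct$, and, crucially, $I\equiv\overline{B}(v)$ on $\{x\omega\le 0\}$ for all $t$, exactly as in Lemma \ref{lemma:2.1}; the material-energy monotonicity $\varepsilon(t)\ge\varepsilon(0)$ of Lemma \ref{lemma:2.4} depends only on $\partial_tE_r+\nabla\cdot F_r=\int_0^\infty\!\int_{-1}^{1}-K_a(I-\overline{B}(v))\,d\omega\,dv\le 0$ and so transcribes verbatim. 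For the invariance of the support (the analogue of Theorem \ref{th:3.1}) I would invoke the reformulation $w=\rho^{(\delta-1)/2}$ of Lemma \ref{lemma:3.1}: in the vacuum region $K_a=o(\rho)=0$ and the momentum equation collapses to $\partial_tu+u\partial_xu=0$, so $u$ is transported along particle paths and vanishes there because $\supp u_0\subseteq B_{R_0}$; hence $\supp_x\rho(t,\cdot)=\supp_xu(t,\cdot)\subseteq B_{R_0}$ for all $t$, and (\ref{cd:1}), (\ref{cd:11}) hold with $\alpha=0$.

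With these in hand, all of the blow-up conclusions (the analogues of Theorems \ref{th:2.0000}, \ref{th:2.20} and \ref{co:2.1}) I would obtain through the functional $I_r(t)$ of (\ref{def}) with $d=1$, following the computation of Theorem \ref{th:2.3}, so that $\frac{\mathrm d}{\mathrm{d}t}I_r(t)$ equals the pressure term $\frac{2}{\gamma-1}(2-(\gamma-1))(t+\kappa)\int p_m\,dx$ plus $Q_r(t)$ and the two viscous contributions $J_1,J_2$ of (\ref{eq:5.6}). The degenerate-viscosity terms are handled precisely as in (\ref{eq:3.511})--(\ref{k9}): $J_1=2(t+\kappa)\int\rho^\delta\partial_xu\,dx$, while (\ref{eq:5.10}) gives $J_2\le -2(t+\kappa)^2\int\rho^\delta(\partial_xu)^2\,dx+2(t+\kappa)^2\int_0^\infty\!\int_{-1}^{1}(-1+\tfrac{u\omega}{c})K_a(I-\overline{B}(v))\,d\omega\,dv\,dx$, and a Cauchy--Young combination absorbs $J_1$ into the dissipation, leaving at most $\int\rho^\delta\,dx\le\frac{\delta}{\gamma}\int\rho^\gamma\,dx+\frac{\gamma-\delta}{\gamma}|B_{R_0}|$. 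The radiation pieces are where the argument must depart from Section 4, since photons no longer escape the support: because $\kappa\ge R_0/c$ and $|x\omega|\le R_0$ on $B_{R_0}$, one has $2c(t+\kappa)-2x\omega\ge 0$, whence $Q_r(t)\ge 0$ as in Lemma \ref{lemma:2.5} and, since every summand of $I_r$ is nonnegative, $Q_r(t)\le\frac{1}{t+\kappa}I_r(t)$; the radiation energy source in $J_2$ is nonpositive under the physical subluminal bound $|u|\le c$ (so that $-1+\frac{u\omega}{c}\le 0$) and may be discarded. Keeping the nonnegative $Q_r(t)$ (rather than setting it to zero) only shifts the coefficient in the resulting differential inequality by a bounded amount, so Lemma \ref{lemma:3.3} still yields sub-quadratic growth of $I_r(t)$, which contradicts the quadratic lower bound $I_r(t)\ge\frac{2(t+\kappa)^2}{\gamma-1}|B_{R_0}|^{1-\gamma}m(0)^\gamma$ of (\ref{ode:3.21}); the same sign information makes the moment functionals $M(t)$, $F(t)$, $\widetilde{M}(t)$ inherit the inequalities underlying Theorems \ref{th:2.0000} and \ref{th:2.20}, with their degenerate viscous terms controlled by the identical Cauchy--Young step.

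The main obstacle is exactly the one flagged in Remark \ref{r55}: because $\omega\in[-1,1]$ rather than $|\Omega|=1$, photons with $\omega$ near $0$ never leave $B_{R_0}$, so one cannot claim $I\equiv\overline{B}(v)$ on the support for large $t$, and the entropy-increase mechanism underlying Proposition \ref{pro:2.1} and Theorem \ref{th:2.1} is unavailable. The entire point of the proof is to bypass that route and draw every blow-up conclusion from the total-material-energy monotonicity (Lemma \ref{lemma:2.4}) together with the directional condition $I_0\equiv\overline{B}(v)$ for $x\omega\le 0$, which is what renders the radiation momentum source $\frac{1}{c}\int_0^\infty\!\int_{-1}^{1}K_a(I-\overline{B}(v))x\omega\,d\omega\,dv\ge 0$ (its integrand being supported where $x\omega>0$) and the radiation energy source nonpositive, with no appeal to photon escape. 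The delicate points to verify are therefore precisely that these sign conditions survive the $|\omega|\le 1$ geometry and that the degenerate viscous terms combine as in Section 4; once they are in place, the one-dimensional statements follow by the same ODE arguments as their multi-dimensional counterparts, giving $T<+\infty$.
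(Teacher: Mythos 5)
Your overall route is the natural reading of the paper's (omitted) argument: take the one-dimensional isentropic problem to be the $d=1$ case of the degenerate-viscosity system (\ref{eq:aaa2}), get support invariance from the $w=\rho^{\frac{\delta-1}{2}}$ reformulation of Lemma \ref{lemma:3.1}, and replace the photon-escape mechanism (\ref{key2}) — which genuinely fails when $|\omega|\le 1$ — by the directional condition (\ref{bbbb}), which does give the sign of the radiation momentum source and $Q_r(t)\ge 0$ in one dimension. That part, i.e. the analogue of Theorem \ref{th:3.1} and the sign facts from Lemma \ref{lemma:2.1}, is sound. The paper itself gives no details beyond Remark \ref{r55}, so the comparison is against the intended "similar to multi-dimensional" argument; and it is exactly where that analogy is most delicate that your proposal has genuine gaps.

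The gaps are all in the radiation–fluid energy coupling, which is what distinguishes the isentropic system from the non-isentropic one. First, Lemma \ref{lemma:2.4} does \emph{not} transcribe verbatim. Its proof uses the energy equation of (\ref{eq:1.2}), in which every fluid term ($\nabla\cdot(u\mathbb{T})$, $\nabla\cdot(k(\theta)\nabla\theta)$, the convective flux) is a divergence, so that $\frac{d}{\text{d}t}\varepsilon(t)$ reduces to the nonnegative absorption integral. The isentropic system has no energy equation; the derived balance is (\ref{eq:5.8}), and integrating it gives $\frac{d}{\text{d}t}\varepsilon(t)=\int u\cdot(\nabla\cdot\mathbb{T})\,\text{d}x+\frac{1}{c}\int_{\mathbb{R}}\int_0^\infty\int_{-1}^{1}K_a\cdot(I-\overline{B}(v))\,u\,\omega\,\text{d}\omega\,\text{d}v\,\text{d}x$, where the first term is $\le-\int\rho^\delta(\partial_xu)^2\text{d}x\le 0$ by (\ref{eq:5.10}) and the second has no sign (under (\ref{bbbb}) it is supported on $\{x\omega>0\}$, where $u\omega$ can be of either sign). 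So $\varepsilon(t)\ge\varepsilon(0)$ is unavailable, and with it collapses your claim that $M(t)$, $F(t)$, $\widetilde{M}(t)$ "inherit" the inequalities of Theorems \ref{th:2.0000} and \ref{th:2.20}: those proofs need precisely this monotonicity at (\ref{eq:3.1533}), (\ref{eq:3.15}) and below (\ref{eq:4.6}). Second, you dispose of the radiation source in $J_2$ by a "subluminal bound" $|u|\le c$; nothing in the hypotheses or the model provides it — in the proof of Theorem \ref{th:2.3} this term is never signed but annihilated by (\ref{key2}), i.e. by photon escape, which is exactly what is lost in one dimension. Third, the "identical Cauchy--Young step" is special to the functional $I_r(t)$: in (\ref{eq:5.13})--(\ref{11111}) the cross term $\int\rho^\delta\partial_xu\,\text{d}x$ is absorbed into the dissipation $-2(t+\kappa)^2\int\rho^\delta(\partial_xu)^2\text{d}x$ supplied by $J_2$, whereas in $\frac{d}{\text{d}t}F(t)$ and $\frac{d^2}{\text{d}t^2}\widetilde{M}(t)$ the same viscous term appears with no dissipation to absorb it. As written, then, your proposal establishes the support-invariance conclusion but not the blow-up conclusions; closing them requires a genuine substitute for the energy monotonicity and for the unsigned radiation work term, not an appeal to $|u|\le c$.
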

We omit the proofs here, since they are similar to the multi-dimensional case.

\bigskip

\end{document}